\newif\ifIEEE
\newif\ifPAGELIMIT
    \newcommand{\bibauthor}[1]{#1}
    \newcommand{\bibpaper}[1]{``#1''}
    \newcommand{\Footnotetext}[2]
    {
        \begin{figure}[!b]
        \footnotesize\vspace{-3ex}\hrulefill\hfill
        \makebox[0em]{}\hfill\makebox[0em]{}%
                                          \par${}^{#1}$ #2\vspace{-.6ex}
        \end{figure}
        \addtocounter{figure}{0}
     }
    \newcommand{\bibauthor}[1]{\textsc{#1}}
    \newcommand{\bibpaper}[1]{\textsl{#1}}
    \newenvironment{IEEEkeywords}{\begin{small}%
                                  \textbf{Index Terms} ---}{\end{small}}
\newcommand{\bibbook}[1]{\textit{#1}}
\newcommand{\bibperiodical}[1]{\textit{#1}}
\newtheorem{theorem}{\indent Theorem}
\newtheorem{lemma}[theorem]{\indent Lemma}
\newtheorem{corollary}[theorem]{\indent Corollary}
\theoremstyle{remark}
\newtheorem{remark}{\indent Remark}
\theoremstyle{definition}
\newtheorem{example}{\indent Example}
\renewcommand{\mathbf}[1]{{\bm{#1}}}     
\newcommand{\encoder}{{\mathcal{E}}}
\newcommand{\bldw}{{\mathbf{w}}}
\newcommand{\bldx}{{\mathbf{x}}}
\newcommand{\bldy}{{\mathbf{y}}}
\newcommand{\bldz}{{\mathbf{z}}}
\newcommand{\bldeta}{{\mathbf{\eta}}}
\newcommand{\bldomega}{{\mathbf{\omega}}}
\newcommand{\bldmu}{{\mathbf{\mu}}}
\newcommand{\bldxi}{{\mathbf{\xi}}}
\newcommand{\FF}{{\mathcal{F}}}
\newcommand{\XX}{{\mathcal{X}}}
\newcommand{\ZZ}{{\mathcal{Z}}}
\newcommand{\capacity}{{\mathsf{cap}}}
\newcommand{\parity}{{\mathsf{b}}}
\newcommand{\List}{{\mathcal{L}}}
\newcommand{\Labels}{{\mathcal{B}}}
\newcommand{\Prefix}{{\mathcal{P}}}
\newcommand{\Kraft}{{\mathsf{K}}}
\newcommand{\length}{{\mathsf{r}}}
\newlength{\figunit}
    \newcommand{\figfont}{\scriptsize}
    \newcommand{\ifandonlyif}{iff}
    \newcommand{\withoutlossofgenerality}{w.l.o.g.}
    \newcommand{\withrespectto}{w.r.t.}
    \newcommand{\Withrespectto}{W.r.t.}
    \newcommand{\respectively}{resp.}
    \newcommand{\Section}{Section}
    \newcommand{\Sections}{Sections}
    \newcommand{\Figure}{Figure}
\newcommand{\figfont}{\normalsize}
    \newcommand{\ifandonlyif}{if and only if}
    \newcommand{\withoutlossofgenerality}{without loss of generality}
    \newcommand{\withrespectto}{with respect to}
    \newcommand{\Withrespectto}{With respect to}
    \newcommand{\respectively}{respectively}
    \newcommand{\Section}{Section}
    \newcommand{\Sections}{Sections}
    \newcommand{\Figure}{Figure}
\newcommand{\Title}{Variable-Length Constrained Coding
and Kraft Conditions: The Parity-Preserving Case}
\newcommand{\Namea}{Ron M. Roth}
\newcommand{\Nameb}{Paul H. Siegel}
\newcommand{\Addressa}{Computer Science Department}
\newcommand{\Addressatwo}{Technion, Haifa 320003, Israel}
\newcommand{\Addressb}{ECE Department and CMRR}
\newcommand{\Addressbtwo}{UC San Diego, La Jolla, CA 92023, USA}
\newcommand{\Emaila}{ronny@cs.technion.ac.il}
\newcommand{\Emailb}{psiegel@ucsd.edu}
\newcommand{\Grant}{This work was supported by 
                    Grants~2015816 and~2018048 from
                    the United-States--Israel Binational
                    Science Foundation (BSF),
                    by NSF Grant CCF-BSF-1619053,
                    and by Grants~1396/16 and~1713/20 from
                    the Israel Science Foundation.
                    \ifPAGELIMIT
                    \else
                    This work, under the title
                    ``On parity-preserving
                    variable-length constrained coding,''
                    was presented in part at
                    the IEEE Int'l Symposium on Information
                    Theory (ISIT), June 2020.
                    \fi
}
\newcommand{\Addressalt}{This work was done in part while R.M. Roth
            was visiting
\ifPAGELIMIT
            CMRR,
\else
            the Center for Memory and Recording Research (CMRR),
\fi
            UC San Diego}
\newcommand{\Thnxa}{\Namea\ is with the \Addressa, \Addressatwo.
                    \Addressalt.
                    Email: \Emaila}
\newcommand{\Thnxb}{\Nameb\ is with the \Addressb, \Addressbtwo.
                    Email: \Emailb}
\begin{document}
\ifIEEE
    \title{\Title}
       \ifPAGELIMIT
           \author{\IEEEauthorblockN{\Namea\vspace{-1ex}}\\
                   \IEEEauthorblockA{\Addressa\\ 
                                     \Addressatwo\\
                                     \Emaila\vspace{-2ex}}
                   \and
                   \IEEEauthorblockN{\Nameb\vspace{-1ex}}\\
                   \IEEEauthorblockA{\Addressb\\ 
                                     \Addressbtwo\\
                                     \Emailb\vspace{-2ex}}
           }
       \else
           \author{\Namea
                   \quad\quad
                   \Nameb
                   \thanks{\Grant}
                   \thanks{\Thnxa}
                   \thanks{\Thnxb}}
       \fi
\else
    \title{\textbf{\Title}\thanks{\Grant}}
    \author{\textsc{\Namea}\thanks{\Thnxa}
    \and
           \textsc{\Nameb}\thanks{\Thnxb}
    }
\fi
\maketitle


\begin{abstract}
Previous work by the authors on parity-preserving fixed-length
constrained encoders is extended to the variable-length case.
Parity-preserving variable-length encoders are formally defined,
\ifPAGELIMIT
    and
\else
and, to this end, Kraft conditions are developed for
the parity-preserving variable-length setting. Then,
\fi
a necessary and sufficient condition is presented for
the existence of
deterministic parity-preserving variable-length encoders for
a given constraint. Examples are provided that show
that there are coding ratios
where parity-preserving variable-length encoders exist,
while fixed-length encoders do not.
\end{abstract}

\ifPAGELIMIT
\else
\begin{IEEEkeywords}
Constrained codes,
Kraft inequality,
Parity-preserving encoders,
Variable-length encoders.
\end{IEEEkeywords}
\fi

\ifPAGELIMIT
    \Footnotetext{\quad}{\Addressalt. \Grant}
\fi

\section{Introduction}
\label{sec:introduction}

In mass storage platforms, such as magnetic and optical disks,
user data is mapped (encoded) to binary sequences that satisfy certain
combinatorial constraints. One common example of such a constraint
is the $(d,k)$-runlength-limited (RLL) constraint, where
the runs of $0$'s in a sequence are limited to have lengths
at least~$d$ (to avoid inter-symbol interference)
and at most~$k$ (to allow clock resynchronization)~\cite{Immink}.
In virtually all applications, the encoder takes the form
of a finite state machine, where user data is broken into binary
blocks, and each block is mapped, in a state-dependent manner,
into a binary \emph{codeword}, so that the concatenation of
the generated codewords satisfies the RLL constraint. 
In the case of \emph{fixed-length encoders},
the input blocks all have the same length~$p$,
and the codewords all have the same length~$q$,
for prescribed positive integers~$p$ and~$q$.
The coding rate is then $p:q$.

In the mentioned storage applications, there is also a need
to control the DC content of the recorded modulated sequence.
One commonly used strategy to achieve DC control is allowing
input blocks to be mapped to more than one codeword,
and the encoder then selects the codeword that yields
a better DC suppression~\cite[p.~29]{MRS}. 
In the Blu-ray standard, this strategy is applied 
through the use of \emph{parity-preserving encoders}:
such encoders map each input block to a codeword that has
the same parity (of the number of $1$s), and DC control is achieved by
reserving one bit in the input block to be set to a value
that minimizes the DC contents~\cite[\S 11.4.3]{Immink},
\cite{KI},
\ifPAGELIMIT
    \cite{MSIWUYN}--\cite{NY},
\else
\cite{MSIWUYN},
\cite{NKHSEKDW},
\cite{NY},
\fi
\cite{WIXC}.

Most constructions of parity-preserving encoders that were proposed
for commercial use were obtained by ad-hoc methods. 
In~\cite{RS1}, we initiated a study 
of bi-modal encoders
(which include parity-preserving encoders as a special case),
focusing on fixed-length encoders; we will summarize
the concepts that pertain to the fixed-length case,
along with the main results of~\cite{RS1},
as part of the background that we provide
in \Section~\ref{sec:fixedlength} below.
On the other hand, the existing ad-hoc
parity-preserving constructions typically have \emph{variable length},
where the length~$p$ of the input block and the length~$q$ 
of the respective codeword may depend on the encoder state,
as well as on the input sequence (the \emph{coding ratio}, $p/q$,
nevertheless, is still fixed).

In this work, we present several results on
parity-preserving variable-length encoders
(in short, parity-preserving VLEs), focusing on deterministic encoders.
To put our results into
perspective, we mention that even in the ordinary setting 
(where parity preservation is not required), the known tools for
analyzing and synthesizing VLEs are much less
developed, compared to the fixed-length case.
A summary of relevant (and mostly known)
results on (ordinary) VLEs is provided in \Section~\ref{sec:vlg-vle}.
In
\ifPAGELIMIT
    \Section~\ref{sec:PPVLE}
\else
\Sections~\ref{sec:Kraft}--\ref{sec:PPVLE}
\fi
we turn to the parity-preserving setting.
Much of the discussion in
\ifPAGELIMIT
    that section
\else
    those sections
\fi
deals in fact with
the definition of parity-preserving VLEs,
as it entails a (nontrivial) extension of the known Kraft conditions
on variable-length coding to the parity-preserving case.
\ifPAGELIMIT
    Our main result
\else
This extension, which may be of independent interest,
is developed in \Section~\ref{sec:Kraft},
followed in \Section~\ref{sec:PPVLE} by our main result, which
\fi
is a necessary and sufficient condition for
the existence of parity-preserving VLEs that are deterministic.
We present several examples that demonstrate
the advantages that parity-preserving VLEs may have over
their fixed-length counterparts, in terms of the attainable
coding ratios and encoding--decoding complexity.

\section{Fixed-length graphs and encoders}
\label{sec:fixedlength}

In this section, we extract from~\cite[Chapters~2--5]{MRS}
several basic definitions and properties pertaining to ordinary
(namely, fixed-length) graphs and fixed-length encoders.
We then quote the main result of~\cite{RS1},
which applies, in particular,
to parity-preserving fixed-length encoders.

\ifPAGELIMIT
\else
\subsection{Graphs and constraints}
\label{sec:fixedlengthgraphs}
\fi

A (finite labeled directed ordinary) graph is a graph $G = (V,E,L)$
where~$V$ is a nonempty finite set of states,
$E$ is a finite set of edges, and 
$L : E \rightarrow \Sigma$ is an edge labeling.
\ifPAGELIMIT
\else
We say that a (finite) word~$\bldw$ over~$\Sigma$ is
\emph{generated} by a path~$\pi$ in~$G$
if~$\bldw$ is obtained by reading the labels along~$\pi$;
the length of~$\bldw$ then equals the length of~$\pi$
(being the number of edges along~$\pi$).
\fi
A graph~$G$ is \emph{deterministic} if no two outgoing edges from
the same state in~$G$ have the same
\ifPAGELIMIT
    labels, and it is \emph{lossless} if
\else
label.
A deterministic graph is a special case of a graph with
\emph{finite anticipation}: the anticipation of a graph~$G$
is the smallest integer $a \ge 0$
(if any) such that any two paths with the same initial state
that generate the same word of length~$a{+}1$ must have the same
initial edge (a deterministic graph corresponds to the case
where the anticipation is~$0$). Having finite anticipation, in turn,
implies (generally) that the graph is \emph{lossless}:
\fi
no two paths with the same initial state and
the same terminal state generate the same word. 

A \emph{constraint} $S$ over an alphabet~$\Sigma$ is
the set of all words that are generated by
paths in a graph~$G$; we then say that~$G$ \emph{presents} $S$
and write $S = S(G)$. Every constraint~$S$ can be presented
by a deterministic graph.
The \emph{capacity} of~$S$ is defined by
$\capacity(S) =
\lim_{\ell \rightarrow \infty} (1/\ell) \log_2 |S \cap \Sigma^\ell|
\ifPAGELIMIT
$.
\else
$ (where, by sub-additivity, the limit indeed exists).
\fi
It is known that
$\capacity(S) = \log_2 \lambda(A_G)$ where~$\lambda(A_G)$ denotes
the spectral radius (Perron eigenvalue)
of the adjacency matrix~$A_G$ of any lossless
(in particular, deterministic) presentation~$G$ of~$S$.

A graph~$G$ is \emph{irreducible} if it is
strongly connected,
namely, for any two states~$u$ and~$v$ in~$G$
there is a path from~$u$ to~$v$.
A constraint~$S$ is irreducible
if it can be presented by a deterministic irreducible graph.
For irreducible constraints,
there is a unique deterministic graph presentation
that has the smallest number of states; such a presentation is called
the \emph{Shannon cover} of~$S$.

\begin{example}
\label{ex:twostates}
Let~$S$ be the constraint over the alphabet
$\Sigma = \{ a, b, c, d \}$ which is presented by
the graph~$G$ in \Figure~\ref{fig:twostates}.
The graph~$G$
\ifPAGELIMIT
    is the Shannon cover of~$S$, and
    $\capacity(S) = \log_2 \lambda(A_G) = \log_2 2 = 1$.\qed
\else
is deterministic and irreducible
(in fact, it is the Shannon cover of~$S$).
The adjacency matrix of~$A_G$ is given by
\[
A_G =
\left(
\begin{array}{cc}
1 & 2 \\
1 & 0
\end{array}
\right) \; ,
\]
and $\lambda(A_G) = 2$, with a respective eigenvector
$\bldx = (2 \;\, 1)^\top$.
Hence, $\capacity(S) = \log_2 \lambda(A_G) = \log_2 2 = 1$.\qed
\fi
\begin{figure}[hbt]
\begin{center}
\thicklines
\setlength{\unitlength}{\figunit}
\figfont
\ifPAGELIMIT
    \begin{picture}(105,25)(-35,-05)
\else
\begin{picture}(105,45)(-35,-15)
\fi
    \multiput(000,000)(060,000){2}{\circle{20}}
    \qbezier(051,4.5)(030,14.5)(009,4.5)
    \put(051,4.5){\vector(2,-1){0}}
    \qbezier(054,8.3)(030,28.8)(006,8.3)
    \put(054,8.3){\vector(5,-4){0}}
    \qbezier(051,-4.5)(030,-14.5)(009,-4.5)
    \put(009,-4.5){\vector(-2,1){0}}

    \qbezier(-9.26,004)(-30,015)(-30,000)
    \qbezier(-9.26,-04)(-30,-15)(-30,000)
    \put(-9.26,004){\vector(2,-1){0}}

    \put(000,000){\makebox(0,0){$\alpha$}}
    \put(060,-.5){\makebox(0,0){$\beta$}}

    \put(-32,000){\makebox(0,0)[r]{$a$}}
    \put(030,020.8){\makebox(0,0)[b]{$b$}}
    \put(030,011.0){\makebox(0,0)[b]{$c$}}
    \put(030,-11.5){\makebox(0,0)[t]{$d$}}
\end{picture}
\thinlines
\setlength{\unitlength}{1pt}
\end{center}
\caption{Graph~$G$ for Example~\protect\ref{ex:twostates}.}
\label{fig:twostates}
\end{figure}
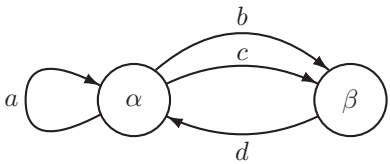
\end{example}

The power~$G^t$ of a graph $G = (V,E,L)$ is
the graph with the same set of states~$V$
and edges that are the paths of length~$t$ in~$G$; the label
of an edge in~$G^t$ is the length-$t$ word generated by the path.
For $S = S(G)$ the power~$S^t$ is defined as~$S(G^t)$.

\ifPAGELIMIT
\else
\subsection{Fixed-length encoders}
\label{sec:fixedlengthencoders}
\fi

Given a constraint~$S$ and a positive
integer~$n$,
a \emph{(fixed-length) $(S,n)$-encoder} is a lossless graph~$\encoder$
such that $S(\encoder) \subseteq S$ and
each state has out-degree~$n$.
An $(S,n)$-encoder exists
\ifandonlyif\ $\log_2 n \le \capacity(S)$.
In a \emph{tagged} $(S,n)$-encoder,
each edge is assigned an input tag from
a finite alphabet~$\Upsilon$ of size~$n$,
such that edges outgoing from the same state have distinct tags.
\ifPAGELIMIT
\else
A tagged encoder is
\emph{$(m,a)$-sliding-block decodable} if all paths that generate
a given word of length $m{+}a{+}1$ share the same tag
on their $(m{+}1)$st edges.
\fi

A \emph{(tagged) rate $p:q$ encoder} for a constraint~$S$ is
a tagged $(S^q,2^p)$-encoder (the tag alphabet~$\Upsilon$
is then assumed to be $\{ 0, 1 \}^p$); such an encoder exists
\ifandonlyif\ $p/q \le \capacity(S)$.

\ifPAGELIMIT
\else
Given a square nonnegative integer matrix~$A$ and a positive
integer~$n$, an \emph{$(A,n)$-approximate eigenvector}
is a nonnegative nonzero integer vector~$\bldx$ that satisfies
the inequality $A \bldx \ge n \bldx$ componentwise.
The set of all $(A,n)$-approximate eigenvectors will be denoted by
$\XX(A,n)$.
Given a constraint~$S$ presented by a deterministic graph~$G$ and
a positive integer~$n$, the state-splitting algorithm provides
a method for transforming $G$, through
an $(A_G,n)$-approximate eigenvector, into
an $(S,n)$-encoder with finite anticipation.

\begin{example}
\label{ex:fixedlength}
Letting~$G$ and~$S$ be as in Example~\ref{ex:twostates},
the graph in \Figure~\ref{fig:fixedlength} is 
a tagged $(S,2)$-encoder (or a rate $1:1$ encoder for~$S$),
where each edge is assigned a tag from $\{ 0, 1 \}$
(the notation ``$s/w$'' next to an edge 
specifies the tag~$s$ and the label~$w$ of the edge).
The encoder is obtained by splitting state~$\alpha$
in~$G$ into two states: state~$\alpha'$ inherits
the outgoing edges labeled by~$b$ and~$c$,
and state~$\alpha''$ inherits the self-loop labeled~$a$
(this splitting is implied by
the $(A_G,2)$-approximate eigenvector $\bldx = (2 \;\, 1)^\top$,
which is also a true eigenvector of~$A_G$,
where state~$\alpha$ in~$G$ is
assigned a weight of~$2$, and state~$\beta$ has weight~$1$).
The encoder is not deterministic, but it is
$(0,1)$-sliding-block decodable (and hence has anticipation~$1$):
a label of an edge uniquely determines the initial state of the edge
and, so, any word $\bldw \in S$ of length~$2$ uniquely determines
the first edge of any path that generates~$\bldw$.\qed
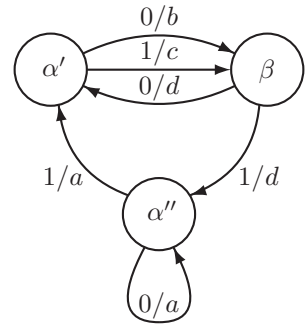
\begin{figure}[hbt]
\begin{center}
\thicklines
\setlength{\unitlength}{\figunit}
\figfont
\begin{picture}(080,085)(-10,-70)
    \multiput(000,000)(060,000){2}{\circle{20}}
    \put(030,-40){\circle{20}}
    \qbezier(051,4.5)(030,14.5)(009,4.5)
    \put(051,4.5){\vector(2,-1){0}}
    \qbezier(051,-4.5)(030,-14.5)(009,-4.5)
    \put(009,-4.5){\vector(-2,1){0}}

    \put(010,000){\vector(1,0){40}}

    \qbezier(38.6,-34.8)(56.2,-28.2)(57.8,-9.6)
    \put(38.6,-34.8){\vector(-3,-1){0}}

    \qbezier(21.4,-34.8)(3.8,-28.2)(2.2,-9.6)
    \put(02.2,-9.6){\vector(-1,6){0}}

    \qbezier(034,-49.26)(045,-70)(030,-70)
    \qbezier(026,-49.26)(015,-70)(030,-70)
    \put(034,-49.26){\vector(-1,2){0}}

    \put(001,001){\makebox(0,0){$\alpha'$}}
    \put(060,-.5){\makebox(0,0){$\beta$}}
    \put(031,-39){\makebox(0,0){$\alpha''$}}

    \put(030,010.5){\makebox(0,0)[b]{$0/b$}}
    \put(030,001.0){\makebox(0,0)[b]{$1/c$}}
    \put(030,-08.5){\makebox(0,0)[b]{$0/d$}}
    \put(052,-30){\makebox(0,0)[l]{$1/d$}}
    \put(009,-30){\makebox(0,0)[r]{$1/a$}}
    \put(030,-69){\makebox(0,0)[b]{$0/a$}}
\end{picture}
\thinlines
\setlength{\unitlength}{1pt}
\end{center}
\caption{Tagged fixed-length $(S,2)$-encoder
for Example~\protect\ref{fig:fixedlength}.}
\label{fig:fixedlength}
\end{figure}
\end{example}

\subsection{Parity-preserving fixed-length encoders}
\label{sec:fixedlengthparitypreservingencoders}
\fi

Let~$\Sigma$ be an alphabet
and fix a partition $\{ \Sigma_0, \Sigma_1 \}$ of $\Sigma$.
The symbols in~$\Sigma_0$ (\respectively, $\Sigma_1$)
will be referred to as
the \emph{even} (\respectively, \emph{odd}) symbols of $\Sigma$.
Extending the definition of parity to words,
we say that a word $\bldw$ over~$\Sigma$ is even (\respectively, odd)
if~$\bldw$ contains an even (\respectively, odd)
number of symbols from~$\Sigma_1$.
The set of even (\respectively, odd) words in~$\Sigma^t$ will be 
denoted by~$(\Sigma^t)_0$ (\respectively, $(\Sigma^t)_1$).
In the practical scenario where $\Sigma = \{ 0, 1 \}^p$,
with $\Sigma_0$ and $\Sigma_1$ consisting of
the binary $p$-tuples with even and odd parity, \respectively\
(according to the common meaning of parity),
a parity of a word in~$\Sigma^t$, too,
coincides with the ordinary meaning of this term.

Given a graph~$H$ with labeling in~$\Sigma$,
for $\parity \in \{ 0, 1 \}$,
we denote by~$H_\parity$ the subgraph of~$H$ containing only the edges
with labels in~$\Sigma_\parity$.

\ifPAGELIMIT
\else
\begin{example}
\label{ex:partitioning}
Let $\Sigma = \{ a, b, c, d \}$ and assume the partition
$\{ \Sigma_0, \Sigma_1 \}$, where
\begin{equation}
\label{eq:partitioning}
\Sigma_0 = \{ a, b \} \quad \textrm{and} \quad \Sigma_1 = \{ c, d \}
\; .
\end{equation}
For the graph~$G$ in \Figure~\ref{fig:twostates},
the subgraphs~$G_0$ and~$G_1$ \withrespectto\ this partition
are shown in Figures~\ref{fig:partitioning1}
and~\ref{fig:partitioning2}.\qed
\begin{figure}[hbt]
\begin{center}
\thicklines
\setlength{\unitlength}{\figunit}
\figfont
\begin{picture}(130,25)(-35,-10)
    \multiput(000,000)(060,000){2}{\circle{20}}
    \qbezier(051,4.5)(030,14.5)(009,4.5)
    \put(051,4.5){\vector(2,-1){0}}

    \qbezier(-9.26,004)(-30,015)(-30,000)
    \qbezier(-9.26,-04)(-30,-15)(-30,000)
    \put(-9.26,004){\vector(2,-1){0}}

    \put(000,000){\makebox(0,0){$\alpha$}}
    \put(060,-.5){\makebox(0,0){$\beta$}}
    
    \put(-32,000){\makebox(0,0)[r]{$a$}}
    \put(030,012.0){\makebox(0,0)[b]{$b$}}
\end{picture}
\thinlines
\setlength{\unitlength}{1pt}
\end{center}
\caption{Subgraph~$G_0$ for Example~\protect\ref{ex:partitioning}.}
\label{fig:partitioning1}
\end{figure}
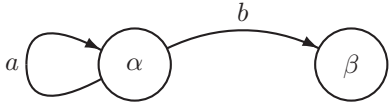
\begin{figure}[hbt]
\begin{center}
\thicklines
\setlength{\unitlength}{\figunit}
\figfont
\begin{picture}(130,25)(-35,-10)
    \multiput(000,000)(060,000){2}{\circle{20}}
    \qbezier(051,4.5)(030,14.5)(009,4.5)
    \put(051,4.5){\vector(2,-1){0}}
    \qbezier(051,-4.5)(030,-14.5)(009,-4.5)
    \put(009,-4.5){\vector(-2,1){0}}

    \put(000,000){\makebox(0,0){$\alpha$}}
    \put(060,-.5){\makebox(0,0){$\beta$}}
    
    \put(030,012.0){\makebox(0,0)[b]{$c$}}
    \put(030,-6.5){\makebox(0,0)[b]{$d$}}
\end{picture}
\thinlines
\setlength{\unitlength}{1pt}
\end{center}
\caption{Subgraph~$G_1$ for Example~\protect\ref{ex:partitioning}.}
\label{fig:partitioning2}
\end{figure}
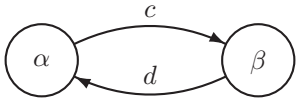
\end{example}
\fi

Let $S$ be a constraint over an alphabet~$\Sigma$, fix
a partition $\{ \Sigma_0, \Sigma_1 \}$ of $\Sigma$,
and let~$n_0$ and~$n_1$ be positive
integers. A \emph{(fixed-length) $(S,n_0,n_1)$-encoder} $\encoder$ is
an $(S,n_0{+}n_1)$-encoder such that for each~$\parity \in \{ 0, 1 \}$,
the subgraph~$\encoder_\parity$ is an~$(S,n_\parity)$-encoder.
A \emph{rate~$p:q$ parity-preserving (fixed-length) encoder} for~$S$ is
a tagged $(S^q,2^{p-1},2^{p-1})$-encoder in which
the tag (in $\{ 0, 1 \}^p$) that is assigned to each edge has
the same parity as the edge label (when seen as a word in $\Sigma^q$).
\ifPAGELIMIT
\else
Conversely, in any $(S^q,2^{p-1},2^{p-1})$-encoder we can assign
tags from $\{ 0, 1 \}^p$ to the edges so that the parities of the tags
and the labels match on each edge.

\begin{example}
\label{ex:main}
Letting~$\Sigma$ and~$S$ be as in Example~\ref{ex:twostates},
the $(S,2)$-encoder in \Figure~\ref{fig:fixedlength}
is \emph{not} an $(S,1,1)$-encoder 
\withrespectto\ the partition~(\ref{eq:partitioning}) of~$\Sigma$,
since both outgoing edges from state~$\alpha'$
(\respectively, state~$\alpha''$) have the same parity.
In fact, using Theorem~\ref{thm:main}(a) below,
it was shown in~\cite{RS1} that for the constraint~$S$
and for the partition~(\ref{eq:partitioning}),
there is no $(S^t,2^{t-1},2^{t-1})$-encoder
for any positive integer~$t$, namely,
a coding ratio of~$1$ cannot be achieved 
by any parity-preserving (fixed-length) encoder, for any~$t$.\qed
\end{example}
\fi

The next theorem follows from the results of~\cite{RS1}
(see Theorem~1, Corollary~5, and \S III-A therein).
\ifPAGELIMIT
    For a square nonnegative integer matrix~$A$ and
    a positive integer~$n$, denote by $\XX(A,n)$ the set of all
    nonnegative nonzero integer vectors~$\bldx$ that satisfy
    the inequality $A \bldx \ge n \bldx$ componentwise.
\fi

\begin{theorem}[\cite{RS1}]
\label{thm:main}
Let $S$ be an irreducible constraint, presented by
an irreducible deterministic graph $G$, and let~$n_0$ and~$n_1$ be
positive integers. Then the following holds.
\begin{list}{}{\settowidth{\labelwidth}{\textit{(a)}}%
               \settowidth{\leftmargin}{\textit{(a)}.....}}
\item[(a)]
There exists an $(S,n_0,n_1)$-encoder, if and only if
$\XX(A_{G_0},n_0) \cap \XX(A_{G_1},n_1) \ne \emptyset$.
\item[(b)]
There exists a deterministic $(S,n_0,n_1)$-encoder, if and only if
$\XX(A_{G_0},n_0) \cap \XX(A_{G_1},n_1)$
contains a $0\mbox{--}1$ vector.
\end{list}
\end{theorem}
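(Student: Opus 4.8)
The plan is to prove the four implications in (a) and (b) separately, exploiting the identity $A_G = A_{G_0} + A_{G_1}$ so that a single vector $\bldx$ simultaneously governs both parity classes. The two ``if'' directions are constructive and comparatively routine; the two ``only if'' directions amount to extracting a common approximate eigenvector from a given encoder, and I expect the lossless (possibly non-deterministic) case of part~(a) to be the main obstacle.

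For sufficiency I would argue as follows. Given a $0\mbox{--}1$ vector $\bldx\in\XX(A_{G_0},n_0)\cap\XX(A_{G_1},n_1)$, set $W=\{v: x_v=1\}$; read at a state $v\in W$, the two inequalities say precisely that $v$ has at least $n_0$ outgoing edges labeled in $\Sigma_0$ and at least $n_1$ labeled in $\Sigma_1$, all terminating in $W$. I would take the subgraph of $G$ induced on $W$ and prune each state to exactly $n_0$ even and $n_1$ odd out-edges; being a subgraph of the deterministic graph $G$, it is deterministic (hence lossless) and presents a subset of $S$, so it is a deterministic $(S,n_0,n_1)$-encoder, giving (b). Given instead an arbitrary common $\bldx$, I would run a parity-preserving version of the state-splitting algorithm. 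The key point is that the two inequalities decouple by parity: at every state the even out-edges carry $\bldx$-weight at least $n_0 x_v$ and the odd out-edges at least $n_1 x_v$, and both budgets can be honored simultaneously in each $\bldx$-consistent out-splitting. Iterating terminates in a graph in which every state has exactly $n_0$ even and $n_1$ odd out-edges with finite anticipation, i.e., an $(S,n_0,n_1)$-encoder, giving the ``if'' direction of (a).

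The necessity directions go through the fiber product. Passing to an irreducible sink component, I may assume the encoder $\encoder$ is irreducible, so each of its states has exactly $n_0$ even and $n_1$ odd out-edges; then the all-ones vector on $V(\encoder)$ is a common \emph{exact} eigenvector, $A_{\encoder_0}\mathbf{1}=n_0\mathbf{1}$ and $A_{\encoder_1}\mathbf{1}=n_1\mathbf{1}$, and the task is to transport this to $V(G)$. I would form the label-product $\encoder\times G$, restrict to an irreducible sink component $C$, and use the label-preserving projection $\pi\colon C\to G$: since $G$ is deterministic and $S(\encoder)\subseteq S$, every out-edge of $\encoder$ lifts uniquely, so in $C$ each state again carries $n_0$ even and $n_1$ odd out-edges. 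In the deterministic case the image $\pi(C)$ is a set $W$ of $G$-states to which the incidence argument of the previous paragraph applies verbatim, yielding the required $0\mbox{--}1$ vector and completing (b). In the general lossless case of (a) I would instead push a suitable weight on $C$ (e.g.\ derived from its Perron/stationary vector) down to $G$ by fiber-summation, aiming at a \emph{nonnegative real} $\bldx$ with $A_{G_\parity}\bldx\ge n_\parity\bldx$ for both $\parity\in\{0,1\}$.

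The hard part will be exactly this last step. Because $\encoder$ need not be deterministic, several out-edges of a state may share a label and collapse onto a single edge of $G$, so the naive fiber-count can overcount and violate the inequality; controlling these multiplicities is where losslessness of $\encoder$ must be invoked. Once a nonzero real common approximate eigenvector is secured, integrality is automatic: the set $\{\bldx\ge 0 : (A_{G_\parity}-n_\parity I)\bldx\ge 0,\ \parity\in\{0,1\}\}$ is a rational polyhedral cone, so if it contains a nonzero vector it contains a nonzero integer one, establishing $\XX(A_{G_0},n_0)\cap\XX(A_{G_1},n_1)\ne\emptyset$ and finishing the ``only if'' direction of (a).
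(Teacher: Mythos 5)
First, a point of reference: the paper does not prove Theorem~\ref{thm:main} at all---it imports it from~\cite{RS1} (Theorem~1, Corollary~5, and \S III-A there)---so your attempt has to stand on its own. Your ``if'' direction of part~(b) is correct and is the standard argument: the induced subgraph on $W = \{v : x_v = 1\}$, pruned to exactly $n_0$ even and $n_1$ odd out-edges per state, is a deterministic $(S,n_0,n_1)$-encoder. The other three implications have genuine gaps. In the ``if'' direction of part~(a), the sentence ``both budgets can be honored simultaneously in each $\bldx$-consistent out-splitting'' is precisely the technical core of the theorem, and it is false as stated: consider a state $u$ with $x_u = 2$, $n_0 = n_1 = 1$, whose out-edges are one even edge of weight~$2$ and one odd edge of weight~$2$. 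Both approximate-eigenvector inequalities hold at $u$, and $u$ admits an $\bldx$-consistent splitting for the combined degree $n_0 + n_1 = 2$ (one descendant per edge), but that splitting---and every other nontrivial partition of these two edges---gives a descendant with no odd (or no even) weight at all, violating one of the two parity budgets. So parity-respecting $\bldx$-consistent splittings need not exist at a given state; the induction step of the ordinary state-splitting algorithm does not carry over, and showing that the process can nevertheless always be completed is exactly what the proof in~\cite{RS1} accomplishes. You have assumed the conclusion at the decisive step.

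The two necessity directions are also incomplete. For part~(b), the claim that in an irreducible sink $C$ of the label product $\encoder \times G$ ``every out-edge of $\encoder$ lifts uniquely'' because $G$ is deterministic and $S(\encoder) \subseteq S$ is a non sequitur: lifting an edge $u \stackrel{a}{\rightarrow} u'$ at a product state $(u,v)$ requires $a \in \FF_G(v)$, which does not follow from $a$ being a symbol of a word of $S$. One can construct irreducible deterministic $\encoder$ and $G$ with $S(\encoder) \subseteq S(G)$ in which an irreducible sink of the product contains a pair $(u,v)$ at which some out-edge of $u$ has no match at $v$; the sink property only keeps surviving edges inside $C$, it never restores lost ones, so the projected states can be out-degree deficient. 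What rescues the argument is pairing each encoder state $u$ with a state $v$ satisfying $\FF_\encoder(u) \subseteq \FF_G(v)$ \emph{before} forming any product---this is the follower-set machinery of \cite[Lemma~2.13]{MRS}, which the present paper itself invokes in its proof of Theorem~\ref{thm:principalstates}, and it requires an extra word when $G$ is an arbitrary irreducible deterministic presentation rather than the Shannon cover. Finally, for the ``only if'' of part~(a) you concede the key step yourself: the proposed fiber-summation weight overcounts exactly when two distinct encoder states have equally-labeled edges into a common state---a situation losslessness does not forbid---so the pushed-down vector can violate $A_{G_\parity}\bldx \ge n_\parity\bldx$, and you offer no mechanism to repair this. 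No spectral shortcut is available here, since a common approximate eigenvector is a genuinely joint condition on $A_{G_0}$ and $A_{G_1}$, not a consequence of $\lambda(A_G) \ge n_0+n_1$. (Your closing integrality remark, via the rational polyhedral cone, is fine; the problem is obtaining the nonzero real vector at all.) As it stands, the proposal establishes only the ``if'' direction of part~(b).
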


\section{Variable-length graphs and encoders}
\label{sec:vlg-vle}

In this section, we summarize several definitions and properties
relating to variable-length graphs 
and variable-length encoders (see also~\cite[\S 6.4]{MRS}).

\subsection{Variable-length graphs}
\label{sec:vlg}

In a variable-length graph (in short, VLG), the labels of the edges
may be words of any positive (finite) length over
the label alphabet~$\Sigma$; the length of the edge
is then defined as the length of its label.
Given a VLG~$H$, the constraint $S(H)$ that is presented by~$H$ is
defined as the set of all (consecutive) \emph{sub-words} of words
obtained by concatenating the labels that are read
along finite paths in~$H$.
Equivalently, $S(H)$
is the constraint presented by the (ordinary) graph~$G$
obtained from~$H$ by replacing each length-$\ell$ edge~$e$ in~$H$
by a path of~$\ell$ length-$1$ edges
(connected through newly introduced dummy states)
which generates the label of~$e$.
The \emph{follower set} of a state~$u$
\ifPAGELIMIT
    in~$H$
\else
in~$H$, denoted $\FF_H(u)$,
\fi
is the set of all \emph{prefixes} of words
that are generated by finite paths that start at~$u$.

A VLG~$H$ is called deterministic if
the labels of the outgoing edges from each state in~$H$ form
a \emph{prefix-free} list, namely, no label is a prefix
of any other label. The notions of losslessness and irreduciblity
carry over from ordinary
\ifPAGELIMIT
    graphs.
\else
graphs:
$H$ is lossless if no two paths in~$H$ that
share the same initial state and terminal state
generate the same word,
and it is irreducible if it is strongly connected.
\fi

\begin{example}
\label{ex:vlg}
Letting~$G$ and~$S$ be as in Example~\ref{ex:twostates},
the VLG~$H$ in \Figure~\ref{fig:vle}
is irreducible and deterministic, and it presents~$S$, i.e.,
\ifPAGELIMIT
    $S(H) = S(G) = S$.\qed
\else
$S(H) = S(G) = S$.
In particular, we have $\FF_H(\alpha) = \FF_G(\alpha)$.\qed
\fi
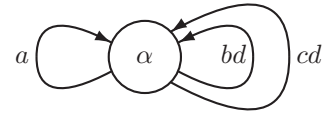
\begin{figure}[hbt]
\begin{center}
\thicklines
\setlength{\unitlength}{\figunit}
\figfont
\ifPAGELIMIT
    \begin{picture}(080,015)(-35,-05)
\else
\begin{picture}(080,035)(-35,-15)
\fi
    \put(000,000){\circle{20}}

    \qbezier(-9.26,004)(-30,015)(-30,000)
    \qbezier(-9.26,-04)(-30,-15)(-30,000)
    \put(-9.26,004){\vector(2,-1){0}}

    \qbezier(9.26,004)(030,015)(030,000)
    \qbezier(9.26,-04)(030,-15)(030,000)
    \put(9.26,004){\vector(-2,-1){0}}
    \qbezier(7.14,007)(040,025)(040,000)
    \qbezier(7.14,-07)(040,-25)(040,000)
    \put(7.14,007){\vector(-3,-2){0}}

    \put(000,000){\makebox(0,0){$\alpha$}}

    \put(-32,000){\makebox(0,0)[r]{$a$}}
    \put(028,001){\makebox(0,0)[r]{$bd$}}
    \put(042,001){\makebox(0,0)[l]{$cd$}}
\end{picture}
\thinlines
\setlength{\unitlength}{1pt}
\end{center}
\caption{VLG~$H$ for Example~\protect\ref{ex:vlg}.}
\label{fig:vle}
\end{figure}
\end{example}

\ifPAGELIMIT
\else
\begin{remark}
\label{rem:vlg}
The follower-set equality, $\FF_H(\alpha) = \FF_G(\alpha)$,
in Example~\ref{ex:vlg} is in fact an instance of
a more general property.
Let~$S$ be an irreducible constraint and let~$G$ be its Shannon cover
(i.e., the unique deterministic presentation of~$S$
with the smallest number of states).
Also, let~$H$ be an irreducible deterministic VLG that presents~$S$.
We can obtain from~$H$ an ordinary irreducible deterministic
graph~$H'$ (with length-$1$ edges) by transforming the outgoing
edges from each state in~$H$ into a tree. From the uniqueness of
the Shannon cover (and, specifically, from~\cite[Theorem~2.12(b)]{MRS})
we get that the follower sets of the states of~$H'$ coincide with
the follower sets of the states of~$G$.
Hence, for every state~$u$ in~$H$ there exists
a state~$v$ in~$G$ such that $\FF_H(u) = \FF_G(v)$.\qed
\end{remark}
\fi

\subsection{Variable-length encoders}
\label{sec:vle}

Let~$\Upsilon$ be a finite
alphabet\footnote{%
We use here the notation~$\Upsilon$ for an alphabet
(instead of~$\Sigma$) since in the context of variable-length encoders,
that alphabet will be the alphabet of tags.}
and let~$\List$ be a finite list of nonempty finite words
\ifPAGELIMIT
    over~$\Upsilon$.
\else
over~$\Upsilon$ (the empty word is the unique word of length~$0$).
\fi
We say that~$\List$ is \emph{exhaustive} if every word
over~$\Upsilon$ either has a prefix in~$\List$
or is a prefix of some word in~$\List$.
The next result is well known~\cite[p.~298]{Blahut}.

\begin{theorem}
\label{thm:Kraft}
Given an alphabet~$\Upsilon$
and a nonnegative integer sequence $\bldmu = (\mu_\ell)_{\ell \ge 1}$
with finite support,
there exists an exhaustive prefix-free list~$\List$ over~$\Upsilon$
such that
\[
\mu_\ell = \left| \List \cap \Upsilon^\ell \right|
\; , \quad
\ell = 1, 2, 3, \cdots \; ,
\]
\ifandonlyif~$\bldmu$
satisfies the Kraft inequality with equality, namely:
\begin{equation}
\label{eq:Kraft}
\sum_{\ell \ge 1}
\frac{\mu_\ell}{|\Upsilon|^\ell} = 1
\; .
\end{equation}
\end{theorem}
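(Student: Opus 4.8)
The plan is to prove both implications by a counting argument on the full $|\Upsilon|$-ary tree; write $r = |\Upsilon|$. The central device is a level-by-level recursion tracking how many words of each length remain \emph{available}, meaning they have no chosen ancestor, as the list $\List$ is built from the root downward.

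For necessity, suppose $\List$ is exhaustive and prefix-free, and let $L$ be the maximal length occurring in $\List$, which is finite since $\bldmu$ has finite support. I would show that the words of $\Upsilon^L$ are partitioned according to their prefix in $\List$: prefix-freeness guarantees each $u \in \Upsilon^L$ has at most one prefix in $\List$, while exhaustiveness together with the maximality of $L$ forces at least one (if $u$ were merely a prefix of some $w \in \List$, then $|w| \ge |u| = L$ would force $w = u \in \List$). Since a word $w \in \List$ of length $\ell$ is the prefix of exactly $r^{L-\ell}$ words of $\Upsilon^L$, counting gives $r^L = \sum_\ell \mu_\ell\, r^{L-\ell}$, and dividing by $r^L$ yields~(\ref{eq:Kraft}).

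For sufficiency, assume~(\ref{eq:Kraft}) and construct $\List$ greedily from short words to long. Let $a_\ell$ be the number of available words of length $\ell$ just before the length-$\ell$ codewords are chosen; then $a_1 = r$ and $a_{\ell+1} = r(a_\ell - \mu_\ell)$, which solves to $a_\ell = r^\ell\bigl(1 - \sum_{j < \ell} \mu_j/r^j\bigr)$. The selection at level $\ell$ is feasible exactly when $\mu_\ell \le a_\ell$, equivalently $\sum_{j \le \ell} \mu_j/r^j \le 1$; since all terms are nonnegative and the full sum equals $1$ by~(\ref{eq:Kraft}), every partial sum is at most $1$, so the construction never gets stuck. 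Evaluating the availability left after the top level gives $a_L - \mu_L = r^L\bigl(1 - \sum_{j \le L} \mu_j/r^j\bigr) = 0$, so no word of length $L$ is uncovered.

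The step I expect to require the most care is verifying exhaustiveness, rather than the routine counting. The identity $a_L - \mu_L = 0$ says every word of length $L$ has a prefix in $\List$; I would use this to treat an arbitrary word $u$ by cases on $|u|$. If $|u| \ge L$, the length-$L$ prefix of $u$ has a prefix in $\List$, hence so does $u$. If $|u| < L$ and $u$ has no prefix in $\List$, then $u$ is available, so extending it to any length-$L$ descendant $u''$ and taking the guaranteed prefix $w \in \List$ of $u''$, prefix-freeness rules out $|w| \le |u|$, forcing $u$ to be a proper prefix of $w$. Thus every word either has a prefix in $\List$ or is a prefix of a member of $\List$, which is exactly exhaustiveness and completes the argument.
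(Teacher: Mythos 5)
Your proof is correct, but note that the paper itself offers no proof of Theorem~\ref{thm:Kraft}: it quotes the result as well known, citing Blahut, so there is no in-paper argument to match. Your tree-counting proof is sound in both directions: the partition of $\Upsilon^L$ according to prefixes in $\List$ gives necessity, and the greedy level-by-level construction with the availability recursion $a_{\ell+1}=r(a_\ell-\mu_\ell)$ gives sufficiency (prefix-freeness of the constructed list is immediate from your definition of ``available,'' though you never say so explicitly). One wording slip: in the final exhaustiveness argument, what rules out $|w|\le |u|$ is not prefix-freeness of $\List$ but your standing assumption that $u$ has no prefix in $\List$ --- if $|w|\le |u|$, then $w$, being a prefix of $u''$ of length at most $|u|$, would be a prefix of $u$. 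Beyond that, it is worth observing that your bookkeeping is exactly the single-parity degeneration of the machinery the paper develops to generalize this theorem: your count $a_\ell-\mu_\ell$ of available-but-unchosen words of length $\ell$ equals the number of proper prefixes of length $\ell$ of words in $\List$, which is the quantity $y_\ell+z_\ell$ of Lemma~\ref{lem:prefixdist}, and your recursion is precisely~(\ref{eq:prefixdist}) summed over the two parity classes. The trade-offs are as follows: your argument is self-contained, and feasibility of the greedy choice (all partial Kraft sums at most~$1$) comes for free from nonnegativity of $\bldmu$; the paper's refinement instead proceeds by induction on the maximal length (collapsing the longest level), and in the parity-preserving setting of Theorem~\ref{thm:lengthdist} the nonnegativity of the prefix counts is no longer automatic --- it resurfaces there as the additional condition~(b), which has no analogue in the classical statement you proved.
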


Let~$S$ be a constraint over an alphabet~$\Sigma$
and let~$n$ be a positive integer.
Also, let $\encoder = (V,E,L)$ be a VLG,
and for every $u \in V$ and $\ell \ge 1$,
denote by $\mu_\ell(u)$ the number of edges
of length~$\ell$ outgoing from~$u$ in~$\encoder$.
We say that~$\encoder$ is a \emph{variable-length $(S,n)$-encoder}
(in short, an \emph{$(S,n)$-VLE}) if the following conditions hold.
\begin{list}{}{\settowidth{\labelwidth}{\textup{(E2)}}%
               \settowidth{\leftmargin}{\textup{(E2...)}}}
\item[(E1)]
$\encoder$ is lossless,
\item[(E2)]
$S(\encoder) \subseteq S$, and---
\item[(E3)]
for every $u \in V$:
\[
\sum_{\ell \ge 1} \frac{\mu_\ell(u)}{n^\ell} = 1
\; .
\]
\end{list}
(This definition reduces to that of a fixed-length $(S,n)$-encoder
when $\mu_\ell(u) = 0$ for every $u \in V$ and $\ell > 1$.)

Extending now the notion of tagging to the variable-length case,
let~$\Upsilon$ be a (base tag) alphabet of size $|\Upsilon| = n$.
A tagging of an $(S,n)$-VLE $\encoder$ is an assignment of
input tags---namely, words over~$\Upsilon$---to the edges of
$\encoder$, such that:
\begin{list}{}{\settowidth{\labelwidth}{\textup{(T2)}}%
               \settowidth{\leftmargin}{\textup{(T2...)}}}
\item[(T1)]
the length of each input tag equals the length of
(the label of) the edge, and---
\item[(T2)]
the input tags of the outgoing edges from each state in~$\encoder$
form an exhaustive prefix-free list over $\Upsilon$.
\end{list}

Theorem~\ref{thm:Kraft} and condition~(E3) guarantee
that every $(S,n)$-VLE can be tagged
consistently with conditions~(T1)--(T2).
Condition~(T1) means that the \emph{coding ratio}
is fixed to be~$1$ at all edges, regardless of their length
\ifPAGELIMIT
    (by grouping symbols and tags into nonoverlapping blocks,
\else
(as we argue in Remark~\ref{rem:codingratio} below,
\fi
any fixed coding ratio can be reduced to
the case of a coding ratio of~$1$).
We note that this is the variable-length encoding model assumed
in~\cite{AFKM}, \cite{Beal1},
\ifPAGELIMIT
   \cite{Franaszek2}.
\else
\cite{Franaszek2},
and this model is more restrictive than the one
in~\cite{HMS}, where the coding ratio needs to be constant only along
\emph{cycles} in the encoder (see \Figure~\ref{fig:vleHMS} below).
\fi

\begin{example}
\label{ex:vle}
Letting~$\Sigma$ and~$S$ be as in Example~\ref{ex:twostates},
the graph~$H$ in \Figure~\ref{fig:vle} is a deterministic $(S,2)$-VLE.
Taking $\Upsilon = \{ 0, 1 \}$, one possible tag assignment to
(the labels of) the edges of~$H$ is
\ifPAGELIMIT
   given by
   $0  \leftrightarrow a$,
   $10 \leftrightarrow bd$, and
   $11 \leftrightarrow cd$.
   The
\else
shown in Table~\ref{tab:vle}.
\begin{table}[hbt]
\caption{Possible tag assignment for
the encoder in \Figure~\ref{fig:vle}.}
\label{tab:vle}
\[
\begin{array}{lcl}
0  & \leftrightarrow & a  \\
10 & \leftrightarrow & bd \\
11 & \leftrightarrow & cd \\
\end{array}
\]
\end{table}
The coding rate is $1:1$ when the input tag is~$0$,
and $2:2$ when the input tag starts with a~$1$;
namely, the
\fi
coding ratio at each state is~$1$, 
so this encoder is capacity-achieving.
Note that this tag assignment is parity-preserving
\ifPAGELIMIT
    \withrespectto\ the following partition of~$\Sigma$:
    \begin{equation}
    \label{eq:partitioning}
    \Sigma_0 = \{ a, b \} \quad \textrm{and} \quad \Sigma_1 = \{ c, d \}
    \; .
    \end{equation}
\else
\withrespectto\ the partition~(\ref{eq:partitioning}) of~$\Sigma$.
\fi
In contrast,
\ifPAGELIMIT
    using Theorem~\ref{thm:main}(a), it was shown in~\cite{RS1}
\else
recall from Example~\ref{ex:main}
\fi
that for this partition,
a coding rate of $t:t$ cannot be achieved 
by any parity-preserving \emph{fixed-length} encoder for~$S$
for any positive integer~$t$.\qed
\end{example}

\begin{example}
\label{ex:vlealt}
Letting~$\Sigma$ and~$S$ be as in Example~\ref{ex:twostates},
the graph~$\encoder$ in \Figure~\ref{fig:vlealt} presents
another $(S,2)$-VLE.
\ifPAGELIMIT
    The coding ratio at each state is~$1$, 
\else
The coding rate at state~$\alpha'$
is $3:3$, as it has eight outgoing edges with labels in~$\Sigma^3$,
and the coding rate at~$\alpha''$ and at~$\beta$ is $2:2$,
as each state has four outgoing edges labeled from $\Sigma^2$;
the coding ratio at each state is therefore~$1$, 
\fi
making~$\encoder$ capacity-achieving.
However, $\encoder$ is not
\ifPAGELIMIT
    deterministic.
\else
deterministic
(there are two edges labeled $bda$ and two labeled $cda$
outgoing from state~$\alpha'$,
two edges labeled $aa$ outgoing from~$\alpha''$,
and two labeled $da$ from state~$\beta$).
Nevertheless, $\encoder$ has finite anticipation
and is therefore lossless:
the first symbol of a label uniquely determines
the length of the label as well as the initial state,
and a label and the first symbol of the next label 
within a sequence uniquely determine the edge.
\fi

\begin{figure}[hbt]
\begin{center}
\thicklines
\setlength{\unitlength}{\figunit}
\figfont
\ifPAGELIMIT
    \begin{picture}(150,090)(-45,-60)
\else
\begin{picture}(150,110)(-45,-70)
\fi
    \multiput(000,000)(060,000){2}{\circle{20}}
    \put(030,-40){\circle{20}}
    \qbezier(051,4.5)(030,14.5)(009,4.5)
    \put(051,4.5){\vector(2,-1){0}}
    \qbezier(054,8.3)(030,28.8)(006,8.3)
    \put(054,8.3){\vector(5,-4){0}}
    \qbezier(058,9.9)(030,48.0)(002,9.9)
    \put(058,9.9){\vector(2,-3){0}}
    \qbezier(051,-4.5)(030,-14.5)(009,-4.5)
    \put(009,-4.5){\vector(-2,1){0}}

    \put(010,000){\vector(1,0){40}}

    \qbezier(-9.26,004)(-30,015)(-30,000)
    \qbezier(-9.26,-04)(-30,-15)(-30,000)
    \put(-9.26,004){\vector(2,-1){0}}
    \qbezier(-7.14,007)(-40,025)(-40,000)
    \qbezier(-7.14,-07)(-40,-25)(-40,000)
    \put(-7.14,007){\vector(3,-2){0}}

    \qbezier(69.26,004)(090,015)(090,000)
    \qbezier(69.26,-04)(090,-15)(090,000)
    \put(69.26,004){\vector(-2,-1){0}}
    \qbezier(67.14,007)(100,025)(100,000)
    \qbezier(67.14,-07)(100,-25)(100,000)
    \put(67.14,007){\vector(-3,-2){0}}

    \put(024,-32){\vector(-3,4){18}}
    \put(054,-08){\vector(-3,-4){18}}

    \qbezier(040,-40)(73.4,-41.3)(62.8,-9.6)
    \put(62.8,-9.6){\vector(-1,4){0}}
    \qbezier(38.6,-34.8)(56.2,-28.2)(57.8,-9.6)
    \put(57.8,-9.6){\vector(1,6){0}}

    \qbezier(020,-40)(-13.4,-41.3)(-2.8,-9.6)
    \put(020,-40){\vector(1,0){0}}
    \qbezier(21.4,-34.8)(3.8,-28.2)(2.2,-9.6)
    \put(21.4,-34.8){\vector(3,-1){0}}

    \qbezier(034,-49.26)(045,-70)(030,-70)
    \qbezier(026,-49.26)(015,-70)(030,-70)
    \put(034,-49.26){\vector(-1,2){0}}

    \put(001,001){\makebox(0,0){$\alpha'$}}
    \put(060,-.5){\makebox(0,0){$\beta$}}
    \put(031,-39){\makebox(0,0){$\alpha''$}}

    \put(-41,001){\makebox(0,0)[r]{$bda$}}
    \put(-29,001){\makebox(0,0)[l]{$cda$}}
    \put(030,030.6){\makebox(0,0)[b]{$\mathbf{bdb}$}}
    \put(030,020.2){\makebox(0,0)[b]{$\mathbf{bdc}$}}
    \put(030,011.0){\makebox(0,0)[b]{$\mathbf{cdb}$}}
    \put(030,001.5){\makebox(0,0)[b]{$\mathbf{cdc}$}}
    \put(030,-08.0){\makebox(0,0)[b]{$\mathbf{da}$}}
    \put(088,001){\makebox(0,0)[r]{$db$}}
    \put(102,001){\makebox(0,0)[l]{$dc$}}
    \put(042,-20){\makebox(0,0)[r]{$\mathbf{da}$}}
    \put(018,-21){\makebox(0,0)[l]{$aa$}}
    \put(072,-35){\makebox(0,0)[r]{$\mathbf{ab}$}}
    \put(060,-30){\makebox(0,0)[r]{$\mathbf{ac}$}}
    \put(-16,-36){\makebox(0,0)[l]{$bda$}}
    \put(-03,-29){\makebox(0,0)[l]{$cda$}}
    \put(030,-67){\makebox(0,0)[b]{$aa$}}
\end{picture}
\thinlines
\setlength{\unitlength}{1pt}
\end{center}
\caption{VLE~$\encoder$ for
the constraint presented by \Figure~\protect\ref{fig:twostates}.}
\label{fig:vlealt}
\end{figure}

Consider now the following partition~$\{ \Sigma_0, \Sigma_1 \}$
of~$\Sigma$:
\begin{equation}
\label{eq:partitioningalt}
\Sigma_0 = \{ a \} \quad \textrm{and} \quad
\Sigma_1 = \{ b, c, d \}
\ifPAGELIMIT
     \end{equation}
     (the odd labels \withrespectto\ this partition
     are marked in boldface in \Figure~\ref{fig:vlealt}).
     Taking the tag alphabet $\Upsilon = \{ 0, 1 \}$, one
\else
\; .
\end{equation}
\Withrespectto\ this partition, the eight outgoing edges from
state~$\alpha'$ in~$\encoder$ are equally divided between
$(\Sigma^3)_0$ and $(\Sigma^3)_1$,
and the four outgoing edges from
each of the states~$\alpha''$ and~$\beta$ are equally divided between
$(\Sigma^2)_0$ and $(\Sigma^2)_1$
(odd labels are marked in boldface in \Figure~\ref{fig:vlealt}).
Hence, taking the tag alphabet $\Upsilon = \{ 0, 1 \}$,
we can achieve a coding ratio of~$1$ by a parity-preserving VLE.
One
\fi
possible parity-preserving tag assignment
to the edges of~$\encoder$ is shown in Table~\ref{tab:vlealt}.
\begin{table}[hbt]
\caption{Possible tag assignment for
the encoder in \Figure~\ref{fig:vlealt}.}
\label{tab:vlealt}
\ifPAGELIMIT
    \vspace{-2ex}
\fi
\[
\begin{array}{rcccrcccrcc}
\multicolumn{3}{c}{\mathrm{State} \; \alpha'}  &&
\multicolumn{3}{c}{\mathrm{State} \; \alpha''} &&
\multicolumn{3}{c}{\mathrm{State} \; \beta}    \\
\cline{1-3} \cline{5-7} \cline{9-11}
000, 011 & \leftrightarrow   & bda          &\;\;&
00, 11   & \leftrightarrow   & aa           &\;\;&
01, 10   & \leftrightarrow   & \mathbf{da}  \\
101, 110 & \leftrightarrow   & cda          &\;\;&
01       & \leftrightarrow   & \mathbf{ac}  &\;\;&
00       & \leftrightarrow   & db           \\
001      & \leftrightarrow   & \mathbf{bdb} &\;\;&
10       & \leftrightarrow   & \mathbf{ab}  &\;\;&
11       & \leftrightarrow   & dc           \\
010      & \leftrightarrow   & \mathbf{bdc} &&&&&& \\
100      & \leftrightarrow   & \mathbf{cdb} &&&&&& \\
111      & \leftrightarrow   & \mathbf{cdc} &&&&&& \\
\end{array}
\]
\ifPAGELIMIT
    \vspace{-2ex}
\fi
\end{table}
Similarly to the partition~(\ref{eq:partitioning}),
it was shown in~\cite{RS1}
that for the partition~(\ref{eq:partitioningalt}), too,
one cannot achieve a coding ratio of~$1$
by any parity-preserving fixed-length encoder
\ifPAGELIMIT
    for~$S$.
    Yet unlike the partition~(\ref{eq:partitioning}),
    for~(\ref{eq:partitioningalt}), it can be shown (details omitted)
    that there is no deterministic VLE that has coding ratio~$1$
    and a parity-preserving assignment.\qed
\else
for~$S$.

The encoder in \Figure~\ref{fig:vlealt}
can be obtained from (an untagged copy of)
the encoder in \Figure~\ref{fig:fixedlength}
by replacing the outgoing edges from state~$\alpha'$
with the eight paths of length~$3$ that start at that state
and, similarly, replacing the outgoing edges from
each of the states~$\alpha''$ and~$\beta$ with the four paths of
length~$2$ that start at the state.\qed
\fi
\end{example}

\ifPAGELIMIT
\else
To summarize, for the constraint~$S$ of Example~\ref{ex:twostates},
Examples~\ref{ex:vle} and~\ref{ex:vlealt} present, respectively,
(capacity-achieving) parity-preserving VLEs with a coding ratio of~$1$
for the
two partitions~(\ref{eq:partitioning}) and~(\ref{eq:partitioningalt}):
the first VLE is deterministic, while the other is not.
In fact, we show in Appendix~\ref{sec:nonexistence}
that for the partition~(\ref{eq:partitioningalt}),
one cannot achieve a coding ratio of~$1$ by
any deterministic parity-preserving VLE
(unless one uses a degenerate base tag alphabet
containing only even symbols).
On the other hand, there exists such an encoder
under some relaxation of the notion of fixed coding ratio,
following the encoding model considered in~\cite{HMS}:
the tagged encoder~$\encoder^\circ$ in \Figure~\ref{fig:vleHMS}
maintains a coding ratio of~$1$ \emph{along each cycle}.
\begin{figure}[hbt]
\begin{center}
\thicklines
\setlength{\unitlength}{\figunit}
\figfont
\begin{picture}(150,050)(-45,-25)
    \multiput(000,000)(060,000){2}{\circle{20}}
    \qbezier(051,4.5)(030,14.5)(009,4.5)
    \put(051,4.5){\vector(2,-1){0}}
    \qbezier(051,-4.5)(030,-14.5)(009,-4.5)
    \put(009,-4.5){\vector(-2,1){0}}
    \qbezier(054,-8.3)(030,-28.8)(006,-8.3)
    \put(006,-8.3){\vector(-5,4){0}}

    \qbezier(-9.26,004)(-30,015)(-30,000)
    \qbezier(-9.26,-04)(-30,-15)(-30,000)
    \put(-9.26,004){\vector(2,-1){0}}
    \qbezier(-7.14,007)(-40,025)(-40,000)
    \qbezier(-7.14,-07)(-40,-25)(-40,000)
    \put(-7.14,007){\vector(3,-2){0}}

    \qbezier(69.26,004)(090,015)(090,000)
    \qbezier(69.26,-04)(090,-15)(090,000)
    \put(69.26,004){\vector(-2,-1){0}}

    \put(000,000){\makebox(0,0){$\alpha$}}
    \put(060,-.5){\makebox(0,0){$\beta$}}

    \put(-43,000){\makebox(0,0)[r]{$11/cd$}}
    \put(-27,000){\makebox(0,0)[l]{$0/a$}}
    \put(030,011.0){\makebox(0,0)[b]{$10/\mathbf{b}$}}
    \put(030,-08.0){\makebox(0,0)[b]{$1/\mathbf{da}$}}
    \put(030,-20.0){\makebox(0,0)[t]{$01/\mathbf{dcd}$}}
    \put(093,000){\makebox(0,0)[l]{$00/db$}}
\end{picture}
\thinlines
\setlength{\unitlength}{1pt}
\end{center}
\caption{Second VLE~$\encoder^\circ$ for
the constraint presented by \Figure~\protect\ref{fig:twostates}.}
\label{fig:vleHMS}
\end{figure}
It is easily seen that while at state~$\alpha$,
each outgoing edge is uniquely determined by its first symbol,
and while at state~$\beta$, an outgoing edge is
uniquely determined by its first two symbols.

\begin{remark}
\label{rem:codingratio}
Extending the terminology from fixed-length encoders,
in a tagged VLE at a (fixed) coding ratio~$p/q$
for a constraint~$S$, input tags are words over the (base) tag alphabet,
and the length of a tag of each edge equals $p/q$ times the edge length.
The set of tags of the outgoing edges from each state must form
an exhaustive prefix-free list.
Assuming that $\gcd(p,q) = 1$, the length~$\ell$ of an edge must be
divisible by~$q$, so we can consider the constraint~$S^q$ instead
and regard each length-$\ell$ label over~$\Sigma$
as a word of length $\ell/q$ over $\Sigma^q$.
Accordingly, we can group the $p \ell /q$ symbols in each tag
into $\ell/q$ blocks of length~$p$.
Doing so, the coding ratio becomes~$1$.\qed
\end{remark}
\fi

\begin{example}
\label{ex:2inf}
Let~$S$ be the $(2,\infty)$-RLL constraint, whose Shannon cover
is given by the graph~$G$ in \Figure~\ref{fig:2inf}.
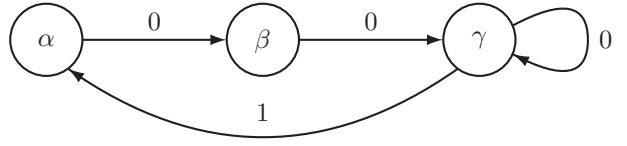
\begin{figure}[hbt]
\begin{center}
\thicklines
\setlength{\unitlength}{\figunit}
\figfont
\ifPAGELIMIT
    \begin{picture}(170,030)(-10,-20)
\else
\begin{picture}(170,050)(-10,-30)
\fi
    \multiput(000,000)(060,000){3}{\circle{20}}
    \put(010,000){\vector(1,0){40}}
    \put(030,005){\makebox(0,0){$0$}}
    \put(070,000){\vector(1,0){40}}
    \put(090,005){\makebox(0,0){$0$}}
    \qbezier(129.26,004)(150,015)(150,000)
    \qbezier(129.26,-04)(150,-15)(150,000)
    \put(155,000){\makebox(0,0){$0$}}
    \put(129.26,-04){\vector(-2,1){0}}
    \qbezier(114,-08)(060,-46)(006,-08)
    \put(006,-08){\vector(-3,2){0}}
    \put(060,-20){\makebox(0,0){$1$}}
    \put(000,000){\makebox(0,0){$\alpha$}}
    \put(060,-.5){\makebox(0,0){$\beta$}}
    \put(120,000){\makebox(0,0){$\gamma$}}
\end{picture}
\thinlines
\setlength{\unitlength}{1pt}
\end{center}
\caption{Shannon cover~$G$ of the $(2,\infty)$-RLL constraint.}
\label{fig:2inf}
\end{figure}
The capacity of~$S$ is approximately $0.5515$, so there exists
a rate $1:2$ fixed-length encoder for~$S$
(namely, an $(S^2,2)$-encoder); such a (tagged) encoder~$\encoder$
is shown in
\ifPAGELIMIT
    \Figure~\ref{fig:2infencoder}.
\else
\Figure~\ref{fig:2infencoder}
(note that in this case, $S(\encoder)$ is strictly contained in $S^2$).
\fi
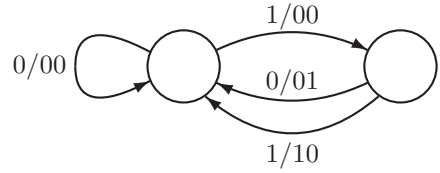
\begin{figure}[hbt]
\begin{center}
\thicklines
\setlength{\unitlength}{\figunit}
\figfont
\ifPAGELIMIT
    \begin{picture}(125,030)(-45,-15)
\else
\begin{picture}(125,050)(-45,-25)
\fi
    \multiput(000,000)(060,000){2}{\circle{20}}
    \qbezier(051,4.5)(030,14.5)(009,4.5)
    \put(051,4.5){\vector(2,-1){0}}
    \qbezier(051,-4.5)(030,-14.5)(009,-4.5)
    \put(009,-4.5){\vector(-2,1){0}}
    \qbezier(054,-8.3)(030,-28.8)(006,-8.3)
    \put(006,-8.3){\vector(-5,4){0}}
    \put(030,010.0){\makebox(0,0)[b]{$1/00$}}
    \put(030,-08.0){\makebox(0,0)[b]{$0/01$}}
    \put(030,-21.0){\makebox(0,0)[t]{$1/10$}}
    \qbezier(-9.26,004)(-30,015)(-30,000)
    \qbezier(-9.26,-04)(-30,-15)(-30,000)
    \put(-40,000){\makebox(0,0){$0/00$}}
    \put(-9.26,-04){\vector(2,1){0}}
\end{picture}
\thinlines
\setlength{\unitlength}{1pt}
\end{center}
\caption{Rate $1:2$ fixed-length encoder~$\encoder$ for
the $(2,\infty)$-RLL constraint.}
\label{fig:2infencoder}
\end{figure}
This encoder is not deterministic;
in fact, the smallest integer~$p$ for which there exists
a rate $p:2p$ deterministic fixed-length encoder for~$S$
is $p = 7$, as this is the smallest integer for which
the set $\XX(A_G^{2p},2^p)$ contains a $0\mbox{--}1$ vector
(see~\cite[Theorem~7.15]{MRS}).
\ifPAGELIMIT
\else
Still, the encoder~$\encoder$ is $(0,1)$-sliding-block decodable.
\fi

On the other hand, the graph in \Figure~\ref{fig:2infvle},
with the tagging
\ifPAGELIMIT
    $0  \leftrightarrow 00$,
    $10 \leftrightarrow 01{.}00$, and
    $11 \leftrightarrow 10{.}00$,
\else
of Table~\ref{tab:2infvle},
\fi
is a deterministic VLE for~$S$ with a coding ratio of $1/2$
(see~\cite{Franaszek2};
since the alphabet of~$S^2$ consists of pairs of bits, 
we have used dots to delimit the symbols within each label).
Note, however, that
\ifPAGELIMIT
    such a tag assignment
\else
the tag assignment in Table~\ref{tab:2infvle}
\fi
is \emph{not} parity-preserving;
we will return to this example in
Examples~\ref{ex:2infprincipalstates}
and~\ref{ex:2infppprincipalstates} below.\qed
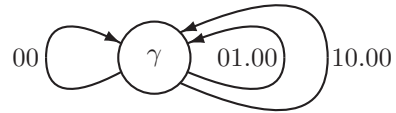
\begin{figure}[hbt]
\begin{center}
\thicklines
\setlength{\unitlength}{\figunit}
\figfont
\ifPAGELIMIT
    \begin{picture}(98,015)(-35,-05)
\else
\begin{picture}(98,035)(-35,-15)
\fi
    \put(000,000){\circle{20}}

    \qbezier(-9.26,004)(-30,015)(-30,000)
    \qbezier(-9.26,-04)(-30,-15)(-30,000)
    \put(-9.26,004){\vector(2,-1){0}}

    \qbezier(9.26,004)(036,015)(036,000)
    \qbezier(9.26,-04)(036,-15)(036,000)
    \put(9.26,004){\vector(-3,-1){0}}
    \qbezier(7.14,007)(048,025)(048,000)
    \qbezier(7.14,-07)(048,-25)(048,000)
    \put(7.14,007){\vector(-5,-2){0}}

    \put(000,000){\makebox(0,0){$\gamma$}}
    \put(-32,000){\makebox(0,0)[r]{$00$}}
    \put(034,000){\makebox(0,0)[r]{$01{.}00$}}
    \put(049,000){\makebox(0,0)[l]{$10{.}00$}}
\end{picture}
\thinlines
\setlength{\unitlength}{1pt}
\end{center}
\caption{VLE for the $(2,\infty)$-RLL constraint.}
\label{fig:2infvle}
\end{figure}
\ifPAGELIMIT
\else
\begin{table}[hbt]
\caption{Possible tag assignment for the encoder in
\Figure~\ref{fig:2infvle}.}
\label{tab:2infvle}
\[
\begin{array}{lcl}
0  & \leftrightarrow & 00      \\
10 & \leftrightarrow & 01{.}00 \\
11 & \leftrightarrow & 10{.}00 \\
\end{array}
\]
\end{table}
\fi
\end{example}

\subsection{Deterministic variable-length encoders}
\label{sec:DVLE}

In this section, we focus on VLEs which are deterministic,
and quote a necessary and sufficient condition for having
such encoders. 

Let $H = (V,E,L)$ be a VLG whose labels are over
a finite alphabet~$\Sigma$ and let~$n$ be a positive integer.
Fix some nonempty subset $V' \subseteq V$,
and let $H' = (V',E',L')$ be the subgraph of~$H$ that is
induced by~$V'$ (namely, $E'$ consists of all the edges in~$H$
both of whose endpoints are in~$V'$).
For every $u \in V'$ and $\ell \ge 1$, denote
by $\mu_\ell(u|V')$ the number
of outgoing edges of length~$\ell$ from~$u$ in $H'$.
We say that~$V'$ is
a set of \emph{principal states in~$H$} \withrespectto~$n$ if
for every $u \in V'$:
\ifPAGELIMIT
    \[
    \sum_{\ell \ge 1} \frac{\mu_\ell(u|V')}{n^\ell} \ge 1 \; .
    \]
\else
\begin{equation}
\label{eq:Kraftinequality}
\sum_{\ell \ge 1} \frac{\mu_\ell(u|V')}{n^\ell} \ge 1 \; .
\end{equation}
\fi
It readily follows from this definition
that~$V'$ is a set of principal states in a VLG~$H$
\withrespectto~$n$, \ifandonlyif\ it is also so in
the subgraph~$H'$ of~$H$ that is induced by~$V'$.

The following result is essentially known
(see~\cite{Beal1},
\cite{Franaszek1},
\cite{Franaszek2}).

\begin{theorem}
\label{thm:principalstates}
Let~$S$ be an irreducible constraint and
let~$n$ and~$r$ be positive integers.
There exists a deterministic $(S,n)$-VLE
whose edges all have length at most~$r$,
\ifandonlyif\ $S$
is presented by an irreducible deterministic VLG $H = (V,E,L)$
whose edges all have length at most~$r$,
and~$V$ contains a subset of principal states
\withrespectto~$n$.\footnote{%
Moreover, the graph~$H$ can be assumed to be \emph{reduced}, namely,
the follower sets of its states are distinct.
For the case where all the edge lengths are~$1$,
the graph~$H$ is the Shannon cover of~$S$.}
\end{theorem}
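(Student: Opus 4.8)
The plan is to prove the two implications separately. The reverse implication (a presentation with principal states yields a deterministic VLE) is constructive, while the forward implication extracts principal states from a given encoder by overlaying it on the Shannon cover of $S$. I expect the forward direction to be the main obstacle.

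For the reverse direction, suppose $S=S(H)$ for an irreducible deterministic VLG $H=(V,E,L)$ with edge lengths at most $r$, and let $V'\subseteq V$ be a set of principal states \withrespectto~$n$. I would pass to the induced subgraph $H'=H[V']$; by the observation following the definition of principal states, $V'$ is still principal in $H'$, so every $u\in V'$ satisfies $\sum_{\ell\ge1}\mu_\ell(u\mid V')/n^\ell\ge1$. It remains only to thin the out-edges of each state so that condition~(E3) holds with equality: determinism is inherited (a subfamily of a prefix-free family is prefix-free), losslessness follows from determinism, the inclusion $S(\encoder)\subseteq S(H')\subseteq S$ gives~(E2), and the length bound is preserved. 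The thinning is done independently at each state by building an $n$-ary prefix tree of depth at most $r$, assigning shortest labels first: at depth $\ell$ I retain as many of the $\mu_\ell(u\mid V')$ available length-$\ell$ edges as there are free nodes. Either some depth exhausts the free nodes, so the tree is filled and the retained lengths satisfy Kraft with equality (the equality case of Theorem~\ref{thm:Kraft}), or every edge is retained and a free leaf survives at depth $r$, which forces $\sum_\ell\mu_\ell(u\mid V')/n^\ell<1$ and contradicts principality. Hence the first case occurs, and the retained edges form a deterministic $(S,n)$-VLE of length at most $r$.

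For the forward direction, let $\encoder$ be a deterministic $(S,n)$-VLE with edge lengths at most $r$, and let $G=(V_G,E_G,L_G)$ be the Shannon cover of the irreducible constraint $S$. Since $S(\encoder)\subseteq S=S(G)$ and both are deterministic, I would form the fibre product of a length-$1$ expansion of $\encoder$ with $G$: its reachable states are pairs $(u,v)$ for which every label read into $(u,v)$ drives $G$ to $v$, so that $\FF_\encoder(u)\subseteq\FF_G(v)$. This yields a label-preserving simulation $g$ sending each reachable state of $\encoder$ to a Shannon-cover state, with $g(u')=\delta_G(g(u),\bldw)$ whenever $\encoder$ has an edge $u\to u'$ labelled $\bldw$. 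I then set $V'$ to be the image under $g$ of the state set of $\encoder$, and build the presentation $H$ on the state set $V_G$: at each $v\in V'$ I install, as genuine variable-length edges, the out-labels of one encoder state $u$ with $g(u)=v$ (valid $G$-paths from $v$, since $\FF_\encoder(u)\subseteq\FF_G(v)$, prefix-free and of length at most $r$), and I complete this family to an exhaustive prefix-free family of length-at-most-$r$ words in $\FF_G(v)$ by adjoining filler edges; at each $v\notin V'$ I install any length-at-most-$r$ exhaustive deterministic family, e.g.\ the length-$1$ edges of $G$.

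Because every state of $H$ then carries an exhaustive prefix-free out-family with the correct $G$-successors, $S(H)=S$; determinism and the length bound are immediate, and irreducibility of $H$ follows from that of $G$ (the chosen out-families cover all $G$-paths). The key point is that the installed encoder labels at $v=g(u)$ all have $G$-successors $\delta_G(v,\bldw)=g(u')\in V'$, so they are edges of $H$ staying inside $V'$; since they already satisfy Kraft with equality in $\encoder$, the within-$V'$ Kraft sum at $v$ is at least $1$, i.e.\ $V'$ is a set of principal states. The main obstacle is exactly making the simulation $g$ well defined: an encoder presenting a proper subconstraint need not map canonically into the Shannon cover, which is what the fibre-product construction (together with the follower-set containment behind Remark~\ref{rem:vlg}) is designed to repair, while simultaneously guaranteeing that inserting the encoder's possibly short labels at the states of $V'$ still leaves room to exhaust each $\FF_G(v)$ with words of length at most $r$.
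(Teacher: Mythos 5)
Your reverse direction is sound and is essentially the paper's argument: restrict to the subgraph induced by the principal states and thin each state's outgoing edges until the Kraft sum equals~$1$ (the paper deletes longest edges first, you fill a prefix tree shortest-first; both greedy schedules work). The forward direction, however, has a genuine gap, and it sits exactly where you locate ``the main obstacle.'' The fibre product does \emph{not} deliver the follower-set containments. A reachable pair $(u,v)$ records only that \emph{some} word $\bldw$ leads to $u$ in $\encoder$ and to $v$ in $G$; a word $\bldw'\in\FF_\encoder(u)$ then lies in $S$, but the $G$-paths generating $\bldw\bldw'$ may pass, after $\bldw$, only through terminal states \emph{other} than $v$, so $\FF_\encoder(u)\subseteq\FF_G(v)$ can fail. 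This is not a technicality: whenever the words of $S(\encoder)$ do not synchronize $G$ (think of an even-shift-like constraint, where reading $00\cdots0$ leaves both Shannon-cover states alive), the fibre product contains pairs violating the containment, and nothing in your construction excludes them; conversely, demanding that \emph{every} word into $(u,v)$ focus $G$ on $v$ makes the set of good pairs possibly empty. The statement you actually need---for each encoder state $u$ there exists \emph{some} state $v$ of $G$ with $\FF_\encoder(u)\subseteq\FF_G(v)$---is precisely \cite[Lemma~2.13]{MRS}, which the paper invokes; its proof requires a synchronizing-word/minimal-subset argument and requires $\encoder$ to be irreducible, which is why the paper first passes to an irreducible sink of $\encoder$ (a step you omit). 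The ``follower-set containment behind Remark~\ref{rem:vlg}'' cannot substitute for it: that remark concerns deterministic VLGs presenting $S$ \emph{itself}, whereas an encoder generally presents a proper subconstraint.

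Second, even granting the containments, your simulation $g$ need not exist as a \emph{function} satisfying $g(u')=\delta_G(g(u),\bldw)$ on every edge: an encoder state may have its follower set contained in $\FF_G(v)$ for several states $v$, and distinct incoming edges can force inconsistent choices. What is true---and is all the paper uses---is that containment propagates forward along edges: if $\FF_\encoder(u)\subseteq\FF_G(v)$ and $u\rightarrow u'$ has label $\bldw$, then $\FF_\encoder(u')\subseteq\FF_G(v')$, where $v'$ is the endpoint of the $G$-path from $v$ generating $\bldw$. The repair is the paper's choice of $V'$: take \emph{all} states of $G$ whose follower sets contain the follower set of some encoder state (not the image of a claimed homomorphism); this set is automatically closed under the successors of the installed labels, and your principality computation then goes through with ``$\delta_G(v,\bldw)\in V'$'' in place of ``$\delta_G(v,\bldw)=g(u')$.'' Finally, irreducibility of your $H$, built on all of $V_G$, is asserted but not proved and can fail: long installed edges at states of $V'$ skip over intermediate states of $G$, so a state whose only $G$-predecessors lie in $V'$ may be unreachable in $H$. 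The paper secures irreducibility by making the encoder-derived graph on $V'$ irreducible (again via an irreducible sink), adjoining states outside $V'$ only as they become reachable, and returning to $V'$ along shortest $G$-paths, whose intermediate states all lie outside $V'$ and hence carry their full length-$1$ neighborhoods.
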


\ifPAGELIMIT
\else
We include a proof of the theorem
both for completeness and for reference
in our upcoming extension of this result to the parity-preserving case.

\begin{proof}[Proof of Theorem~\ref{thm:principalstates}]
Sufficiency follows by first looking at the subgraph~$H'$
of~$H$ that is induced by a set of principal states~$V'$.
We then (possibly) remove outgoing edges from states in~$H'$, 
starting with the longest outgoing edge and proceeding
(if necessary) with edges in descending order of their lengths,
until the inequality~(\ref{eq:Kraftinequality}) becomes
an equality at each state $u \in V'$.

To show necessity, suppose that~$\encoder$ is
a deterministic $(S,n)$-VLE. By shifting to an irreducible
sink\footnote{%
\label{fn:sink}%
An irreducible sink of $\encoder$ is an irreducible subgraph
$\overline{\encoder} = (\overline{V},\overline{E},\overline{L})$
of $\encoder$ such that all the outgoing edges from $\overline{V}$
in $\encoder$ terminate in $\overline{V}$.
Every graph has at least one irreducible sink~\cite[\S 2.5.1]{MRS}.
It is straightforward to see that an irreducible sink of
an $(S,n)$-VLE is also an $(S,n)$-VLE.}
of~$\encoder$, we can assume that~$\encoder$ is irreducible.
Let~$G$ be the Shannon cover of~$S$.
By transforming the outgoing edges from each state in~$\encoder$
into a tree (as in Remark~\ref{rem:vlg}),
we get from~\cite[Lemma~2.13]{MRS}
that for every state~$Z$ in~$\encoder$
there is a state~$u$ in~$G$ such that
$\FF_\encoder(Z) \subseteq \FF_G(u)$.
Let $V'$ denote the states in~$G$ whose follower sets contain
follower sets of states of~$\encoder$; clearly,
$V'$ is not empty.
For every $u \in V'$, let $Z(u)$ be
some particular state~$Z$ in~$\encoder$ such that
$\FF_\encoder(Z) \subseteq \FF_G(u)$.

Next, we construct a deterministic VLG $H^* = (V',E^*,L^*)$
in which the outgoing edges from each state $u \in V'$ are
defined as follows: for each edge
$Z(u) \stackrel{\bldw}{\rightarrow} \tilde{Z}$
outgoing from $Z(u)$ in~$\encoder$
(where $\tilde{Z}$ is the terminal state of the edge
and $\bldw$ is its label), endow~$H^*$ with
an edge $u \stackrel{\bldw}{\rightarrow} \tilde{u}$,
where~$\tilde{u}$ is the terminal state
of the (unique) path in~$G$ that starts at~$u$
and generates the word~$\bldw$;
note that $\FF_\encoder(\tilde{Z}) \subseteq \FF_G(\tilde{u})$
and, therefore, $\tilde{u} \in V'$.
By the construction it follows that~$H^*$ is deterministic and,
by possibly shifting to an irreducible sink of~$H^*$,
we can assume that~$H^*$ is irreducible.
One can easily show by induction on~$\ell$
that all length-$\ell$ words in $\FF_{H^*}(u)$
are contained in $\FF_G(u)$, for every $u \in V'$;
hence, $\FF_{H^*}(u) \subseteq \FF_G(u)$ for every $u \in V'$
and, in particular, $S(H^*) \subseteq S(G) = S$.
Moreover, denoting by $\mu^*_\ell(u)$
the number of outgoing edges of length~$\ell$ from
state~$u$ in~$H^*$, we have, for every $u \in V'$:
\begin{equation}
\label{eq:Kraftequality}
\sum_{\ell \ge 1} \frac{\mu^*_\ell(u)}{n^\ell} = 1 \; .
\end{equation}
Thus, $H^*$ is an irreducible deterministic $(S,n)$-VLE.
Moreover, the length of each edge in~$H^*$
is at most the length of the longest edge in~$\encoder$.

Next, we construct a VLG $H = (V,E,L)$
that contains~$H^*$ as a subgraph, as follows.
Start with $(V,E,L) \leftarrow (V',E^*,L^*)$.
Then, for each state $u \in V'$, let $\length(u)$ be
the length of the longest edge outgoing from~$u$ in~$H^*$.
For every word $\bldw \in \FF_G(u) \setminus \FF_{H^*}(u)$
of length~$\length(u)$ that does not have any prefix
that labels any of the outgoing edges from~$u$ in~$H^*$,
endow~$H$ with an edge labeled~$\bldw$ from~$u$
to the terminal state~$v$ of the path from~$u$ in~$G$
that generates~$\bldw$
(in particular, insert~$v$ into~$V$ if it is not there already).
Finally, iteratively endow~$H$ with the (length-$1$) outgoing edges
(in~$G$) from each state $u \in V \setminus V'$
(and insert their terminal states to~$V$ if they are not there already),
until no new edges are added.

We claim that~$H$ is irreducible.
Indeed, the subgraph~$H^*$
is irreducible, and every state $u \in V \setminus V'$
is reachable from~$V'$ in~$H$ (or else it would not have been
inserted into~$V$). Moreover, from each state
$u \in V \setminus V'$ we can reach some state
in~$V'$ in~$H$ by following the shortest path from~$u$
to~$V'$ in the Shannon cover~$G$.

Secondly, we claim that~$H$ is deterministic.
Indeed, at each state~$u \in V'$
we only add edges of length~$\length(u)$
whose labels do not have prefixes that label
the existing outgoing edges from~$u$ in~$H^*$,
and at each state $u \in V \setminus V'$,
the outgoing neighborhood from~$u$ in~$H$ is the same
as that in~$G$.

Thirdly, we claim that $\FF_H(u) = \FF_G(u)$ for every $u \in V$.
We prove this by induction, showing that 
$\FF_H(u) \cap \Sigma^\ell = \FF_G(u) \cap \Sigma^\ell$
for every $\ell \ge 0$. The induction base $\ell = 0$ is trivial,
due to the empty word. As for the induction step,
the case $u \in V \setminus V'$ is immediate,
while the case $u \in V'$ follows from the addition
of the edges labeled by
words $\bldw \in \left( \FF_G(u) \setminus \FF_{H^*}(u) \right)
\cap \Sigma^{\length(u)}$
(whose prefixes do not label outgoing edges from~$u$) to~$H$.
Irreducibility of both~$G$ and~$H$ 
and the equality $\FF_H(u) = \FF_G(u)$
(for some state $u \in V$) then imply that $S(H) = S(G)$.

Finally, since $H^*$ is a subgraph of the subgraph~$H'$ of~$H$
that is induced by~$V'$, we get from~(\ref{eq:Kraftequality})
that~(\ref{eq:Kraftinequality}) holds for every $u \in V'$, namely,
$V'$ is a set of principal states in~$H$ \withrespectto~$n$.
\end{proof}

\begin{remark}
\label{rem:principalstates1}
It follows from Remark~\ref{rem:vlg}
that when~$H$ is (irreducible, deterministic, and) reduced,
its set of states is in effect a subset of the set of states of
the Shannon cover~$G$ of $S(H)$. Therefore,
any principal set of states~$V'$ of such an~$H$
consists of states of the Shannon cover of~$S(H)$.\qed
\end{remark}

\begin{remark}
\label{rem:principalstates2}
It follows from the proof of the ``if'' part 
of Theorem~\ref{thm:principalstates} that 
if an irreducible deterministic VLG~$H$ contains
a set~$V'$ of principal states \withrespectto~$n$,
then there is a deterministic
$(S,n)$-VLE $\encoder = (V',\tilde{E},\tilde{L})$
which is a subgraph of the subgraph $H'$ of~$H$
that is induced by~$V'$. Moreover, $V'$ can be assumed
to be the set of states of~$\encoder$ (although~$\encoder$
could then be reducible).\qed
\end{remark}
\fi

Given an ordinary irreducible deterministic graph~$G$
(with length-$1$ edges) and positive integers~$n$ and~$r$,
Franaszek described in~\cite{Franaszek2} a polynomial-time algorithm
for testing whether~$S(G)$ can be presented by a VLG~$H$ 
that satisfies the conditions of Theorem~\ref{thm:principalstates}
(see also \cite{Beal1}, \cite{Beal2}).
His algorithm, which is based on dynamic programming,
effectively finds a set of principal states~$V'$
(which is a subset of the states of~$G$)
and a subgraph~$H'$ of~$H$ that is induced by~$V'$
(the graph~$H$ itself is not explicitly constructed
in~\cite{Franaszek2}).

\begin{example}
\label{ex:2infprincipalstates}
\ifPAGELIMIT
    Let~$G$ and~$S$ be as in Example~\ref{ex:2inf},
\else
Let~$S$ be the $(2,\infty)$-RLL constraint,
which is presented by the graph~$G$ in \Figure~\ref{fig:2inf},
\fi
and take $n = 2$. Since there are no deterministic $(S^2,2)$-encoders,
we cannot have any principal states when $r = 1$.
\ifPAGELIMIT
\else

\fi
Selecting $r = 2$, an application
of Franaszek's algorithm from~\cite{Franaszek2} to~$G^2$
yields a (unique) set of principal
states~$V'$ consisting only of state $\gamma$.
Since \withoutlossofgenerality~$H$ is reduced, that implies
a unique subgraph~$H'$ that is induced by~$V'$,
which is the graph in \Figure~\ref{fig:2infvle}
(see~\cite[\S V]{Franaszek2}).\qed
\end{example}

\ifPAGELIMIT
    \section{Parity-preserving variable-length encoders}
    \label{sec:PPVLE}
    In this section,
\else
\section{Parity-preserving Kraft conditions}
\label{sec:Kraft}

In Section~\ref{sec:PPVLE},
\fi
we provide a formal definition of
a parity-preserving variable-length encoder.
A key ingredient in that definition
will be an adaptation of Theorem~\ref{thm:Kraft}
to the parity-preserving case, which we do
\ifPAGELIMIT
    in \Section~\ref{sec:Kraft};
\else
    next;
\fi
that adaptation may be of independent interest,
beyond its use in this work.
\ifPAGELIMIT
    We then state a necessary and sufficient condition
    for having a parity-preserving VLE which is deterministic.

    \subsection{Parity-preserving Kraft conditions}
    \label{sec:Kraft}
\else
The main result of this section is
Theorem~\ref{thm:lengthdist} below, whose statement
uses the following definition and notation.
\fi

Let~$\Upsilon$ be a finite alphabet
and assume a partition $\{ \Upsilon_0, \Upsilon_1 \}$ of~$\Upsilon$.
Given a finite list~$\List$ of nonempty words over~$\Upsilon$,
the \emph{(parity-preserving) length distribution} of~$\List$
is a pair of nonnegative integer sequences
$\left( \bldeta{=}(\eta_\ell)_{\ell \ge 1},
\bldomega{=}(\omega_\ell)_{\ell \ge 1} \right)$,
where
\[
\eta_\ell = \left| \List \cap (\Upsilon^\ell)_0 \right|
\;\; \textrm{and} \;\;
\omega_\ell = \left| \List \cap (\Upsilon^\ell)_1 \right|
\; , \;\;
\ell = 1, 2, 3, \cdots
\; .
\]
In words, $\eta_\ell$ (\respectively, $\omega_\ell$)
is the number of even (\respectively, odd) length-$\ell$
words in $\List$.

Given integers~$n$ and~$\ell > 0$
and an integer sequence $\bldmu = (\mu_i)_{i \ge 1}$
with finite support, we define the following functional:
\[
\Kraft_\ell(\bldmu,n)
= n^\ell - \sum_{i=1}^\ell \mu_i \cdot n^{\ell-i} \; .
\]
Given now positive integers~$n_0$, $n_1$, and~$\ell$
and a pair $(\bldeta,\bldomega)$ of nonnegative integer sequences,
each with finite support, define
\[
\Kraft_\ell^+ = \Kraft_\ell(\bldeta + \bldomega,n_0 + n_1)
\; \phantom{.}
\]
and
\[
\Kraft_\ell^- = \Kraft_\ell(\bldeta - \bldomega,n_0 - n_1) \; .
\]
Thus,
\ifPAGELIMIT
    \begin{eqnarray*}
    \Kraft_\ell^\pm 
    & = & \Kraft_\ell(\bldeta \pm \bldomega,n_0 \pm n_1) \nonumber \\
    & = &
    (n_0 \pm n_1)^\ell -
    \sum_{i=1}^\ell (\eta_i \pm \omega_i)(n_0 \pm n_1)^{\ell-i}
    \; .
    \end{eqnarray*}
    Denoting
\else
\begin{eqnarray}
\Kraft_\ell^\pm 
& = & \Kraft_\ell(\bldeta \pm \bldomega,n_0 \pm n_1) \nonumber \\
& = &
(n_0 \pm n_1)^\ell -
\sum_{i=1}^\ell (\eta_i \pm \omega_i)(n_0 \pm n_1)^{\ell-i} \nonumber \\
\label{eq:Kne}
& = &
(n_0 \pm n_1)^\ell \cdot
\left(
1 -
\sum_{i=1}^\ell \frac{\eta_i \pm \omega_i}{(n_0 \pm n_1)^i}
\right)
\; ,
\end{eqnarray}
where the last equality applies for $\Kraft_\ell^-$
only when $n_0 \ne n_1$; when $n_0 = n_1$ we have instead:
\begin{equation}
\label{eq:Keq}
\Kraft_\ell^- = \omega_\ell - \eta_\ell \; .
\end{equation}
Denoting hereafter
\fi
by $\length = \length(\bldeta,\bldomega)$ the largest index
in the union of the supports of~$\bldeta$ and~$\bldomega$,
the notation $\Kraft^\pm = \Kraft(\bldeta \pm \bldomega,n_0 \pm n_1)$
will stand
\ifPAGELIMIT
    for
\else
for\footnote{%
There is a slight abuse in
the notation $\Kraft(\bldeta - \bldomega,n_0 - n_1)$,
since sometimes $\length(\bldeta,\bldomega)$
is not uniquely determined from $\bldeta - \bldomega$.}
\fi
$\Kraft_\length^\pm$.
Thus, (\ref{eq:Kraft}) becomes
\ifPAGELIMIT
    \[
    \Kraft^+ = \Kraft^+(\bldeta + \bldmu, n_0 + n_1) = 0 \; ,
    \]
\else
\begin{equation}
\label{eq:Kplus}
\Kraft^+ = \Kraft^+(\bldeta + \bldmu, n_0 + n_1) = 0 \; ,
\end{equation}
\fi
where we have taken $n_0 = |\Upsilon_0|$ and $n_1 = |\Upsilon_1|$.

The next theorem provides a necessary and sufficient condition
for a pair $(\bldeta,\bldomega)$ to be a (parity-preserving)
length distribution of an exhaustive prefix-free list.

\begin{theorem}
\label{thm:lengthdist}
Given a partition $\{ \Upsilon_0, \Upsilon_1 \}$
of a finite alphabet $\Upsilon$
with $|\Upsilon_0| = n_0$ and $|\Upsilon_1| = n_1$,
let $(\bldeta,\bldomega)$ be
a pair of nonnegative integer sequences, each with finite support.
Then there exists an exhaustive prefix-free list
over~$\Upsilon$ with a length distribution $(\bldeta,\bldomega)$,
\ifandonlyif\ the following conditions hold.
\begin{list}{}{\settowidth{\labelwidth}{\textit{(a)}}%
               \settowidth{\leftmargin}{\textit{(a)}.....}}
\item[(a)]
$\displaystyle \Kraft^+ = 0$, and---
\item[(b)]
\ifPAGELIMIT
    $\Kraft^+_\ell \ge \left| \Kraft_\ell^- \right|$
    for every $\ell \ge 1$.
\else
for every $\ell \ge 1$:
\begin{equation}
\label{eq:conditionb}
\Kraft^+_\ell \ge \left| \Kraft_\ell^- \right| \; .
\end{equation}
\fi
\end{list}
\end{theorem}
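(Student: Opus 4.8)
The plan is to prove both directions at once by lifting the classical "survivor-counting" argument behind the Kraft inequality to a \emph{two-colored} tree, one color per parity. For a prefix-free list~$\List$, call a word~$\bldw$ over~$\Upsilon$ \emph{live} (at level $\ell = |\bldw|$) if none of its prefixes---including~$\bldw$ itself---lies in~$\List$; these are exactly the words still available for extension. Let $a_\ell$ and~$b_\ell$ denote the number of live length-$\ell$ words that are even and odd, \respectively. Since appending an even symbol preserves parity while appending an odd symbol flips it, and since prefix-freeness forces every length-$\ell$ word of~$\List$ to be a child of a live word of length $\ell{-}1$, I would first derive the coupled recursions
\begin{eqnarray*}
a_\ell &=& n_0\, a_{\ell-1} + n_1\, b_{\ell-1} - \eta_\ell \; , \\
b_\ell &=& n_1\, a_{\ell-1} + n_0\, b_{\ell-1} - \omega_\ell \; ,
\end{eqnarray*}
with $a_0 = 1$ and $b_0 = 0$ (the empty word being even).

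Next I would diagonalize by passing to sum and difference: adding gives $a_\ell + b_\ell = (n_0{+}n_1)(a_{\ell-1}{+}b_{\ell-1}) - (\eta_\ell{+}\omega_\ell)$, and subtracting gives $a_\ell - b_\ell = (n_0{-}n_1)(a_{\ell-1}{-}b_{\ell-1}) - (\eta_\ell{-}\omega_\ell)$. These are precisely the one-step recursions underlying the closed forms~(\ref{eq:Kne}), with the matching initial values $\Kraft^+_0 = \Kraft^-_0 = 1$. Hence I would establish the key dictionary
\[
\Kraft^+_\ell = a_\ell + b_\ell
\quad \textrm{and} \quad
\Kraft^-_\ell = a_\ell - b_\ell \; ,
\]
equivalently $a_\ell = (\Kraft^+_\ell + \Kraft^-_\ell)/2$ and $b_\ell = (\Kraft^+_\ell - \Kraft^-_\ell)/2$. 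With this identity, the nonnegativity of the counts and the exhaustiveness of the list translate directly into conditions~(a) and~(b).

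For necessity, given an exhaustive prefix-free list with length distribution $(\bldeta,\bldomega)$, the numbers $a_\ell,b_\ell$ are genuine cardinalities, so $a_\ell \ge 0$ and $b_\ell \ge 0$; these two inequalities are exactly $\Kraft^+_\ell \ge \pm\,\Kraft^-_\ell$, i.e.\ condition~(b). Moreover, a live word of maximal length~$\length$ would have no prefix in~$\List$ and (as all list words have length at most~$\length$) could not be a proper prefix of one either, contradicting exhaustiveness; hence $a_\length = b_\length = 0$ and $\Kraft^+ = a_\length + b_\length = 0$, which is condition~(a). For sufficiency, I would build~$\List$ greedily, level by level, starting from the single live (even) empty word: at level~$\ell$, the live words of length $\ell{-}1$ have exactly $n_0 a_{\ell-1} + n_1 b_{\ell-1}$ even children and $n_1 a_{\ell-1} + n_0 b_{\ell-1}$ odd children, and I place $\eta_\ell$ of the former and $\omega_\ell$ of the latter into~$\List$. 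Since the leftover live counts are exactly $a_\ell$ and~$b_\ell$, this choice is feasible at every level \ifandonlyif\ $a_\ell,b_\ell \ge 0$, which is guaranteed by~(b); the resulting list is prefix-free and has length distribution $(\bldeta,\bldomega)$ by construction, and condition~(a) together with the supports ending at~$\length$ forces $a_\ell = b_\ell = 0$ for all $\ell \ge \length$, i.e.\ no live words survive, which is exhaustiveness.

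The hard part will be the combinatorial bookkeeping that makes the two directions align: isolating the correct live-word recursions and recognizing that the sum/difference transform converts the two separate nonnegativity constraints $a_\ell,b_\ell \ge 0$ into the single symmetric bound $\Kraft^+_\ell \ge |\Kraft^-_\ell|$. One point I would handle explicitly is the degenerate case $n_0 = n_1$, where the difference recursion loses its geometric growth and $\Kraft^-_\ell$ collapses to $\omega_\ell - \eta_\ell$ as in~(\ref{eq:Keq}); the live-word identity $\Kraft^-_\ell = a_\ell - b_\ell$ nonetheless remains valid, so the argument goes through unchanged.
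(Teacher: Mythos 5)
Your proof is correct, and its core object is the same as the paper's: your ``live'' words are exactly the proper-prefix set~$\Prefix$ used in Lemma~\ref{lem:prefixdist} (for an exhaustive prefix-free list the two notions coincide), your counts $(a_\ell,b_\ell)$ are the paper's $(y_\ell,z_\ell)$, and your coupled recursions are precisely~(\ref{eq:prefixdist}). Where you genuinely depart from the paper is in how that recursion is tied to the functionals $\Kraft_\ell^\pm$. The paper solves~(\ref{eq:prefixdist}) \emph{backwards} (Lemma~\ref{lem:backwardrecurrence}), expressing $y_\ell \pm z_\ell$ as tail sums; it then needs Corollary~\ref{cor:backwardrecurrence} to match the initial condition $(y_0,z_0)=(1,0)$ with the Kraft equalities, Remark~\ref{rem:backwardrecurrence} to see that the backward solution is integral, and Lemma~\ref{lem:lengthdist} to convert the tail sums into $\Kraft_\ell^\pm$---an identification that is valid only conditioned on conditions~(i)--(ii') there. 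You instead run the recursion \emph{forwards}: $a_\ell \pm b_\ell$ and $\Kraft_\ell^\pm$ obey the same first-order recurrence with the same value~$1$ at $\ell=0$, so $\Kraft_\ell^\pm = a_\ell \pm b_\ell$ holds \emph{unconditionally} (including when $n_0=n_1$), integrality is automatic, and conditions (a)/(b) become exhaustiveness/nonnegativity in a single step. This collapses the paper's chain of intermediate statements into one induction; what the backward form buys in exchange is the explicit tail-sum formulas (\ref{eq:backwardrecurrence+})--(\ref{eq:backwardrecurrence-}), which the paper reuses in Appendix~\ref{sec:lengthdist} to prove independence of the inequalities in condition~(b). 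One step you should spell out in a final write-up: in the sufficiency direction, ``no live words at level~$\length$'' yields exhaustiveness because any word with no prefix in~$\List$ extends to a length-$\length$ word that is again live; that one-line extension argument is where exhaustiveness is actually verified.
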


\ifPAGELIMIT
    The proof of the proposition can be found in~\cite{RS2}.
\else
\begin{remark}
\label{rem:lengthdist}
For $\ell \ge \length = \length(\bldeta,\bldomega)$
we have $\Kraft_\ell^\pm = (n_0 \pm n_1)^{\ell-\length} K^\pm$;
hence, condition~(a) is equivalent to requiring that $\Kraft_\ell^+ = 0$
for \emph{any} $\ell \ge \length$.
Conditioning on~(a), the inequality~(\ref{eq:conditionb})
for $\ell = \length$ is equivalent to
\begin{equation}
\label{eq:Kpm}
\Kraft^+ = \Kraft^- = 0 \; ,
\end{equation}
so it suffices to state condition~(b) 
only for $1 \le \ell \le \length$: for larger~$\ell$,
the inequality~(\ref{eq:conditionb}) follows from~(\ref{eq:Kpm})
(and holds with equality).\qed
\end{remark}

We prove Theorem~\ref{thm:lengthdist}
through a sequence of intermediary results, starting with
the following equivalent formulation of
conditions~(a) and~(b) (which is somewhat more explicit).

\begin{lemma}
\label{lem:lengthdist}
Conditions~(a) and~(b) in Theorem~\ref{thm:lengthdist}
are equivalent to the following conditions.
\begin{list}{}{\settowidth{\labelwidth}{\textit{(iii)}}}
\item[{\textit{(i)}}]
$\displaystyle
\sum_{\ell \ge 1} \frac{\eta_\ell + \omega_\ell}{(n_0 + n_1)^\ell} = 1
\; ,
$
\item[{\textit{(ii)}}]
$\displaystyle
\sum_{\ell \ge 1} \frac{\eta_\ell - \omega_\ell}{(n_0 - n_1)^\ell} = 1
\; ,
$
whenever $n_0 \ne n_1$, and---
\item[{\textit{(iii)}}]
for every $\ell \ge 1$:
\begin{equation}
\label{eq:lengthdist}
\sum_{i\ge 1} \frac{\eta_{\ell+i} + \omega_{\ell+i}}{(n_0 + n_1)^i}
\ge
\left\{
\renewcommand{\arraystretch}{1.8}
\begin{array}{ll}
\left| \eta_\ell - \omega_\ell \right|
& \textrm{if $n_0 = n_1$} \\
\displaystyle
\Bigl|
\sum_{i \ge 1} \frac{\eta_{\ell+i} - \omega_{\ell+i}}{(n_0 - n_1)^i}
\Bigr|
& \textrm{if $n_0 \ne n_1$} \\
\end{array}
\right.
\! .
\end{equation}
\end{list}
\end{lemma}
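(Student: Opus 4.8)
The plan is to prove the two-sided implication by first disposing of the easy ``plus'' correspondence and then organizing everything around a single algebraic identity that rewrites the truncated functionals $\Kraft_\ell^\pm$ as tails of the infinite series appearing in (i)--(iii). Throughout I would keep in mind that $\eta_i = \omega_i = 0$ for $i > \length$, so every sum over $\ell \ge 1$ in (i)--(iii) is in fact finite, ranging over $1 \le \ell \le \length$. With this observation, the equivalence (a)~$\Leftrightarrow$~(i) is immediate: evaluating (\ref{eq:Kne}) at $\ell = \length$ gives $\Kraft^+ = (n_0+n_1)^\length \bigl(1 - \sum_{\ell \ge 1}(\eta_\ell+\omega_\ell)/(n_0+n_1)^\ell\bigr)$, and since $n_0 + n_1 > 0$ this vanishes exactly when (i) holds.

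The key step is the identity that, assuming (i), for every $\ell \ge 1$
\[
\Kraft_\ell^+ = \sum_{i \ge 1} \frac{\eta_{\ell+i}+\omega_{\ell+i}}{(n_0+n_1)^i},
\]
obtained by writing $1 - \sum_{i=1}^\ell (\eta_i+\omega_i)/(n_0+n_1)^i$ as the tail $\sum_{i>\ell}(\eta_i+\omega_i)/(n_0+n_1)^i$ and multiplying through by $(n_0+n_1)^\ell$ via (\ref{eq:Kne}). This shows that the left-hand side of (\ref{eq:lengthdist}) is precisely $\Kraft_\ell^+$. The analogous manipulation, valid only once $\Kraft^- = 0$ is known and only when $n_0 \ne n_1$, converts $\Kraft_\ell^-$ into $\sum_{i \ge 1}(\eta_{\ell+i}-\omega_{\ell+i})/(n_0-n_1)^i$, matching the right-hand side of (\ref{eq:lengthdist}) in that case; when $n_0 = n_1$, equation (\ref{eq:Keq}) already gives $|\Kraft_\ell^-| = |\eta_\ell - \omega_\ell|$, which is the right-hand side in the remaining case.

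The subtle point, and the step I would carry out most carefully, is the derivation of (ii), since it is a hypothesis-free consequence of (a) and (b) rather than an independent input. For the forward implication I would evaluate (b) at $\ell = \length$: because $\Kraft^+ = 0$ by (a), the inequality $\Kraft_\length^+ \ge |\Kraft_\length^-|$ forces $\Kraft^- = 0$, which (through (\ref{eq:Kne}) when $n_0 \ne n_1$) is exactly (ii). Only after $\Kraft^- = 0$ is in hand may I legitimately invoke the tail-rewriting of $\Kraft_\ell^-$; this ordering is what makes the conditions interlock. With (i) and (ii) established, the remaining instances of (b) become, termwise, precisely the inequalities (\ref{eq:lengthdist}) of (iii).

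For the converse I would simply read the same identities backwards: (i) yields (a) by the first paragraph, and since (i) and (ii) let me replace $\Kraft_\ell^+$ and $|\Kraft_\ell^-|$ by the two sides of (\ref{eq:lengthdist}), condition (iii) delivers (b). I would split the whole argument into the cases $n_0 = n_1$ and $n_0 \ne n_1$, since the formula for $\Kraft_\ell^-$ differs between them, and I would verify the boundary instance $\ell = \length$ separately: there (iii) reduces to $\eta_\length = \omega_\length$ in the equal case, consistent with $\Kraft^+ = \Kraft^- = 0$ as noted in Remark~\ref{rem:lengthdist}. I expect the main obstacle to be purely bookkeeping—tracking which conditions are available when converting each $\Kraft_\ell^\pm$ into tail form—rather than any genuine analytic difficulty.
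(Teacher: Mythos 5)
Your proposal is correct and follows essentially the same route as the paper's proof: the same tail-rewriting identity $\Kraft_\ell^\pm = \sum_{i\ge 1}(\eta_{\ell+i}\pm\omega_{\ell+i})/(n_0\pm n_1)^i$ (valid once $\Kraft^+=\Kraft^-=0$ is in hand), the same derivation of $\Kraft^-=0$ (hence (ii)) from condition~(b) at $\ell=\length$ combined with~(a), and the same use of~(\ref{eq:Keq}) to handle $n_0=n_1$. The paper merely packages your case split by restating~(ii) as a condition~(ii$'$) covering both cases; the mathematical content is identical.
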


\begin{proof}
Clearly, conditions~(a) and~(i) are equivalent.
Next, we observe that for $n_0 = n_1$,
the inequality~(\ref{eq:lengthdist})
implies that $\eta_\length = \omega_\length$
for $\length = \length(\bldeta,\bldomega)$.
Hence, the following restatement of condition~(ii) does not
effectively change conditions (i)---(iii):
\begin{list}{}{\settowidth{\labelwidth}{\textit{(ii')}}}
\item[{\textit{(ii')}}]
$\displaystyle
\left\{
\renewcommand{\arraystretch}{1.8}
\begin{array}{ll}
\displaystyle
\omega_\length - \eta_\length = 0
& \textrm{if $n_0 = n_1$} \\
\displaystyle
\sum_{\ell \ge 1} \frac{\eta_\ell - \omega_\ell}{(n_0 - n_1)^\ell} = 1
& \textrm{if $n_0 \ne n_1$}
\end{array}
\right.
\; .
$
\end{list}
By~(\ref{eq:Kne}) and~(\ref{eq:Keq})
it follows that conditions~(i) and~(ii') are equivalent to
requiring $\Kraft^+ = \Kraft^- = 0$.
Moreover, conditioning on~(i) and~(ii')
(or conditioning on $0 = \Kraft^+ \ge \left| \Kraft^- \right|$), we have
\[
\sum_{i=1}^\ell \frac{\eta_i \pm \omega_i}{(n_0 \pm n_1)^i}
+
\sum_{i \ge 1} \frac{\eta_{\ell+i}
\pm \omega_{\ell+i}}{(n_0 \pm  n_1)^{i+\ell}}
= 1 \; .
\]
Therefore,
\begin{eqnarray*}
\sum_{i \ge 1} \frac{\eta_{\ell+i} \pm\omega_{\ell+i}}{(n_0 \pm  n_1)^i}
& = &
(n_0 \pm n_1)^\ell 
\left(
1 -
\sum_{i=1}^\ell \frac{\eta_i \pm \omega_i}{(n_0 \pm n_1)^i}
\right)  \\
& = &
\Kraft_\ell^\pm \; ,
\end{eqnarray*}
and, so, (\ref{eq:lengthdist}) is equivalent to
\[
\Kraft_\ell^+ \ge \left| \Kraft_\ell^- \right|
\]
(even when $n_0 = n_1$).
We conclude that conditions (i)--(iii)
are equivalent to conditions (a)--(b).
\end{proof}

\begin{lemma}
\label{lem:prefixdist}
Given a partition $\{ \Upsilon_0, \Upsilon_1 \}$
of a finite alphabet~$\Upsilon$
with $|\Upsilon_0| = n_0$ and $|\Upsilon_1| = n_1$,
let $(\bldeta,\bldomega)$ be
a pair of nonnegative integer sequences, each with finite support.
Then there exists an exhaustive prefix-free list
over~$\Upsilon$ with a length distribution $(\bldeta,\bldomega)$,
\ifandonlyif\ there exists a pair of nonnegative integer sequences
$\left(\bldy{=}(y_\ell)_{\ell\ge 0},\bldz{=}(z_\ell)_{\ell\ge 0}\right)$
with finite support such that for every $\ell \ge 1$:
\begin{equation}
\label{eq:prefixdist}
\renewcommand{\arraystretch}{1.2}
\begin{array}{rcl}
\displaystyle
\eta_\ell   & = & n_0 y_{\ell-1} + n_1 z_{\ell-1} - y_\ell \\
\displaystyle
\omega_\ell & = & n_1 y_{\ell-1} + n_0 z_{\ell-1} - z_\ell
\; ,
\end{array}
\end{equation}
where $y_0 \equiv 1$ and $z_0 \equiv 0$.
\end{lemma}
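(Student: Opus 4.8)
The plan is to pass to the tree representation of an exhaustive prefix-free list and to read off $(\bldy,\bldz)$ as the census of the internal (non-leaf) nodes, split by parity. Concretely, I would identify such a list~$\List$ with the finite $\Upsilon$-ary tree whose nodes are the prefixes of words of~$\List$. Because $\List$ is prefix-free and exhaustive, this tree is \emph{complete}, meaning every node is either a leaf (a word of~$\List$) or an internal node all of whose $|\Upsilon|=n_0+n_1$ one-symbol extensions are again nodes of the tree. For each depth $\ell\ge 0$ I let $y_\ell$ (\respectively, $z_\ell$) be the number of even (\respectively, odd) internal nodes at depth~$\ell$. The root is the empty word, which is even and, since $\List$ consists of nonempty words, internal; hence $y_0=1$ and $z_0=0$, and finiteness of~$\List$ bounds the depth of the tree, so $(\bldy,\bldz)$ has finite support.

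For the forward direction I would verify the recurrences by the elementary parity arithmetic of appending a symbol: appending a symbol of $\Upsilon_0$ preserves parity, while appending a symbol of $\Upsilon_1$ flips it. Hence the even nodes at depth~$\ell$ are exactly the even-symbol children of the $y_{\ell-1}$ even internal nodes together with the odd-symbol children of the $z_{\ell-1}$ odd internal nodes, a total of $n_0 y_{\ell-1}+n_1 z_{\ell-1}$; each such node is either a leaf of~$\List$ (counted by $\eta_\ell$) or an internal node (counted by $y_\ell$), giving $\eta_\ell=n_0 y_{\ell-1}+n_1 z_{\ell-1}-y_\ell$. The odd-node count $n_1 y_{\ell-1}+n_0 z_{\ell-1}=\omega_\ell+z_\ell$ yields the second relation symmetrically.

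The backward direction is the construction I expect to carry the real work. Given $(\bldy,\bldz)$ with finite support satisfying~(\ref{eq:prefixdist}) together with $y_0=1$, $z_0=0$, I would build the tree level by level. Having already fixed the $y_{\ell-1}$ even and $z_{\ell-1}$ odd internal nodes at depth~$\ell-1$, I form all of their children; by the parity arithmetic above there are exactly $n_0 y_{\ell-1}+n_1 z_{\ell-1}=\eta_\ell+y_\ell$ even children and $n_1 y_{\ell-1}+n_0 z_{\ell-1}=\omega_\ell+z_\ell$ odd ones, the equalities being precisely the recurrences. Since $\eta_\ell,\omega_\ell,y_\ell,z_\ell$ are all nonnegative, I may arbitrarily declare $\eta_\ell$ of the even children and $\omega_\ell$ of the odd children to be leaves (placing them in~$\List$) and keep the remaining $y_\ell$ even and $z_\ell$ odd children as internal nodes to be expanded at the next level. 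Finite support of $(\bldy,\bldz)$ guarantees that beyond some depth no internal nodes remain, so the process halts with a finite complete tree.

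What remains is to confirm that the resulting leaf set is genuinely an exhaustive prefix-free list with length distribution $(\bldeta,\bldomega)$, and this is the only point needing care. Prefix-freeness is immediate, since distinct leaves of a tree are pairwise non-comparable, and the length distribution holds by construction, as exactly $\eta_\ell$ even and $\omega_\ell$ odd leaves are placed at each depth~$\ell$. Exhaustiveness follows from completeness of the tree: tracing any word $\bldw$ from the root, the path either meets a leaf (so $\bldw$ has a prefix in~$\List$) or stays among internal nodes for the full length of~$\bldw$ (so $\bldw$ is itself a prefix of some leaf word), and it can never ``fall off'' the tree because every internal node retains all $n_0+n_1$ children. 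Thus the main obstacle is essentially bookkeeping: matching the inductive leaf/internal split to the prescribed counts and checking that completeness of the tree delivers exhaustiveness.
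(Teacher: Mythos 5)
Your proof is correct, and its forward half is the paper's argument in different clothing: your census of internal nodes by depth and parity is exactly the paper's count of the set $\Prefix$ of \emph{proper prefixes} of words in $\List$, and your observation that every child of an internal node is either a leaf or internal is the paper's identity $\left\{ \bldw s : s \in \Upsilon,\, \bldw \in \Prefix \right\} = \List \cup \Prefix$. Where you genuinely diverge is the converse. The paper argues by induction on $\length(\bldeta,\bldomega)$: it collapses the deepest level, replacing $\eta_{\length-1}$ by $\eta_{\length-1}+y_{\length-1}$ and $\omega_{\length-1}$ by $\omega_{\length-1}+z_{\length-1}$ while zeroing $y_{\length-1},z_{\length-1}$, applies the induction hypothesis to obtain a shallower exhaustive prefix-free list $\List'$, and then expands $y_{\length-1}$ even and $z_{\length-1}$ odd words of $\List'$ into all their one-symbol extensions; prefix-freeness and exhaustiveness of the result are inherited from $\List'$ with only the local modification to check. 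You instead build the tree directly, top-down and level by level, using~(\ref{eq:prefixdist}) plus nonnegativity of all four sequences to split the children at each depth into $\eta_\ell$ (\respectively, $\omega_\ell$) leaves and $y_\ell$ (\respectively, $z_\ell$) internal nodes. Both routes rest on the same two facts---the counting identity and nonnegativity---so the difference is organizational: yours is more direct and avoids the sequence surgery, at the price of verifying termination and exhaustiveness by hand, which you do correctly (finite support together with the recurrence forces all counts to vanish beyond the last level, and in a finite complete tree every internal node has a leaf descendant, so no word can fall outside the tree).
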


\begin{proof}
We start with proving the ``only if'' part.
Let $(\bldeta,\bldomega)$ be the length distribution
of an exhaustive prefix-free list~$\List$, and let~$\Prefix$ denote
the set of words over~$\Upsilon$
which are \emph{proper} prefixes of words in~$\List$;
namely, a word~$\bldw$ is in~$\Prefix$
\ifandonlyif\ there exists a nonempty word~$\bldw'$ over~$\Upsilon$
such that $\bldw \bldw' \in \List$
(in particular, $\Prefix$ always contains the empty word).
Since $\List$ is prefix-free, it cannot contain any
of the (not necessarily proper) prefixes of the words in $\Prefix$;
in particular, $\List \cap \Prefix = \emptyset$.
On the other hand, since~$\List$ is exhaustive, 
for any $s \in \Upsilon$
and $\bldw \in \Prefix$,
either $\bldw s \in \List$
or $\bldw s \in \Prefix$ (but not both).
Hence,
\[
\left\{ \bldw s \;:\: 
s \in \Upsilon, \bldw \in \Prefix \right\} =
\List  \cup \Prefix
\]
and, so, for every $\ell \ge 1$ and $\parity \in \{ 0, 1 \}$:
\begin{equation}
\label{eq:prefix}
\left\{ \bldw s \;:\: 
s \in \Upsilon, \bldw \in \Prefix \right\} \cap (\Upsilon^\ell)_\parity
= \left( \List \cap (\Upsilon^\ell)_\parity \right)
\cup
\left( \Prefix \cap (\Upsilon^\ell)_\parity \right) .
\end{equation}
For every $\ell \ge 0$,
let $y_\ell$ (\respectively, $z_\ell$) denote the number of
length-$\ell$ even (\respectively, odd) words in $\Prefix$:
\begin{eqnarray*}
y_\ell   & = & \left| \Prefix \cap (\Upsilon^\ell)_0 \right| \\
z_\ell   & = & \left| \Prefix \cap (\Upsilon^\ell)_1 \right| \; ,
\end{eqnarray*}
where $y_0 = 1$ and $z_0 = 0$ (corresponding to the empty word,
which is even). From~(\ref{eq:prefix}) we then get:
\begin{eqnarray*}
n_0 y_{\ell-1} + n_1 z_{\ell-1} & = & \eta_\ell + y_\ell \\
n_1 y_{\ell-1} + n_0 z_{\ell-1} & = & \omega_\ell + z_\ell \; ,
\end{eqnarray*}
thereby completing the proof of the ``only if'' part.

Next, we turn to proving the ``if'' part by induction
on the value of $\length = \length(\bldeta,\bldomega)$.
We assume that~(\ref{eq:prefixdist}) holds for some
pair $(\bldy,\bldz)$ with finite support,
and we let~$\length^*$ be the largest index in the union of
the supports of~$\bldy$ and~$\bldz$.
It follows from~(\ref{eq:prefixdist}) that $\length = \length^* + 1$,
i.e., $y_\ell = z_\ell = 0$ for $\ell \ge \length$.
For the induction base $\length = 1$ we have $\length^* = 0$
and, so, $\eta_1 = n_0$ and $\omega_1 = n_1$, corresponding to
$\List = \Upsilon$.

Suppose now that $\length > 1$ and define
pairs $(\bldeta',\bldomega')$ and $(\bldy',\bldz')$ as follows:
\[
\eta'_\ell =
\left\{
\begin{array}{ll}
\eta_\ell                          & \textrm{if $\ell < \length-1$} \\
\eta_{\length-1} + y_{\length-1}   & \textrm{if $\ell = \length-1$} \\
0                                  & \textrm{if $\ell > \length-1$}
\end{array}
\right.
\; ,
\]
\[
\omega'_\ell =
\left\{
\begin{array}{ll}
\omega_\ell                        & \textrm{if $\ell < \length-1$} \\
\omega_{\length-1} + z_{\length-1} & \textrm{if $\ell = \length-1$} \\
0                                  & \textrm{if $\ell > \length-1$}
\end{array}
\right.
\; ,
\]
and
\[
y'_\ell =
\left\{
\begin{array}{ll}
y_\ell    & \textrm{if $\ell \ne \length-1$} \\
0         & \textrm{if $\ell = \length-1$}
\end{array}
\right.
\; ,
\quad
z'_\ell =
\left\{
\begin{array}{ll}
z_\ell    & \textrm{if $\ell \ne \length-1$} \\
0         & \textrm{if $\ell = \length-1$}
\end{array}
\right.
\; .
\]
It can be easily verified that those pairs
satisfy~(\ref{eq:prefixdist}), namely, for every $\ell \ge 1$:
\[
\renewcommand{\arraystretch}{1.2}
\begin{array}{rcl}
\displaystyle
\eta'_\ell   & = & n_0 y'_{\ell-1} + n_1 z'_{\ell-1} - y'_\ell \\
\displaystyle
\omega'_\ell & = & n_1 y'_{\ell-1} + n_0 z'_{\ell-1} - z'_\ell
\; .
\end{array}
\]
Moreover,
$\length(\bldeta',\bldomega') < \length = \length(\bldeta,\bldomega)$.
Hence, by the induction hypothesis, there exists
an exhaustive prefix-free list~$\List'$
whose length distribution is $(\bldeta',\bldomega')$.
We construct from~$\List'$ a new list~$\List$ as follows.
We select a subset
$\Prefix_{\length-1} \subseteq \List' \cap \Upsilon^{\length-1}$
consisting of $y_{\length-1}$ arbitrary words out of
the $\eta'_{\length-1} = \eta_{\length-1} + y_{\length-1}$
words in $\List' \cap (\Upsilon^{\length-1})_0$,
and $z_{\length-1}$ additional words out of
the $\omega'_{\length-1} = \omega_{\length-1} + z_{\length-1}$
words in $\List' \cap (\Upsilon^{\length-1})_1$.
We then replace each word $\bldw \in \Prefix_{\length-1}$
by the $n_0 + n_1$  words $\bldw s$, where $s \in \Upsilon$,
i.e.,
\[
\List =
\left( \List' \setminus \Prefix_{\length-1} \right)
\cup
\left\{
\bldw s \;:\; s \in \Upsilon, \bldw \in \Prefix_{\length-1} \right\}
\; .
\]
The list~$\List$ is both exhaustive and prefix-free, and
it satisfies:
\[
\left| \List \cap (\Upsilon^\ell)_0 \right| =
\left\{
\begin{array}{ll}
\eta'_\ell                            &\textrm{if $\ell < \length-1$} \\
\eta'_{\length-1} - y_{\length-1}     &\textrm{if $\ell = \length-1$} \\
n_0 y_{\length-1} + n_1 z_{\length-1} &\textrm{if $\ell = \length$} \\
0                                     &\textrm{if $\ell > \length$}
\end{array}
\right.
\; ,
\]
namely, 
$\left| \List \cap (\Upsilon^\ell)_0 \right| = \eta_\ell$
for all $\ell \ge 1$.
In a similar way we also have
$\left| \List \cap (\Upsilon^\ell)_1 \right| = \omega_\ell$,
thereby completing the proof of the ``if'' part.
\end{proof}

\begin{remark}
\label{rem:prefixdist}
  From~(\ref{eq:prefixdist}) we get
\begin{eqnarray*}
\sum_{\ell \ge 1} \frac{\eta_\ell + \omega_\ell}{(n_0 + n_1)^\ell}
& = &
\sum_{\ell \ge 1}
\frac{(n_0{+}n_1)(y_{\ell-1}{+}z_{\ell-1})
- (y_\ell{+}z_\ell)}{(n_0 + n_1)^\ell} \\
& = &
\sum_{\ell \ge 1}
\left(
\frac{y_{\ell-1} + z_{\ell-1}}{(n_0 + n_1)^{\ell-1}}
- \frac{y_\ell + z_\ell}{(n_0 + n_1)^\ell} \right) \\
& = &
1 \; ,
\end{eqnarray*}
consistently with~(\ref{eq:Kraft}) (or with~(\ref{eq:Kplus})).\qed
\end{remark}

\begin{lemma}
\label{lem:backwardrecurrence}
Given positive integers~$n_0$ and~$n_1$,
let $(\bldeta,\bldomega)$ be a pair of
nonnegative integer sequences, each with finite support.
Then~(\ref{eq:prefixdist}) is satisfied
by a unique pair of real sequences
$\left(\bldy{=}(y_\ell)_{\ell\ge 0},\bldz{=}(z_\ell)_{\ell\ge 0}\right)$
of finite support,
and the values~$y_\ell$ and~$z_\ell$ are determined
for every $\ell \ge 1$ by
(the unique solution for $(y_\ell,z_\ell)$ of)
the following two equations:
\begin{equation}
\label{eq:backwardrecurrence+}
y_\ell + z_\ell =
\sum_{i \ge 1} \frac{\eta_{\ell+i} + \omega_{\ell+i}}{(n_0 + n_1)^i}
\end{equation}
and
\begin{equation}
\label{eq:backwardrecurrence-}
y_\ell - z_\ell =
\left\{
\renewcommand{\arraystretch}{1.5}
\begin{array}{ll}
\omega_\ell - \eta_\ell 
& \textrm{if $n_0 = n_1$} \\
\displaystyle
\sum_{i \ge 1} \frac{\eta_{\ell+i} - \omega_{\ell+i}}{(n_0 - n_1)^i}
& \textrm{if $n_0 \ne n_1$} \\
\end{array}
\right.
\; .
\end{equation}
\end{lemma}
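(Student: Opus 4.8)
The plan is to decouple the two coupled recurrences in~(\ref{eq:prefixdist}) into two independent scalar recurrences by passing to the sum and difference of the unknown sequences. Setting $a_\ell = y_\ell + z_\ell$ and $b_\ell = y_\ell - z_\ell$ and adding, \respectively\ subtracting, the two lines of~(\ref{eq:prefixdist}) yields, for every $\ell \ge 1$,
\[
a_\ell = (n_0 + n_1)\, a_{\ell-1} - (\eta_\ell + \omega_\ell)
\qquad\textrm{and}\qquad
b_\ell = (n_0 - n_1)\, b_{\ell-1} - (\eta_\ell - \omega_\ell) \; .
\]
Since $y_\ell$ and $z_\ell$ are recovered from $a_\ell$ and $b_\ell$ simply as $y_\ell = (a_\ell + b_\ell)/2$ and $z_\ell = (a_\ell - b_\ell)/2$, it suffices to prove that each scalar recurrence has a unique finite-support real solution and to write that solution in closed form; those closed forms are precisely the right-hand sides of~(\ref{eq:backwardrecurrence+}) and~(\ref{eq:backwardrecurrence-}), and solving the resulting $2\times 2$ system for $(y_\ell,z_\ell)$ is then immediate.

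For \emph{uniqueness}, I would note that the difference of any two finite-support solutions of a given scalar recurrence is a finite-support solution of the associated homogeneous recurrence, namely $a_\ell = (n_0+n_1) a_{\ell-1}$, \respectively\ $b_\ell = (n_0-n_1) b_{\ell-1}$. Such a homogeneous solution satisfies $a_\ell = (n_0+n_1)^{\ell-\ell_0} a_{\ell_0}$ for $\ell \ge \ell_0$ (and similarly for $b$), so if it is nonzero at any index it is nonzero at all larger indices; as $n_0 + n_1 \ne 0$ and, in the nondegenerate case, $n_0 - n_1 \ne 0$, the only finite-support homogeneous solution is the zero sequence. For \emph{existence}, I would verify directly that the backward sums
\[
a_\ell = \sum_{i \ge 1} \frac{\eta_{\ell+i} + \omega_{\ell+i}}{(n_0+n_1)^i}
\qquad\textrm{and}\qquad
b_\ell = \sum_{i \ge 1} \frac{\eta_{\ell+i} - \omega_{\ell+i}}{(n_0-n_1)^i}
\]
satisfy the recurrences: multiplying $a_{\ell-1}$ by $n_0+n_1$ and reindexing peels off the $i=1$ term and reproduces the recurrence, and likewise for $b_\ell$. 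Because $\bldeta$ and $\bldomega$ have finite support, both sums vanish for $\ell \ge \length(\bldeta,\bldomega)$, so the solutions indeed have finite support.

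The only case needing separate handling is the degenerate one, $n_0 = n_1$, in which the difference recurrence collapses to $b_\ell = \omega_\ell - \eta_\ell$ because the term $(n_0 - n_1) b_{\ell-1}$ drops out; this formula already exhibits the (trivially finite-support and unique) solution and matches the first branch of~(\ref{eq:backwardrecurrence-}), so no division by $n_0 - n_1$ is ever performed there. The proof is thus essentially bookkeeping rather than a deep argument; the one point to watch is that the uniqueness step relies only on $n_0 - n_1 \ne 0$ and not on any decay of $(n_0-n_1)^\ell$, which matters when $|n_0 - n_1| = 1$: there the homogeneous solution neither grows nor decays, yet it still fails to have finite support unless it is identically zero.
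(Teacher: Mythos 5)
Your proposal is correct and follows essentially the same route as the paper: both decouple~(\ref{eq:prefixdist}) into sum and difference sequences $y_\ell \pm z_\ell$, use the finite-support condition to anchor the recursion, and treat the degenerate case $n_0 = n_1$ by the collapsed identity $y_\ell - z_\ell = \omega_\ell - \eta_\ell$. The only difference is presentational: the paper obtains~(\ref{eq:backwardrecurrence+})--(\ref{eq:backwardrecurrence-}) by repeated backward substitution starting from the vanishing tail, whereas you split the same computation into an explicit uniqueness step (via the homogeneous recurrence) and an existence step (direct verification of the backward sums), which is a slightly more formal bookkeeping of the identical argument.
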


\begin{proof}
Replacing~$\ell$ by~$\ell+1$ in~(\ref{eq:prefixdist})
and then adding (\respectively, subtracting) the two equations
in~(\ref{eq:prefixdist}), we obtain:
\[
\eta_{\ell+1} \pm \omega_{\ell+1} =
(n_0 \pm n_1) (y_\ell \pm z_\ell) - (y_{\ell+1} \pm z_{\ell+1}) \; .
\]
This, in turn, yields the following backward recurrence
for the values of $y_\ell \pm z_\ell$
(where we assume that $n_0 \ne n_1$ in the recurrence for
$y_\ell - z_\ell$):
\[
y_\ell \pm z_\ell =
\frac{\eta_{\ell+1} \pm \omega_{\ell+1}}{n_0 \pm n_1}
+
\frac{y_{\ell+1} \pm z_{\ell+1}}{n_0 \pm n_1} \; .
\]
Finally, we get~(\ref{eq:backwardrecurrence+})
and~(\ref{eq:backwardrecurrence-})
by repeated substitution, assuming the initial condition
$y_\ell = z_\ell = 0$ for any sufficiently large
$\ell \ge \length(\bldeta,\bldomega)$.
When $n_0 = n_1$, we get~(\ref{eq:backwardrecurrence-})
directly simply by
subtracting the two equations in~(\ref{eq:prefixdist}).
\end{proof}

\begin{corollary}
\label{cor:backwardrecurrence}
Using the notation of Lemma~\ref{lem:backwardrecurrence},
the pair $(\bldy,\bldz)$ satisfies~(\ref{eq:prefixdist})
for $\ell = 1$ with $(y_0,z_0) = (1,0)$, \ifandonlyif\
\begin{equation}
\label{eq:backwardrecurrence1}
\sum_{\ell \ge 1} \frac{\eta_\ell + \omega_\ell}{(n_0 + n_1)^\ell} = 1
\phantom{\; .}
\end{equation}
and (when $n_0 \ne n_1$)
\begin{equation}
\label{eq:backwardrecurrence2}
\sum_{\ell \ge 1} \frac{\eta_\ell - \omega_\ell}{(n_0 - n_1)^\ell} = 1
\; .
\end{equation}
\end{corollary}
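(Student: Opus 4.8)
The plan is to use Lemma~\ref{lem:backwardrecurrence} as a black box. That lemma already determines $y_\ell$ and $z_\ell$ for every $\ell \ge 1$ through the closed forms~(\ref{eq:backwardrecurrence+}) and~(\ref{eq:backwardrecurrence-}), and those closed forms were obtained precisely by forcing~(\ref{eq:prefixdist}) to hold at every index $\ge 2$. Hence the only instance of~(\ref{eq:prefixdist}) still to be verified is the one at $\ell = 1$, which is also the only equation that involves the boundary pair $(y_0,z_0)$. I would therefore reduce the corollary to checking exactly when the lemma's values $y_1,z_1$ are compatible with~(\ref{eq:prefixdist}) at $\ell=1$ under the prescribed initial condition $(y_0,z_0)=(1,0)$. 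The natural move is to add and subtract the two scalar equations in~(\ref{eq:prefixdist}) at $\ell=1$, decoupling them into an ``additive'' condition on $y_1+z_1$ and a ``subtractive'' condition on $y_1-z_1$, so as to match the two closed forms supplied by the lemma.

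For the additive part, summing the two equations at $\ell=1$ and substituting $(y_0,z_0)=(1,0)$ gives
\[
\eta_1 + \omega_1 = (n_0+n_1) - (y_1+z_1) \; ,
\]
so the equation holds \ifandonlyif\ $y_1+z_1 = (n_0+n_1) - (\eta_1+\omega_1)$. Inserting the closed form~(\ref{eq:backwardrecurrence+}) for $y_1+z_1$ and dividing both sides by $n_0+n_1$, a short reindexing folds the isolated $\eta_1+\omega_1$ term into the sum, and the requirement collapses to~(\ref{eq:backwardrecurrence1}). This part is insensitive to whether $n_0=n_1$, which is why~(\ref{eq:backwardrecurrence1}) is always demanded.

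For the subtractive part, subtracting the two equations and substituting $(y_0,z_0)=(1,0)$ gives
\[
\eta_1 - \omega_1 = (n_0-n_1) - (y_1-z_1) \; ,
\]
and here the two branches of~(\ref{eq:backwardrecurrence-}) behave differently. When $n_0 \ne n_1$, substituting the corresponding closed form for $y_1-z_1$ and repeating the divide-and-reindex manipulation reduces the requirement to~(\ref{eq:backwardrecurrence2}). When $n_0 = n_1$, the lemma instead yields $y_1-z_1 = \omega_1-\eta_1$; plugging this in, together with $n_0-n_1=0$, turns the displayed equation into $\eta_1-\omega_1 = \eta_1-\omega_1$, so no constraint is imposed---precisely accounting for the conditional ``(when $n_0 \ne n_1$)'' attached to~(\ref{eq:backwardrecurrence2}) in the statement.

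The argument is elementary substitution, so I expect no real obstacle. The only points requiring care are the bookkeeping in the reindexing that folds the isolated $\eta_1 \pm \omega_1$ term into the geometric sum so that it runs from index~$1$, and the degenerate case $n_0=n_1$, where the key observation is that the subtractive equation becomes vacuous rather than imposing a second condition.
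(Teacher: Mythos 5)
Your proof is correct and takes essentially the same route as the paper's: the paper's one-line argument---that conditions~(\ref{eq:backwardrecurrence1})--(\ref{eq:backwardrecurrence2}) are precisely what it takes for the closed forms~(\ref{eq:backwardrecurrence+})--(\ref{eq:backwardrecurrence-}) to be consistent with the initial condition $(y_0,z_0)=(1,0)$ at $\ell=0$---is exactly the add/subtract-and-reindex computation you carry out explicitly. Your explicit handling of the degenerate case $n_0=n_1$, where the subtractive equation reduces to an identity and hence imposes no constraint, makes precise what the paper leaves implicit.
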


\begin{proof}
The conditions~(\ref{eq:backwardrecurrence1})--%
                                          (\ref{eq:backwardrecurrence2})
are equivalent to requiring
that~(\ref{eq:backwardrecurrence+})--(\ref{eq:backwardrecurrence-})
be consistent with the initial condition $(y_0,z_0) = (1,0)$
for $\ell = 0$.
\end{proof}

\begin{remark}
\label{rem:backwardrecurrence}
The conditions on $(\bldeta,\bldomega)$
in Lemma~\ref{lem:backwardrecurrence},
combined
with~(\ref{eq:backwardrecurrence1})--(\ref{eq:backwardrecurrence2}),
guarantee that the solutions $(y_\ell,z_\ell)$
of~(\ref{eq:backwardrecurrence+})--(\ref{eq:backwardrecurrence-})
are integer pairs for every $\ell \ge 1$;
this can be seen if---instead of
using~(\ref{eq:backwardrecurrence+})--(\ref{eq:backwardrecurrence-})---%
we compute $(y_\ell,z_\ell)$ iteratively
for $\ell = 1, 2, 3, \cdots$, using the following recurrences
(which are implied by~(\ref{eq:prefixdist})),
\[
\renewcommand{\arraystretch}{1.2}
\begin{array}{rcl}
\displaystyle
y_\ell   & = & n_0 y_{\ell-1} + n_1 z_{\ell-1} - \eta_\ell \\
\displaystyle
z_\ell & = & n_1 y_{\ell-1} + n_0 z_{\ell-1} - \omega_\ell
\; ,
\end{array}
\]
along with the initial condition $(y_0,z_0) = (1,0)$.\qed
\end{remark}

\begin{proof}[Proof of Theorem~\protect\ref{thm:lengthdist}]
By Lemma~\ref{lem:backwardrecurrence},
Corollary~\ref{cor:backwardrecurrence},
and Remark~\ref{rem:backwardrecurrence},
conditions~(i) and~(ii) 
in Lemma~\ref{lem:lengthdist} are necessary and sufficient
for having a pair of integer sequences
$(\bldy,\bldz)$ with $(y_0,z_0) = (1,0)$
that satisfies~(\ref{eq:prefixdist}).
By Lemma~\ref{lem:prefixdist}, it remains to show
that condition~(iii) in Lemma~\ref{lem:lengthdist}
is necessary and sufficient for
these sequences to be also nonnegative.
Indeed, $y_\ell$ and~$z_\ell$ are nonnegative \ifandonlyif\
\[
y_\ell + z_\ell \ge \left| y_\ell - z_\ell \right| \; ,
\]
which,
by~(\ref{eq:backwardrecurrence+})--(\ref{eq:backwardrecurrence-}),
is equivalent to~(\ref{eq:lengthdist}).
\end{proof}

As we pointed out in Remark~\ref{rem:lengthdist},
the equality $\Kraft^+ = \Kraft^- = 0$ is equivalent to
condition~(a) in Theorem~\protect\ref{thm:lengthdist}
combined with the requirement that
the inequality~(\ref{eq:conditionb})
holds for all $\ell \ge \length = \length(\bldeta,\bldmu)$.
One may wonder if the remaining $\length-1$ inequalities in
condition~(b) are independent in the sense that,
conditioning on $\Kraft^+ = \Kraft^- = 0$,
no subset of them implies the rest.
In Appendix~\ref{sec:lengthdist}, we show that this in fact holds,
with the exception of the case $n_0 = n_1 = 1$.

\section{Parity-preserving variable-length encoders}
\label{sec:PPVLE}

In this section, we provide a formal definition of
a parity-preserving variable-length encoder.
We then state a necessary and sufficient condition
for having a parity-preserving VLE which is deterministic.
\fi

\subsection{Definition of parity-preserving variable-length encoders}
\label{sec:definitionPPVLE}

Let~$S$ be a constraint over an alphabet~$\Sigma$
and assume a partition $\{ \Sigma_0, \Sigma_1 \}$ of~$\Sigma$.
Also, let $\encoder = (V,E,L)$ be a VLG,
and for every $u \in V$ and $\ell \ge 1$,
denote by $\eta_\ell(u)$ (\respectively, $\omega_\ell(u)$)
the number of edges of length~$\ell$ outgoing from~$u$ in~$\encoder$
that have even (\respectively, odd)
labels (when the labels are regarded as words over~$\Sigma$).
Writing
\[
\bldeta(u) = \left( \eta_\ell(u) \right)_{\ell \ge 1}
\quad \textrm{and} \quad
\bldomega(u) = \left( \omega_\ell(u) \right)_{\ell \ge 1}
\; ,
\]
the pair $\left( \bldeta(u) , \bldomega(u) \right)$
thus stands for the length distribution of the set of labels of
the outgoing edges from~$u$ in~$H$.

Fix now~$n_0$ and~$n_1$ to be positive integers,
and for every $u \in V$ define
\[
\Kraft_\ell^\pm(u)
= \Kraft_\ell(\bldeta(u) \pm \bldomega(u), n_0 \pm n_1)
\]
and
\[
\Kraft^\pm(u)
= \Kraft_\length(\bldeta(u) \pm \bldomega(u), n_0 \pm n_1) \; ,
\]
where
$\length=\length(u) = \length(\bldeta(u),\bldomega(u))$.
We say that~$\encoder$ is
a \emph{(parity-preserving) $(S,n_0,n_1)$-VLE}
if for every $u \in V$ it satisfies the three conditions~(E1)--(E3)
in \Section~\ref{sec:vle},
as well as the following fourth condition:
\begin{list}{}{\settowidth{\labelwidth}{\textup{(E4)}}%
               \settowidth{\leftmargin}{\textup{(E4...)}}}
\item[(E4)]
\ifPAGELIMIT
    $\Kraft^+_\ell(u) \ge \left| \Kraft_\ell^-(u) \right|$
    for every $\ell \ge 1$.
    \end{list}
\else
for every $\ell \ge 1$: 
\[
\Kraft^+_\ell(u) \ge \left| \Kraft_\ell^-(u) \right| \; .
\]
\end{list}
(We note that condition~(E3) can be rewritten as:
\begin{list}{}{\settowidth{\labelwidth}{\textup{(E3)}}%
               \settowidth{\leftmargin}{\textup{(E3...)}}}
\item[(E3)]
$\Kraft^+(u) = 0$
\end{list}
and, so, by~(E4) we also have $\Kraft^-(u) = 0$.)
\fi

Now, let~$\Upsilon$ be a base tag alphabet of size $n_0 + n_1$
that has a partition $\{ \Upsilon_0, \Upsilon_1 \}$
with $|\Upsilon_0| = n_0$ and $|\Upsilon_1| = n_1$.
A (parity-preserving) tagging of an $(S,n_0,n_1)$-VLE is
an assignment of input tags to the edges of~$\encoder$
such that conditions~(T1)--(T2) in \Section~\ref{sec:vle} hold,
and, in addition:
\begin{list}{}{\settowidth{\labelwidth}{\textup{(T3)}}%
               \settowidth{\leftmargin}{\textup{(T3...)}}}
\item[(T3)]
at each edge, the parity of the input tag (as a word over~$\Upsilon$)
is the same as the parity of the label (as a word over~$\Sigma$).
\end{list}

It follows from Theorem~\ref{thm:lengthdist}
and conditions (E3)--(E4) that every $(S,n_0,n_1)$-VLE can be
tagged consistently with~(T3).

\subsection{Deterministic parity-preserving variable-length encoders}
\label{sec:deterministicPPVLE}

The main result of this section
is Theorem~\ref{thm:ppprincipalstates} below, which
is the parity-preserving counterpart of
Theorem~\ref{thm:principalstates}:
it presents a necessary and sufficient condition
for having a deterministic parity-preserving VLE.

Let~$\Sigma$ be an alphabet which is partitioned into
$\{ \Sigma_0, \Sigma_1 \}$
and let $H = (V,E,L)$ be a VLG whose labels are over~$\Sigma$.
Fix some nonempty subset $V' \subseteq V$
and positive integers~$n_0$ and~$n_1$,
and for every $u \in V'$ and $\ell \ge 1$,
let $(\bldeta(u|V'),\bldomega(u|V'))$
be the length distribution of the set of labels of
the outgoing edges from~$u$
in the subgraph $H' = (V',E',L')$ of~$H$ that is induced by~$V'$.
Also,
\ifPAGELIMIT
    (re-)define
\else
for the purposes of this section, redefine
\fi
\[
\Kraft_\ell^\pm(u)
= \Kraft_\ell \left( \bldeta(u|V')
\pm \bldomega(u|V'), n_0 \pm n_1 \right)
\]
and
\[
\Kraft^\pm(u)
= \Kraft_\length
\left( \bldeta(u|V') \pm \bldomega(u|V'), n_0 \pm n_1 \right) \; ,
\]
where
\ifPAGELIMIT
    $\length=\length(u)=\length(\bldeta(u|V'),\bldomega(u|V'))$.
\else
$\length=\length(u)=\length(\bldeta(u|V'),\bldomega(u|V'))$.\footnote{%
That is, $\Kraft_\ell^\pm(u)$, $\Kraft^\pm(u)$, and $\length(u)$
are redefined here for the subgraph~$H'$ of~$H$ that is induced
by the subset $V' \subseteq V$. For simplicity of notation,
we have elected to make the dependence on~$V'$ only implicit,
as~$V'$ will be understood from the context.}
\fi
We say that~$V'$ is
a set of \emph{(parity-preserving) principal states in~$H$}
\withrespectto\ $(n_0,n_1)$ if for every $u \in V'$:
\ifPAGELIMIT
    \[
    \Kraft^+(u) \le - \left| \Kraft^-(u) \right|
    \]
    and
    \[
    \Kraft^+_\ell(u) \ge \left| \Kraft^-_\ell(u) \right| \; ,
    \quad \ell = 1, 2, \ldots, \length(u)-1 \; .
    \]
\else
\begin{equation}
\label{eq:C1}
\Kraft^+(u) \le - \left| \Kraft^-(u) \right|
\end{equation}
and
\begin{equation}
\label{eq:C2}
\Kraft^+_\ell(u) \ge \left| \Kraft^-_\ell(u) \right| \; ,
\quad \ell = 1, 2, \ldots, \length(u)-1 \; .
\end{equation}
Clearly, $V'$ is a set of principal states in a VLG~$H$
(\withrespectto\ $(n_0,n_1)$), \ifandonlyif\ it is also so
in the subgraph~$H'$ of~$H$ that is induced by~$V'$.

For the special case where~$H$ is
a deterministic $(S,n_0,n_1)$-VLE,
conditions (E3)--(E4) imply that all the states of~$H$
form a set of principal states \withrespectto\ $(n_0,n_1)$,
with~(\ref{eq:C1}) replaced by the stronger condition
\begin{equation}
\label{eq:C0}
\Kraft^+(u) = \Kraft^-(u) = 0 \; .
\end{equation}
\fi

\begin{theorem}
\label{thm:ppprincipalstates}
Let~$S$ be an irreducible constraint over
an alphabet~$\Sigma$, assume a partition
$\{ \Sigma_0,\Sigma_1 \}$ of~$\Sigma$,
and let~$n_0$, $n_1$, and~$r$ be positive integers.
There exists a deterministic $(S,n_0,n_1)$-VLE
whose edges all have length at most~$r$,
\ifandonlyif~$S$ is presented
by an irreducible deterministic VLG $H = (V,E,L)$
whose edges all have length at most~$r$, and~$V$
contains a subset of principal states \withrespectto\ $(n_0,n_1)$.
\end{theorem}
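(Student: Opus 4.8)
The plan is to mirror the proof of Theorem~\ref{thm:principalstates}, upgrading every scalar Kraft computation to the pair $(\Kraft^+_\ell,\Kraft^-_\ell)$ and using Theorem~\ref{thm:lengthdist} in place of Theorem~\ref{thm:Kraft}. The one structural fact I use throughout is that, by conditions~(E3)--(E4) and Remark~\ref{rem:lengthdist}, the length distribution $(\bldeta(u),\bldomega(u))$ at a state of a parity-preserving $(S,n_0,n_1)$-VLE is \emph{exactly} a realizable length distribution in the sense of Theorem~\ref{thm:lengthdist}(a)--(b); conversely, any distribution obeying (a)--(b) can be tagged to meet (T1)--(T3). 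So the problem localizes, at each state, to a combinatorial statement about length distributions.

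For necessity I would start from a deterministic $(S,n_0,n_1)$-VLE $\encoder$, pass to an irreducible sink (which inherits (E1)--(E4) since it keeps all outgoing edges of its states), and copy the $H^*$-construction from the proof of Theorem~\ref{thm:principalstates}, taking care that each transplanted edge retains the \emph{same label}, hence the same parity. Then every state of $H^*$ has $\Kraft^+=\Kraft^-=0$ and $\Kraft^+_\ell\ge|\Kraft^-_\ell|$, so the state set $V'$ is already principal in $H^*$. Extending $H^*$ to an irreducible deterministic $H$ presenting all of $S$ proceeds as before; the point that makes~(\ref{eq:C1})--(\ref{eq:C2}) survive is that every \emph{filling} edge added out of $u\in V'$ has the single length $\length(u)$. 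Hence for $\ell<\length(u)$ the quantities $\Kraft^\pm_\ell$ of the induced subgraph $H'$ on $V'$ are unchanged from $H^*$, so~(\ref{eq:C2}) is inherited from~(E4); and at the top level each such edge lowers $\Kraft^+$ by $1$ and shifts $\Kraft^-$ by $\pm1$, whence $|\Kraft^-|\le-\Kraft^+$, i.e.~(\ref{eq:C1}).

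For sufficiency I would form $H'$, prune the outgoing edges at each state $u$ down to a realizable sub-distribution, and then tag via Theorem~\ref{thm:lengthdist} and~(T3). The core claim is: if $(\bldeta,\bldomega)$ obeys~(\ref{eq:C1})--(\ref{eq:C2}), there exist $0\le\eta'_\ell\le\eta_\ell$ and $0\le\omega'_\ell\le\omega_\ell$ meeting Theorem~\ref{thm:lengthdist}(a)--(b). I prove this by a single ``fold'' of the top level $\length=\length(\bldeta,\bldomega)$ into level $\length-1$. Put $P=\Kraft^+_{\length-1}$ and $Q=\Kraft^-_{\length-1}$; by~(\ref{eq:C2}) one has $P\ge|Q|$, and a parity check (using $(n_0+n_1)^k\equiv(n_0-n_1)^k$ and $\eta_i+\omega_i\equiv\eta_i-\omega_i\pmod2$) gives $P\equiv Q\pmod2$, so $s=(P+Q)/2$ and $t=(P-Q)/2$ are nonnegative integers. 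Folding $s$ even and $t$ odd ``prefixes'' into level $\length-1$ produces $(\tilde\bldeta,\tilde\bldomega)$ of support length $\length-1$ with $\tilde\Kraft^+_{\length-1}=P-(s+t)=0$ and $\tilde\Kraft^-_{\length-1}=Q-(s-t)=0$, still satisfying~(\ref{eq:C2}); by Theorem~\ref{thm:lengthdist} it is realizable. Un-folding (expanding those $s+t$ words into their $n_0+n_1$ extensions) returns a realizable list whose counts are at most $(\bldeta,\bldomega)$, provided $n_0 s+n_1 t\le\eta_\length$ and $n_1 s+n_0 t\le\omega_\length$; using the recurrences $\Kraft^\pm_\length=(n_0\pm n_1)\Kraft^\pm_{\length-1}-(\eta_\length\pm\omega_\length)$, these two inequalities reduce to $\Kraft^+_\length+\Kraft^-_\length\le0$ and $\Kraft^+_\length-\Kraft^-_\length\le0$, i.e.\ precisely to~(\ref{eq:C1}). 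Selecting from $H'$ any edges realizing $(\eta'_\ell,\omega'_\ell)$ then yields the desired deterministic parity-preserving VLE, with edge lengths still at most $r$.

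The step I expect to be the main obstacle is exactly this pruning claim: one must drive \emph{both} $\Kraft^+$ and $\Kraft^-$ to $0$ while keeping every intermediate inequality~(E4) intact and never exceeding the supply of even or odd edges at any length. What makes it go through — and what I would flag as the crux to verify carefully — is the confluence of (i) the two-sided bound~(\ref{eq:C1}), which simultaneously controls the even- and odd-slot budgets at the top level, (ii) the congruence $P\equiv Q\pmod2$, which guarantees $s,t$ are integral, and (iii) the degenerate case $n_0=n_1$, where $\Kraft^-_\ell=\omega_\ell-\eta_\ell$ and the ``$-$'' recurrence collapses; there I would confirm that the fold identities above remain valid, the vanishing factor $n_0-n_1$ being absorbed correctly.
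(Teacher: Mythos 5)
Your proposal is correct and follows essentially the same route as the paper's own proof: necessity via the $H^*$-construction inherited from Theorem~\ref{thm:principalstates} (where the filling edges added at a state $u \in V'$ all have length $\length(u)$, hence leave the lower-level functionals $\Kraft^\pm_\ell$ intact and only shift the top-level ones, giving~(\ref{eq:C1})), and sufficiency by pruning top-level edges at each principal state and invoking Theorem~\ref{thm:lengthdist}. Your fold/unfold bookkeeping is just the dual of the paper's edge-removal computation: the paper removes $y^\pm = -\frac{1}{2}(\Kraft^+ \pm \Kraft^-)$ edges (integrality from the common parity of $\Kraft^+$ and $\Kraft^-$, nonnegativity from~(\ref{eq:C1}), supply bounds from~(\ref{eq:C2})), whereas you count the \emph{kept} edges $n_0 s + n_1 t$ and $n_1 s + n_0 t$ with $s,t$ equal to the prefix counts $y_{\length-1}, z_{\length-1}$ of Lemma~\ref{lem:prefixdist} — so conditions~(\ref{eq:C1}) and~(\ref{eq:C2}) merely exchange roles between the two write-ups, and the underlying arithmetic, including the treatment of the case $n_0=n_1$ and of the degenerate case where the whole top level is pruned, is identical.
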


\ifPAGELIMIT
    The proof of the theorem builds upon
    an (alternate) proof of Theorem~\ref{thm:principalstates};
    both proofs can be found in~\cite{RS2}.
\else
\begin{proof}
The proof of the ``only if'' part builds upon the respective part
in the proof of Theorem~\ref{thm:principalstates}.
Specifically, given a deterministic $(S,n_0,n_1)$-VLE $\encoder$,
we define the set~$V'$ as in that proof and construct
the VLE $H^* = (V',E^*,L^*)$.
For every $u \in V'$, the length distribution of the set of labels
of the outgoing edges from~$u$ in~$H^*$ is the same as
the respective set for~$Z(u)$ in~$\encoder$.
Hence, by conditions (E3)--(E4) it follows that
$H^*$ satisfies conditions~(\ref{eq:C2}) and~(\ref{eq:C0}).
Then, when we form~$H$ from~$H^*$, the change made
at states $u \in V'$ is limited to adding outgoing edges of
length~$\length(u)$.
Clearly, such a change has no effect on
the terms appearing in~(\ref{eq:C2}).
As for the terms in~(\ref{eq:C2}), 
let~$y^+$ (\respectively, $y^-$) be the number of
even-labeled (\respectively, odd-labeled)
outgoing edges that were added to state~$u$
(all of which of length $\length(u)$).
By~(\ref{eq:C0}) (when stated for $H^*$) we get that, in~$H$,
\[
\Kraft^+(u) = -y^+ - y^-
\quad \textrm{and} \quad
\Kraft^-(u) = -y^+ + y^- \; ,
\]
thereby implying~(\ref{eq:C1})
(when stated for~$H$, yet still \withrespectto\ the subset~$V'$).

Turning to the ``if'' part of the proof of
Theorem~\ref{thm:principalstates},
we need to show that we can remove edges from
the subgraph~$H'$ of~$H$ that is induced by
the set of principal states~$V'$
so that the resulting subgraph~$\encoder$
satisfies~(\ref{eq:C2}) and~(\ref{eq:C0}).
Fix some state $u \in V'$ in~$H'$, and
suppose that we remove~$y^+$ (\respectively, $y^-$)
even-labeled (\respectively, odd-labeled)
outgoing edges from state~$u$, all of length $\length = \length(u)$.
Similarly to what we had in the ``only if'' proof,
such removal does not affect the terms in~(\ref{eq:C2}),
yet it changes the values of
$\Kraft^+ = \Kraft^+(u)$
and $\Kraft^- = \Kraft^-(u)$ into
$\Kraft^+ + y^+ + y^-$ and
$\Kraft^- + y^+ - y^-$, respectively;
so, in order to satisfy~(\ref{eq:C0}),
we require that~$y^+$ and~$y^-$ be such that
\[
\Kraft^\pm + y^+ \pm y^- = 0 \; ,
\]
namely,
\begin{equation}
\label{eq:y}
y^\pm = -\frac{1}{2}
( \Kraft^+ \pm \Kraft^- ) \; .
\end{equation}
Noting that~$\Kraft^+$ and~$\Kraft^-$
have the same parity, it follows
that~$y^\pm$ satisfying~(\ref{eq:y}) are integers.
Moreover, by condition~(\ref{eq:C1}) they are also nonnegative.

To complete the proof, it remains to show that there indeed
exist $y^\pm$ edges that can be removed from~$H'$ at state~$u$,
namely, that $y^+ \le \eta_\length$ and $y^- \le \omega_\length$.
Observing that
\[
\Kraft^\pm = (n_0 \pm n_1) \Kraft_{\length-1}^\pm
- (\eta_\length \pm \omega_\length) \; ,
\]
we have:
\begin{eqnarray*}
y^\pm &
\stackrel{(\ref{eq:y})}{=} & -\frac{1}{2} (\Kraft^+ \pm \Kraft^-) \\
& = &
-\frac{1}{2}
\Bigl(
(n_0 + n_1) \Kraft_{\length-1}^+
- (\eta_\length + \omega_\length)
\Bigr. \\
&&
\quad {}
\pm
\Bigl.
(n_0 - n_1) \Kraft_{\length-1}^-
\mp (\eta_\length - \omega_\length)
\Bigr)
\; .
\end{eqnarray*}
Hence,
\begin{eqnarray*}
y^+
& = &
\eta_\length
- \frac{1}{2}
\left(
(n_0 + n_1)  \Kraft_{\length-1}^+
+ (n_0 - n_1) \Kraft_{\length-1}^- \right) \\
& \stackrel{(\ref{eq:C2})}{\le} &
\eta_\length
- \frac{1}{2}
\left(
(n_0 + n_1)  \left| \Kraft_{\length-1}^- \right|
+ (n_0 - n_1) \Kraft_{\length-1}^- \right) \\
& \le & \eta_\length \; ,
\end{eqnarray*}
with the first (\respectively, second) inequality holding with equality
\ifandonlyif\ %
$\Kraft^+_{\length-1} = \left| \Kraft^-_{\length-1} \right|$
(\respectively, $\Kraft^-_{\length-1} = 0$);
namely, $y^+ = \eta_\length$ \ifandonlyif\ %
$\Kraft^+_{\length-1} = \Kraft^-_{\length-1} = 0$.
Similarly,
\begin{eqnarray*}
y^- 
& = &
\omega_\length
- \frac{1}{2}
\left(
(n_0 + n_1)
\Kraft_{\length-1}^+
- (n_0 - n_1) \Kraft_{\length-1}^- 
\right) \\
& \stackrel{(\ref{eq:C2})}{\le} &
\omega_\length
- \frac{1}{2}
\left(
(n_0 + n_1)
\left| \Kraft_{\length-1}^- \right|
- (n_0 - n_1) \Kraft_{\length-1}^- 
\right) \\
& \le & \omega_\length \; ,
\end{eqnarray*}
again, with $y^- = \omega_\length$ \ifandonlyif\ %
$\Kraft^+_{\length-1} = \Kraft^-_{\length-1} = 0$.

We conclude that conditions~(\ref{eq:C1})--(\ref{eq:C2})
guarantee that we can always remove edges from state~$u \in V'$
in~$H'$ so that the resulting graph satisfies
((\ref{eq:C2}) and)~(\ref{eq:C0});
note that this applies also to the case
$y^+ = \eta_\length$ and $y^- = \omega_\length$,
where the edge removal reduces the value of $\length(u)$,
yet~(\ref{eq:C0}) will still hold since
$\Kraft^+_{\length-1} = \Kraft^-_{\length-1} = 0$
(see Remark~\ref{rem:lengthdist}).
\end{proof}
\fi

\begin{example}
\label{ex:2infppprincipalstates}
Let~$S$ be the $(2,\infty)$-RLL constraint,
which is presented by the graph~$G$ in \Figure~\ref{fig:2inf}.
Recall from Example~\ref{ex:2inf}
that there is no deterministic $(S^2,2)$-encoder
in this case and, so, there is no VLG~$H$ that satisfies
the conditions of Theorem~\ref{thm:principalstates} for $r = 1$.

Turning to $r = 2$, recall from Example~\ref{ex:2infprincipalstates}
that the VLE in \Figure~\ref{fig:2infvle}
is the unique induced subgraph~$H'$ of any (reduced) VLG~$H$
that satisfies the conditions of Theorem~\ref{thm:principalstates}.
Yet, assuming the ordinary definition of parity of binary words,
the set of states $V' = \{ \gamma \}$ of~$H'$
is not a set of (parity-preserving) principal states
(in $H'$ and therefore in~$H$) \withrespectto\ $(n_0,n_1) = (1,1)$.
Hence, for $r = 2$, there is no deterministic $(S^2,1,1)$-VLE.

On the other hand, there exists a deterministic $(S^2,1,1)$-VLE
for $r = 3$, as shown in \Figure~\ref{fig:2infvlealt},
along with the tag assignment in Table~\ref{tab:2infvlealt}.
\newcommand{\vle}{
    \put(000,000){\circle{20}}

    \qbezier(-9.26,004)(-30,015)(-30,000)
    \qbezier(-9.26,-04)(-30,-15)(-30,000)
    \put(-9.26,004){\vector(2,-1){0}}

    \qbezier(9.26,004)(030,015)(030,000)
    \qbezier(9.26,-04)(030,-15)(030,000)
    \put(9.26,004){\vector(-2,-1){0}}

    \qbezier(004,9.26)(015,035)(000,035)
    \qbezier(-04,9.26)(-15,035)(000,035)
    \put(004,9.26){\vector(-1,-3){0}}

    \qbezier(007,7.14)(030,055)(000,055)
    \qbezier(-07,7.14)(-30,055)(000,055)
    \put(007,7.14){\vector(-2,-5){0}}

    \put(000,000){\makebox(0,0){$\gamma$}}
    \put(-32,000){\makebox(0,0)[r]{$00$}}
    \put(032,000){\makebox(0,0)[l]{$01{.}00$}}
    \put(000,058){\makebox(0,0)[b]{$10{.}00{.}00$}}
    \put(000,038){\makebox(0,0)[b]{$10{.}01{.}00$}}
}
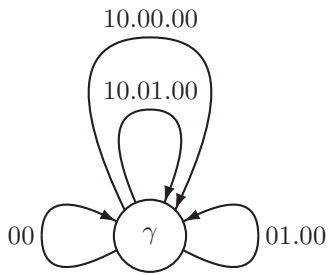
\begin{figure}[hbt]
\begin{center}
\thicklines
\setlength{\unitlength}{\figunit}
\figfont
\ifPAGELIMIT
    \begin{picture}(080,060)(-40,000)
\else
\begin{picture}(080,080)(-40,-10)
\fi
    \put(000,000){\vle}
\end{picture}
\thinlines
\setlength{\unitlength}{1pt}
\end{center}
\caption{Parity-preserving VLE for the $(2,\infty)$-RLL constraint.}
\label{fig:2infvlealt}
\end{figure}
\begin{table}[hbt]
\caption{Tag assignment for
the encoder in \Figure~\ref{fig:2infvlealt}.}
\label{tab:2infvlealt}
\ifPAGELIMIT
    \vspace{-2ex}
\fi
\[
\begin{array}{lcl}
0   & \leftrightarrow & 00           \\
10  & \leftrightarrow & 01{.}00      \\
110 & \leftrightarrow & 10{.}01{.}00 \\
111 & \leftrightarrow & 10{.}00{.}00
\end{array}
\]
\end{table}
\ifPAGELIMIT
\else
This encoder is a subgraph of the VLG~$H$
shown in \Figure~\ref{fig:2infvlg}, which is a deterministic
VLG presentation of~$S^2$ whose edges all have length at most~$3$,
and $V' = \{ \gamma \}$ is a set of principal states in~$H$
\withrespectto\ $(1,1)$ (as such, $H$ satisfies the conditions of
Theorem~\ref{thm:ppprincipalstates}).
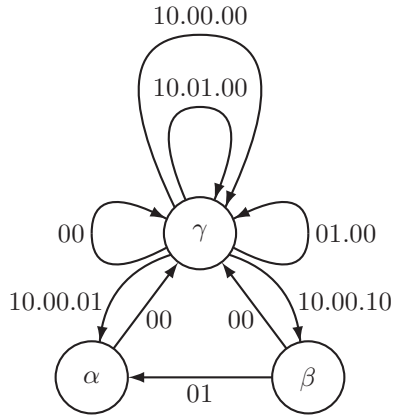
\begin{figure}[t]
\begin{center}
\thicklines
\setlength{\unitlength}{\figunit}
\figfont
\begin{picture}(120,120)(-60,-10)
    \put(000,040){\vle}
    \put(-30,000){
        \multiput(000,000)(060,000){2}{\circle{20}}
        \put(050,000){\vector(-1,0){40}}
        \put(006,008){\vector(3,4){18}}
        \put(054,008){\vector(-3,4){18}}

        \qbezier(38.0,34.0)(56.2,28.2)(57.8,9.6)
        \put(57.8,9.6){\vector(1,-6){0}}

        \qbezier(22.0,34.0)(3.8,28.2)(2.2,9.6)
        \put(2.2,9.6){\vector(-1,-6){0}}

        \put(000,000){\makebox(0,0){$\alpha$}}
        \put(060,-.5){\makebox(0,0){$\beta$}}

        \put(015,016){\makebox(0,0)[l]{$00$}}
        \put(045,016){\makebox(0,0)[r]{$00$}}
        \put(030,-02){\makebox(0,0)[t]{$01$}}
        \put(003,021){\makebox(0,0)[r]{$10{.}00{.}01$}}
        \put(057,021){\makebox(0,0)[l]{$10{.}00{.}10$}}
    }
\end{picture}
\thinlines
\setlength{\unitlength}{1pt}
\end{center}
\caption{Deterministic VLG presentation~$H$ of
the second power of the $(2,\infty)$-RLL constraint.}
\label{fig:2infvlg}
\end{figure}
\fi

Comparing to the fixed-length case,
using Theorem~\ref{thm:main}(a), one can verify
that there exists a (not necessarily deterministic)
$(S^{2p},2^{p-1},2^{p-1})$-encoder,
\ifandonlyif\ $p \ge 3$. For $p = 3$,
any vector $\bldx \in \XX(A_{(G^6)_0},4) \cap \XX(A_{(G^6)_1},4)$
satisfies $\| \bldx \|_\infty \ge 6$
(and equality is attained only by $\bldx = (2 \; 3 \; 6)^\top$).
By Corollaries 4 and~5 in~\cite{RS1} we then get that
any rate $3:6$ parity-preserving fixed-length encoder for~$S$
must have at least six states and
\ifPAGELIMIT
    decoding look-ahead of
\else
anticipation
\fi
at least~$2$
(measured in $6$-bit symbols); in contrast,
recall that when there is no requirement for parity preservation,
we have the simple encoder in \Figure~\ref{fig:2infencoder}.
Using Theorem~\ref{thm:main}(b), one can determine that there exists
a rate $p:2p$ parity-preserving fixed-length encoder for~$S$
which is deterministic, (if and) only if $p \ge 8$.\qed
\end{example}

\begin{remark}
\label{rem:ppprincipalstates}
Unlike Theorem~\ref{thm:principalstates},
we do not have (as of yet) 
an extension of Franaszek's algorithm from~\cite{Franaszek2}
to the parity-preserving case;
namely, a polynomial-time algorithm is yet to be found
for determining whether, for given~$S$, $\{ \Sigma_0, \Sigma_1 \}$,
$n_0$, $n_1$, and~$r$,
there is a VLG~$H$ that satisfies the conditions of
Theorem~\ref{thm:ppprincipalstates}.
(The problem, however, is still decidable, since there are only
finitely many reduced VLGs $H$ with edge lengths at most~$r$
such that $S(H) = S$.)\qed
\end{remark}

\ifPAGELIMIT
\else
\subsection{Discussion}
\label{sec:discussion}

In Appendix~\ref{sec:nonexistence}, we show that
for the constraint~$S$ of Example~\ref{ex:twostates}
and for the partition~(\ref{eq:partitioningalt}),
there is no deterministic $(S^t,n_0,n_1)$-VLE,
for any positive integers~$t$, $n_0$, and~$n_1$ such that
$\log_2 (n_0 + n_1) = t = \capacity(S^t)$.
In contrast, given any constraint~$S = S(G)$
and positive integers~$n_0$ and~$n_1$
that satisfy the strict inequality $\log_2 (n_0 + n_1) < \capacity(S)$,
it follows from (the proof of) Theorem~2 in~\cite{RS1}
that, under mild conditions on the presentation~$G$ of~$S$,
there exist deterministic (fixed-length)
$(S^r,n^{(r)},n^{(r)})$-encoders $H_r$, $r = 1, 2, \ldots$,
where
$(\log_2 n^{(r)})/r \rightarrow \capacity(S) \; (> \log_2 (n_0 + n_1))$
when $r \rightarrow \infty$.
Thus, for sufficiently large~$r$, each encoder $H_r$,
when regarded as a VLG with all the edges having length~$r$,
contains a deterministic $(S,n_0,n_1)$-VLE as a subgraph.

As a sanity check, we next show that the states of $H_r$
form a set of principal states \withrespectto\ $(n_0,n_1)$. From
$(\log_2 n^{(r)})/r > \log_2 (n_0 + n_1)$
(for sufficiently large~$r$) it follows that
\begin{equation}
\label{eq:nr}
(n_0 + n_1)^r + \left| n_0 - n_1 \right|^r \le 2 n^{(r)} \; .
\end{equation}
Now, the VLG $H_r$ (whose edges all have length~$r$)
satisfies~(\ref{eq:C2}) vacuously
(with $V'$ taken as the whole set of states of $H_r$),
and it also satisfies~(\ref{eq:C1}) since
\begin{eqnarray*}
\Kraft^+(u) & = &
(n_0 + n_1)^r - \eta_r(u) - \omega_r(u) \\
& = &
(n_0 + n_1)^r - 2 n^{(r)} \\
& \stackrel{(\ref{eq:nr})}{\le} &
- \left| n_0 - n_1 \right|^r \\
& = &
- \left| (n_0 - n_1)^r - \eta_r(u) + \omega_r(u) \right| \\
& = & 
- \left| \Kraft^-(u) \right| \; .
\end{eqnarray*}
We conclude that the states of $H_r$ form a principal set of states
and, so, by Theorem~\ref{thm:ppprincipalstates}
there exists a deterministic $(S(H_r),n_0,n_1)$-VLE
(and, as such, it is also an $(S,n_0,n_1)$-VLE).

When $G = (V,E,L)$ is an ordinary graph
(whose edges all have length~$1$), condition~(\ref{eq:C1}) becomes,
for every $u \in V'$:
\begin{eqnarray*}
\lefteqn{
(n_0 + n_1)
- \sum_{v \in V'} \left(A_{G_0} + A_{G_1} \right)_{u,v}
} \makebox[10ex]{} \\
& \le &
\Bigl|
(n_0 - n_1)
- \sum_{v \in V'} \left(A_{G_0} - A_{G_1} \right)_{u,v}
\Bigr| \; .
\end{eqnarray*}
This inequality can be rewritten as
\[
\sum_{v \in V'} \left(A_{G_0} \right)_{u,v} \ge n_0
\quad \textrm{and} \quad
\sum_{v \in V'} \left(A_{G_1} \right)_{u,v} \ge n_1 \; ,
\]
and also as
\[
A_{G_0} \bldx \ge n_0 \bldx
\quad \textrm{and} \quad
A_{G_1} \bldx \ge n_1 \bldx \; ,
\]
where~$\bldx$ is the $0\textrm{--}1$ characteristic vector of
the subset~$V'$ within~$V$.
Condition~(\ref{eq:C2}) becomes vacuous for ordinary graphs.
It thus follows that a nonempty subset $V' \subseteq V$ is
a set of principal states in~$G$ \withrespectto\ $(n_0,n_1)$,
\ifandonlyif\ its characteristic vector belongs to
$\XX(A_{G_0},n_0) \cap \XX(A_{G_1},n_1)$.
For~$G$ which is also deterministic,
this coincides with Theorem~\ref{thm:main}(b).

\begin{remark}
\label{rem:SBD}
When applying
Theorems~\ref{thm:principalstates} and~\ref{thm:ppprincipalstates}
to a finite-memory\footnote{%
A constraint~$S$ has finite memory if it can be defined
through a finite list of forbidden words,
i.e., $\bldw \in S$ if and only if~$\bldw$
does not contain any word in that list as
a sub-word~\cite[\S 2.3]{MRS}.}
constraint~$S$ and $r = 1$,
the respective (fixed-length) deterministic $(S,n_0,n_1)$-encoder can be
guaranteed to be also sliding-block decodable.
On the other hand,
when $r > 1$, edges in the encoder may have different lengths
and, so, the output sequence consists of words (labels) of
varying lengths over the alphabet~$\Sigma$ of~$S$.
State-independent decoding, however, should not assume the position of
any given output symbol (of~$\Sigma$) within the label (word)
that it belongs to. This, in turn, imposes conditions beyond
the Kraft conditions~(\ref{eq:C2})--(\ref{eq:C0}) on
the lengths of the outgoing edges from each state in the encoder.
When encoders do not have to be parity-preserving,
such (sufficient) conditions were provided
in~\cite{Beal1} and~\cite{Beal2}.
Respective conditions are yet to be found for
the parity-preserving case.\qed
\end{remark}

In this paper, we focused mainly
on parity-preserving VLEs which are deterministic.
The study of the non-deterministic case is an open topic for
future work. In particular, we can pose the following question:
under what conditions can capacity be achieved (with equality)
by parity-preserving VLEs? 
Recall that for the constraint~$S$ of Example~\ref{ex:twostates}
and for the partition~(\ref{eq:partitioningalt}),
capacity cannot be achieved when the encoder is deterministic
(as we show in Appendix~\ref{sec:nonexistence}),
nor when it is of fixed length (as we showed in~\cite{RS1}).


\ifIEEE
   \appendices
\else
   \section*{$\,$\hfill Appendices\hfill$\,$}
   \appendix
\fi

\section{Nonexistence result for Example~\protect\ref{ex:vlealt}}
\label{sec:nonexistence}

Let~$\Sigma$, $G$, and~$S$ be as in Example~\ref{ex:twostates},
and assume the partition~(\ref{eq:partitioningalt}) of~$\Sigma$.
We show that for this partition, there is no
deterministic parity-preserving VLE
at a coding ratio of~$1$. 
Specifically, we show that for every positive integers
$t$, $n_0$, and $n_1$ such that $n_0 + n_1 = 2^t$,
there is no deterministic parity-preserving
$(S^t,n_0,n_1)$-VLE.

Suppose to the contrary that such a VLE exists,
and let~$\encoder$ be such an encoder with
the smallest number of states.
The encoder~$\encoder$ is irreducible (or else 
its irreducible sink\footnote{%
Refer to Footnote~\ref{fn:sink} for the definition of
an irreducible sink.}
would be a smaller encoder)
and reduced (or else we could merge states with identical
follower sets~\cite[\S 2.6.2]{MRS}).
By changing the outgoing edges from each state in~$\encoder$
into a tree, we can get
an (ordinary) irreducible deterministic graph~$G'$
(with edge labels of length~$1$ over~$\Sigma^t$).
The constraint $S(\encoder) = S(G')$ has capacity~$t$, which is also
the capacity of the (irreducible) constraint~$S^t$
in which it is contained.
Hence, by~\cite[Problem~3.28]{MRS}
we have $S(\encoder) = S(G') = S^t = S(G^t)$
and, so, by Remark~\ref{rem:vlg},
for every state $u$ in~$\encoder$
there exists a state $v \in \{ \alpha, \beta \}$ in~$G$ such that
$\FF_\encoder(u) = \FF_{G^t}(v)$.
It follows that~$\encoder$ has no more than two states.

Assume first that~$\encoder$ has only one state,
in which case the edges in~$\encoder$ are (variable-length) self-loops,
corresponding to cycles in~$G^t$.
Note, however, that all the cycles in~$G$ (and, therefore, in~$G^t$)
generate even words, which means that any tagging
of the edges of~$\encoder$ forms a set~$\List$
consisting only of even words over the base tag alphabet~$\Upsilon$.
Yet, since we assume that both~$n_0$ and~$n_1$ are positive,
the alphabet~$\Upsilon$ contains at least one even symbol (say, $0$)
and one odd symbol (say, $1$).
But then, any word of
the form $100\ldots 0$ that is longer than the longest tag in~$\List$
is neither a prefix of any tag in~$\List$ (obviously),
nor has it a prefix in~$\List$; namely,~$\List$ cannot be
exhaustive.\footnote{The case $n_0 = 0$ can also be ruled out:
the first label along any path that generates the (sufficiently long)
even word $abdbd\ldots bd$ must end either with an~$a$ or
with a~$d$, and, so, that label must have odd length. 
On the other hand, any odd-length tag over an an all-odd alphabet
cannot be even.

We point out that we can rule out an encoder with all-even labels
also by using Lemma~\ref{lem:lengthdist}:
it is easy to see that when $\omega_\ell = 0$ for all~$\ell$,
conditions~(i) and~(ii') can hold simultaneously only when $n_1 = 0$.}

It remains to consider the case where~$\encoder$
has two (inequivalent) states,
which we denote by~$\alpha$ and~$\beta$
to match their respective equivalent states in~$G$.
In fact, we will rule out the existence of 
a deterministic two-state $(S^t,2^t)$-VLE, regardless of
whether it is parity-preserving.
Any deterministic $(S^t,2^t)$-VLE, in turn,
can be viewed as a deterministic $(S,2)$-VLE, by regarding each
length-$\ell$ label over~$\Sigma^t$
as a label of length $t \ell$ over~$\Sigma$.

We recall the following definition of a parametrized adjacency matrix.
Given a VLG $H = (V,E,L)$,
for any two states~$u, v \in V$
we denote by $\mu_\ell(u,v)$ the number of edges of
length~$\ell$ from~$u$ to~$v$.
For a positive real indeterminate~$\theta$,
we define the parametrized adjacency matrix of~$H$
as the $|V| \times |V|$ matrix~$A_H(\theta)$ whose entries are given by:
\[
(A_H(\theta))_{u,v}
= \sum_{\ell \ge 1} \mu_\ell(u,v) \cdot \theta^{-\ell} \; .
\]
We let $\theta_{\max}(H)$ denote the largest~$\theta$
for which $\lambda(A_H(\theta)) = 1$.
It is known that when~$H$ is lossless
(in particular, deterministic), the capacity of~$S(H)$
equals $\log_2 \theta_{\max}(H)$~\cite[Theorem~1]{Shannon}
(when all the edges have length~$1$
we have $A_H(\theta) = (1/\theta) \cdot A_H$,
in which case $\theta_{\max}(H) = \lambda(A_H)$).
It is also known that when $H$ is irreducible,
the mapping $\theta \mapsto \lambda(A_H(\theta))$
is strictly decreasing (and continuous)
over $(0,\infty)$~\cite[Proposition~3.12]{MRS}.

Turning now to the encoder~$\encoder$, 
which we view as a deterministic two-state $(S,2)$-VLE,
we have
\begin{eqnarray}
\sum_{v \in \{ \alpha, \beta \}}
(A_\encoder)_{\beta,v}(2)
& = &
\!\!\!\!\!\!\!
\sum_{v \in \{ \alpha, \beta \}} 
\sum_{\ell \ge 1} \mu_\ell(\beta,v) \cdot 2^{-\ell} \nonumber \\
& = &
\sum_{\ell \ge 1} 2^{-\ell} \sum_{v \in \{ \alpha, \beta \}}
\mu_\ell(\beta,v) \nonumber \\
\label{eq:rowsumbeta}
& = & 1 \; ,
\end{eqnarray}
where the last equality follows from condition~(E3).
Denote by~$\Labels$ the set of labels of
the outgoing edges from state~$\beta$ in~$\encoder$.
By~(\ref{eq:rowsumbeta})
we must have
\[
|\Labels| = \sum_{\ell \ge 1}
\sum_{v \in \{ \alpha, \beta \}} \mu_\ell(\beta,v) \ge 2 \; .
\]
Noting that all these labels start with the symbol~$d$,
we define~$\Labels'$ to be the set of all words obtained by
removing the leading symbol~$d$ from the words in~$\Labels$.
Since~$\encoder$ is deterministic, the set $\Labels$
is prefix-free and, therefore, so is~$\Labels'$.
In particular, $\Labels'$ does not contain the empty word
(since $|\Labels'| = |\Labels| \ge 2$).
Next, construct from~$\encoder$ an (irreducible) VLG~$\encoder'$
by replacing the outgoing edges from state~$\alpha$ with copies of
the outgoing edges from state~$\beta$, keeping the terminal states
yet removing from each label its leading symbol~$d$.
Thus, for every $u, v \in \{ \alpha, \beta \}$:
\begin{equation}
\label{eq:Aencoder'}
(A_{\encoder'}(\theta))_{u,v}
=
\left\{
\begin{array}{ll}
\theta \cdot (A_\encoder(\theta))_{\beta,v}
& \textrm{if $u = \alpha$} \\
(A_\encoder(\theta))_{\beta,v}
& \textrm{if $u = \beta$}
\end{array}
\right.
\; .
\end{equation}
Since $\Labels'$ is prefix-free the graph~$\encoder'$ is deterministic.
Moreover, it can be easily verified
that $\FF_{\encoder'}(\alpha) \subseteq \FF_G(\alpha)$ and, so,
$S(\encoder') \subseteq S$.
Hence, $\capacity(S(\encoder')) \le \capacity(S) = 1$,
which implies that
\begin{equation}
\label{eq:ximax}
\theta_{\max}(\encoder') \le 2 \; .
\end{equation}
On the other hand, from~(\ref{eq:rowsumbeta})--(\ref{eq:Aencoder'}) 
we get the following row sums in $A_{\encoder'}(2)$:
\[
\sum_{v \in \{ \alpha, \beta \}}
(A_{\encoder'}(2))_{\alpha,v} = 2
\quad \textrm{and} \quad
\!\!\!
\sum_{v \in \{ \alpha, \beta \}}
(A_{\encoder'}(2))_{\beta,v} = 1 \; .
\]
By~\cite[Proposition~3.13]{MRS} we then get
that $\lambda(A_{\encoder'}(2)) > 1$, i.e.,
$\theta_{\max}(\encoder') > 2$. Yet this contradicts~(\ref{eq:ximax}).

\section{Independence of the conditions in
Theorem~\protect\ref{thm:lengthdist}}
\label{sec:lengthdist}

Given positive integers $n_0$, $n_1$, and $r > 1$,
we show that, unless $n_0 = n_1 = 1$,
the inequalities~(\ref{eq:conditionb})
that correspond to $\ell = 1, 2, \ldots, r-1$ are independent
(in the sense defined below)
conditioned on $(\bldeta,\bldomega)$ being a nonnegative integer pair
with $\length = \length(\bldeta,\bldomega) = r$
that satisfies $\Kraft^+ = \Kraft^- = 0$. In particular,
each of these inequalities is necessary,
as it is not implied by the rest.

We introduce the following definition.
Given positive integers $n_0$, $n_1$, and $r > 1$,
a subset $\ZZ \subseteq \{ 1, 2, \ldots, r-1 \}$
is said to be \emph{admissible for $(n_0,n_1,r)$}
if there exists a nonnegative integer pair $(\bldeta,\bldomega)$
with $\length = \length(\bldeta,\bldomega) = r$
that satisfies $\Kraft^+ = \Kraft^- = 0$
yet violates~(\ref{eq:conditionb}) when (and only when) $\ell \in \ZZ$.
The inequalities~(\ref{eq:conditionb}) are then said to be
independent if every subset $\ZZ \subseteq \{ 1, 2, \ldots, r-1 \}$
is admissible for $(n_0,n_1,r)$.

We have the following lemma.

\begin{lemma}
\label{lem:admissible}
A subset $\ZZ \subseteq \{ 1, 2, \ldots, r-1 \}$
is admissible for $(n_0,n_1,r)$,
\ifandonlyif\ there exists an integer pair
$\left(\bldy{=}(y_\ell)_{\ell\ge 0},\bldz{=}(z_\ell)_{\ell\ge 0}\right)$
that satisfies the following conditions:
\begin{list}{}{\settowidth{\labelwidth}{\textup{(E2)}}%
               \settowidth{\leftmargin}{\textup{(E2...)}}}
\item[\textup{(C1)}]
$y_\ell \le n_0 y_{\ell-1} + n_1 z_{\ell-1}$ and
$z_\ell \le n_1 y_{\ell-1} + n_0 z_{\ell-1}$ 
for every $\ell \ge 1$,
\item[\textup{(C2)}]
$y_{r-1} + z_{r-1} > 0$, 
\item[\textup{(C3)}]
$y_\ell = z_\ell = 0$ when $\ell \ge r$,
\item[\textup{(C4)}]
$y_0 = 1$ and $z_0 = 0$, and---
\item[\textup{(C5)}]
$\min \{ y_\ell, z_\ell \} < 0$
when (and only when) $\ell \in \ZZ$.
\end{list}
\end{lemma}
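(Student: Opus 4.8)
The plan is to run the whole argument through the linear change of variables~(\ref{eq:prefixdist}), which, once the initial values $(y_0,z_0)=(1,0)$ are fixed, is a bijection between pairs $(\bldeta,\bldomega)$ and pairs $(\bldy,\bldz)$ of finite support. The engine driving the translation is the pair of identities
\[
\Kraft^+_\ell = y_\ell + z_\ell \; , \qquad \Kraft^-_\ell = y_\ell - z_\ell \; ,
\]
valid for every $\ell \ge 1$ whenever $\Kraft^+ = \Kraft^- = 0$. I would obtain these by combining the backward recurrences~(\ref{eq:backwardrecurrence+})--(\ref{eq:backwardrecurrence-}) of Lemma~\ref{lem:backwardrecurrence}, whose left-hand sides are $y_\ell \pm z_\ell$, with the computation in the proof of Lemma~\ref{lem:lengthdist}, which shows that, under $\Kraft^+ = \Kraft^- = 0$, their right-hand sides equal $\Kraft^\pm_\ell$ (the case $n_0 = n_1$ being handled through~(\ref{eq:Keq})).

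With these identities in hand, the inequality~(\ref{eq:conditionb}), namely $\Kraft^+_\ell \ge \left| \Kraft^-_\ell \right|$, reads $y_\ell + z_\ell \ge \left| y_\ell - z_\ell \right|$, which holds \ifandonlyif\ $y_\ell \ge 0$ and $z_\ell \ge 0$. Hence~(\ref{eq:conditionb}) is violated at index~$\ell$ exactly when $\min\{y_\ell,z_\ell\} < 0$, and demanding that this occur precisely for $\ell \in \ZZ$ is exactly condition~(C5). The remaining conditions match up just as directly: nonnegativity of each $\eta_\ell$ and $\omega_\ell$, read off~(\ref{eq:prefixdist}), is condition~(C1); Corollary~\ref{cor:backwardrecurrence} equates the consistency of the recurrence with $(y_0,z_0)=(1,0)$---condition~(C4)---with the requirement $\Kraft^+ = \Kraft^- = 0$; and the vanishing $y_\ell = z_\ell = 0$ for $\ell \ge r$ is condition~(C3).

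I would then assemble the two directions around this dictionary. For ``only if'', a nonnegative integer witness $(\bldeta,\bldomega)$ with $\length = r$ and $\Kraft^+ = \Kraft^- = 0$ yields, via Lemma~\ref{lem:backwardrecurrence} and Remark~\ref{rem:backwardrecurrence}, an integer pair $(\bldy,\bldz)$ for which (C1), (C3), (C4), (C5) follow from the dictionary above, and (C2) from the relation displayed below. For ``if'', an integer pair $(\bldy,\bldz)$ satisfying (C1)--(C5) defines through~(\ref{eq:prefixdist}) an integer pair $(\bldeta,\bldomega)$ that is nonnegative by (C1), satisfies $\Kraft^+ = \Kraft^- = 0$ by (C4) and Corollary~\ref{cor:backwardrecurrence}, and violates~(\ref{eq:conditionb}) exactly on $\ZZ$ by (C5); this $(\bldeta,\bldomega)$ witnesses the admissibility of~$\ZZ$.

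The one place demanding care---and the step I would treat as the main obstacle---is pinning down $\length(\bldeta,\bldomega) = r$ \emph{exactly}, rather than merely $\length \le r$. Since the $(\bldy,\bldz)$ here may take negative values (unlike the prefix-counting sequences of Lemma~\ref{lem:prefixdist}), the relation $\length = \length^* + 1$ used there is no longer automatic, as the top-index contributions could cancel. The remedy is the identity
\[
\eta_r + \omega_r = (n_0+n_1)(y_{r-1} + z_{r-1}) \; ,
\]
obtained from~(\ref{eq:prefixdist}) at $\ell = r$ using (C3): it shows that $\length = r$ is equivalent to $y_{r-1} + z_{r-1} > 0$, which is precisely condition~(C2). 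Once this equivalence is secured, the remaining verifications are routine.
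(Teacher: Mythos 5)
Your proposal is correct and takes essentially the same route as the paper's own proof: both translate the problem through the correspondence given by~(\ref{eq:prefixdist}) and the backward recurrences~(\ref{eq:backwardrecurrence+})--(\ref{eq:backwardrecurrence-}), matching (C1) with nonnegativity of $(\bldeta,\bldomega)$, (C4) with $\Kraft^+ = \Kraft^- = 0$ via Corollary~\ref{cor:backwardrecurrence}, (C5) with the violation pattern of~(\ref{eq:conditionb}) through the identities $\Kraft^\pm_\ell = y_\ell \pm z_\ell$, and (C2)--(C3) with $\length(\bldeta,\bldomega) = r$. Your explicit verification that $\length = r$ holds exactly (via $\eta_r + \omega_r = (n_0 + n_1)(y_{r-1} + z_{r-1})$) spells out a detail that the paper asserts without elaboration.
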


\begin{proof}
We use~(\ref{eq:prefixdist})
and (\ref{eq:backwardrecurrence+})--(\ref{eq:backwardrecurrence-})
to define a one-to-one correspondence between
integer pairs $(\bldeta,\bldomega)$ and $(\bldy,\bldz)$,
both with finite support.
Condition~(C1) is equivalent to requiring that~$\bldeta$ and~$\bldomega$
are nonnegative, and conditions (C2)--(C3) are equivalent to
having $\length(\bldeta,\bldomega) = r$
(and, when $n_0 = n_1$, also $\eta_r = \omega_r$).
Conditioning on (C1)--(C3),
we get by Corollary~\ref{cor:backwardrecurrence}
that condition~(C4) is equivalent to conditions~(i) and~(ii')
in (the proof of) Lemma~\ref{lem:lengthdist}
being satisfied by $(\bldeta,\bldomega)$; these conditions, in turn,
are equivalent to requiring $\Kraft^+ = \Kraft^- = 0$.

Finally, conditioning on (C1)--(C4)
(and, in particular, on on~(i) and~(ii')), we get from 
(\ref{eq:backwardrecurrence+})--(\ref{eq:backwardrecurrence-})
that~(\ref{eq:lengthdist}) (and, therefore, (\ref{eq:conditionb}))
can be rewritten as
\[
y_\ell + z_\ell \ge \left| y_\ell - z_\ell \right| \; ,
\]
which, in turn, holds \ifandonlyif\ $y_\ell$ and $z_\ell$
are nonnegative. Hence, condition~(C5) is equivalent
to~(\ref{eq:conditionb}) being violated by $(\bldeta,\bldomega)$
when (and only when) $\ell \in \ZZ$.
\end{proof}

We now use Lemma~\ref{lem:admissible} to identify
the admissible subsets for any given $(n_0,n_1,r)$.
In particular, we show that when $\max \{ n_0, n_1 \} > 1$,
every subset $\ZZ \subseteq \{ 1, 2, \ldots, r-1 \}$ is admissible.
We distinguish between three cases.

\emph{Case~1: $n_0 \ge n_1$ and $n_0 > 1$.}
We show that any subset $\ZZ \subseteq \{ 1, 2, \ldots, r-1 \}$
is admissible for $(n_0,n_1,r)$, for any $r > 1$.
Indeed, given any such subset~$\ZZ$,
define the pair $(\bldy,\bldz)$ by:
\[
y_\ell = 
\left\{
\begin{array}{cl}
1   & \textrm{if $\ell = 0$}       \\
n_0 & \textrm{if $1 \le \ell < r$} \\
0   & \textrm{if $\ell \ge r$}
\end{array}
\right.
\]
and
\[
z_\ell =
\left\{
\begin{array}{cl}
-1 & \textrm{if $\ell \in \ZZ$} \\
0  & \textrm{otherwise}
\end{array}
\right.
\; .
\]
It can be readily checked that the pair $(\bldy,\bldz)$
satisfies conditions (C1)--(C5) (where, for (C1)--(C2),
we use the assumption that $n_0 > 1$).

\emph{Case~2: $n_1 \ge n_0$ and $n_1 > 1$.}
Here, too, any subset~$\ZZ \subseteq \{ 1, 2, \ldots, r-1 \}$
is admissible, for any $r > 1$:
the proof is similar to Case~2, except that
the pair $(\bldy,\bldz)$ is now defined by
\[
y_\ell =
\left\{
\begin{array}{cl}
1   & \textrm{if $\ell = 0$}                          \\
n_1 & \textrm{if $\ell$ is even and $1 \le \ell < r$} \\
-1  & \textrm{if $\ell$ is odd and $\ell \in \ZZ$}    \\
0   & \textrm{otherwise}
\end{array}
\right.
\]
and
\[
z_\ell = 
\left\{
\begin{array}{cl}
n_1 & \textrm{if $\ell$ is odd and $1 \le \ell < r$} \\
-1  & \textrm{if $\ell$ is even and $\ell \in \ZZ$}  \\
0   & \textrm{otherwise}
\end{array}
\right.
\; .
\]

\emph{Case~3: $n_0 = n_1 = 1$.}
In this case, there are subsets of $\{ 1, 2, \ldots, r-1 \}$
which are not admissible. For example, it can be verified that
the inequality~(\ref{eq:conditionb}) for $\ell = 1$ is implied
by $\Kraft^+ = \Kraft^- = 0$.

We next characterize the admissible subsets for $(1,1,r)$.
Given a subset $\ZZ \subseteq \{ 1, 2, \ldots, r-1 \}$,
define the integer sequence
$\bldxi = \bldxi(\ZZ,r) = (\xi_1 \; \xi_2 \ldots \xi_r)$
inductively as follows:
\begin{equation}
\label{eq:xi}
\xi_\ell = 
\left\{
\begin{array}{cl}
1                & \textrm{if $\ell = 1$} \\
\xi_{\ell-1} - 1 & \textrm{if $\ell-1 \in \ZZ$} \\
2\xi_{\ell-1}    & \textrm{otherwise}
\end{array}
\right.
\; .
\end{equation}

We have the following lemma.

\begin{lemma}
\label{lem:admissible=1}
A subset $\ZZ \subseteq \{ 1, 2, \ldots, r-1 \}$
is admissible for $(1,1,r)$,
\ifandonlyif\ the sequence $\bldxi$
as defined in~(\ref{eq:xi}) is all-positive.
\end{lemma}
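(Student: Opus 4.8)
The plan is to reduce everything, via Lemma~\ref{lem:admissible}, to the combinatorics of a single scalar sequence. For $n_0 = n_1 = 1$, condition~(C1) reads $y_\ell \le y_{\ell-1}+z_{\ell-1}$ and $z_\ell \le y_{\ell-1}+z_{\ell-1}$, so the natural quantity to track is the running sum $s_\ell := y_\ell + z_\ell$, which starts at $s_0 = 1$ by~(C4). My first claim is that $\bldxi$ is exactly the sequence of \emph{greedily maximal} sums, under the identification $\xi_\ell = s_{\ell-1}$: at a step $\ell \notin \ZZ$ both coordinates are nonnegative (by~(C5)), so~(C1) gives $s_\ell \le 2 s_{\ell-1}$ with equality attainable by $y_\ell = z_\ell = s_{\ell-1}$, whereas at a step $\ell \in \ZZ$ one coordinate is $\le -1$, so $s_\ell \le s_{\ell-1}-1$ with equality attainable by $y_\ell = s_{\ell-1}$, $z_\ell = -1$. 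Together with $s_0 = 1$, these two cases reproduce precisely the recurrence~(\ref{eq:xi}). Verifying this index-shifted identification $\xi_\ell = s_{\ell-1}$ cleanly is, I expect, the one place that needs care.

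Next I would isolate two monotonicity facts that hold for \emph{every} integer pair satisfying (C1), (C4), (C5). First, a domination bound $s_\ell \le \xi_{\ell+1}$ for all $\ell \ge 0$, proved by induction on $\ell$ using the very same two case inequalities ($s_\ell \le 2s_{\ell-1} \le 2\xi_\ell = \xi_{\ell+1}$ when $\ell \notin \ZZ$, and $s_\ell \le s_{\ell-1}-1 \le \xi_\ell - 1 = \xi_{\ell+1}$ when $\ell \in \ZZ$), with base case $s_0 = 1 = \xi_1$. Second, a ``no-recovery'' property: if $s_{\ell-1} \le 0$, then~(C1) forces $y_\ell \le s_{\ell-1} \le 0$ and $z_\ell \le s_{\ell-1} \le 0$, whence $s_\ell \le 2 s_{\ell-1} \le s_{\ell-1} \le 0$; so once the sum is nonpositive it stays nonpositive. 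Note this step is insensitive to whether $\ell \in \ZZ$, so~(C5) never interferes with it.

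For the ``if'' direction I would exhibit the greedy pair directly: take $(y_0,z_0) = (1,0)$; for $1 \le \ell < r$ set $y_\ell = z_\ell = s_{\ell-1}$ if $\ell \notin \ZZ$ and $y_\ell = s_{\ell-1}$, $z_\ell = -1$ if $\ell \in \ZZ$; and $y_\ell = z_\ell = 0$ for $\ell \ge r$. Since $s_{\ell-1} = \xi_\ell > 0$ throughout, every entry ``$s_{\ell-1}$'' is nonnegative, so (C1) and the sign pattern in~(C5) hold exactly on $\ZZ$; (C3)--(C4) are immediate, and (C2) follows from $s_{r-1} = \xi_r > 0$. For the ``only if'' direction, suppose $\ZZ$ is admissible with witness $(\bldy,\bldz)$ and that some $\xi_{\ell_0} \le 0$ with $\ell_0 \in \{1,\dots,r\}$. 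The domination bound gives $s_{\ell_0-1} \le \xi_{\ell_0} \le 0$, and then no-recovery propagates $s_{r-1} \le 0$, contradicting~(C2); hence every $\xi_\ell$ is positive. (As a sanity check, $\ZZ = \{1\}$ yields $\xi_2 = \xi_1 - 1 = 0$, so it is inadmissible, matching the earlier observation that the $\ell = 1$ instance of~(\ref{eq:conditionb}) is implied by $\Kraft^+ = \Kraft^- = 0$.) The main obstacle is simply pinning down the correspondence $\xi_\ell = s_{\ell-1}$ and stating the two monotonicity facts in a way that applies uniformly to all admissible witnesses, not just the greedy one.
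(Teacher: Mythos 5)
Your proposal is correct and takes essentially the same route as the paper: the ``if'' direction exhibits the identical greedy witness ($y_\ell = \xi_\ell$ for $1 \le \ell < r$, with $z_\ell = -1$ on $\ZZ$ and $z_\ell = \xi_\ell$ off it), and your ``only if'' induction is the paper's in slightly different bookkeeping --- you bound the sum $s_\ell = y_\ell + z_\ell$ by $\xi_{\ell+1}$ where the paper bounds $\max\{y_\ell,z_\ell\}$ by $\xi_\ell$, but both arguments pass through the same inequality $y_{\ell-1}+z_{\ell-1} \le \xi_\ell$ via the same case split on whether the index lies in $\ZZ$. Your ``no-recovery'' step (once the sum is nonpositive it stays so) plays exactly the role of the paper's closing observation that $\xi_{r-1}>0$ is possible only if $\xi_\ell>0$ for all $\ell<r$, just phrased contrapositively on $s_\ell$ instead of on $\bldxi$.
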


\begin{proof}
Starting with the ``only if'' part,
suppose that there exists an integer pair $(\bldy,\bldz)$
that satisfies conditions (C1)--(C5). Condition~(C1) can be
rewritten as
\begin{equation}
\label{eq:admissible=1}
\max \{ y_\ell, z_\ell \} \le y_{\ell-1} + z_{\ell-1}
\end{equation}
which, with conditions~(C4)--(C5), implies (by induction on~$\ell$)
that $\max \{ y_\ell, z_\ell \} \le \xi_\ell$ for every
$\ell \in \{ 1, 2, \ldots, r-1 \}$.
In particular, for $\ell = r-1$ we have:
\begin{eqnarray*}
\xi_{r-1} & \ge & \max \{ y_{r-1}, z_{r-1} \} \\
& = &
y_{r-1} + z_{r-1} - \min \{ y_{r-1}, z_{r-1} \} \\
& \stackrel{\mathrm{(C2)}}{\ge} &
1 - \min \{ y_{r-1}, z_{r-1} \} \\
& \stackrel{\mathrm{(C5)}}{\ge} &
\left\{
\begin{array}{cl}
2 & \textrm{if $r-1 \in \ZZ$}\\
1 & \textrm{otherwise}
\end{array}
\right.
\; ,
\end{eqnarray*}
which, by~(\ref{eq:xi}), implies that $\xi_r > 0$.
Moreover, by~(\ref{eq:xi}), the inequality $\xi_{r-1} > 0$
is possible only if $\xi_\ell > 0$ for every $\ell < r$.

Turning to the ``if'' part,
given a sequence~$\bldxi$
as in~(\ref{eq:xi}) that is all-positive,
we define the pair $(\bldy,\bldz)$ as follows:
\[
y_\ell =
\left\{
\begin{array}{cl}
1        & \textrm{if $\ell = 0$}       \\
\xi_\ell & \textrm{if $1 \le \ell < r$} \\
0        & \textrm{otherwise}
\end{array}
\right.
\]
and
\[
z_\ell = 
\left\{
\begin{array}{cl}
-1       & \textrm{if $\ell \in \ZZ$} \\
\xi_\ell & \textrm{if $\ell \in \{ 1, 2, \ldots,r-1\} \setminus \ZZ$} \\
0        & \textrm{otherwise}
\end{array}
\right.
\; .
\]
Obviously, the pair $(\bldy,\bldz)$ satisfies conditions (C3)--(C5).
As for condition~(C2), we have
\begin{eqnarray*}
y_{r-1} + z_{r-1} & = &
\left\{
\begin{array}{cl}
\xi_{r-1} - 1 & \textrm{if $r-1 \in \ZZ$}\\
2\xi_{r-1}    & \textrm{otherwise}
\end{array}
\right.
\\
& = &
\xi_r > 0 \; .
\end{eqnarray*}
Turning finally to condition~(C1),
the inequality~(\ref{eq:admissible=1}) holds (trivially)
with equality when $\ell = 1$ or when $\ell > r$,
and is implied by condition~(C2) when $\ell = r$.
For the remaining range $\ell \in \{ 2, 3, \ldots, r-1 \}$
we also have equality in~(\ref{eq:admissible=1}), since:
\begin{eqnarray*}
\max \{ y_\ell, z_\ell \} & = & y_\ell
= \xi_\ell
\stackrel{(\ref{eq:xi})}{=}
\left\{
\begin{array}{cl}
\xi_{\ell-1} - 1 & \textrm{if $\ell-1 \in \ZZ$}\\
2\xi_{\ell-1}    & \textrm{otherwise}
\end{array}
\right.
\\
& = &
y_{\ell-1} + z_{\ell-1} \; .
\end{eqnarray*}
\end{proof}
\fi

\end{document}